\documentclass{article}
\usepackage{graphicx, xcolor, amsmath, amssymb, bm, appendix, amsthm, proof, enumerate}
\usepackage{hyperref}
\usepackage{algorithmic}
\usepackage{algorithm}
\usepackage{color}

\theoremstyle{plain}
\newtheorem{thm}{Theorem}
\newtheorem{cor}{Corollary}
\newtheorem{lem}{Lemma}
\newtheorem{dfn}{Definition}

\title{Spatio-temporal coarse-grained Hawkes processes}
\author{Tomoya Uda\thanks{The Graduate University for Advanced Studies, SOKENDAI}, Shinsuke Koyama\thanks{The Institute of Statistical Mathematics}}
\date{\today}

\begin{document}

\maketitle
\begin{abstract}
Spatio-temporal Hawkes processes are widely used to model self-exciting event data, but conventional inference methods require access to the occurrence time and location of every event. In many applications, however, observations are available only as counts aggregated over temporal intervals and spatial regions. We propose a spatio-temporal coarse-grained Hawkes process, a discrete-time count model that approximates aggregated observations generated by an underlying spatio-temporal Hawkes process. The proposed model constructs effective excitation kernels by averaging the temporal and spatial triggering effects over spatio-temporal bins while explicitly incorporating excitation occurring within the same temporal bin. This formulation enables direct modeling of aggregated count data without introducing latent event times or locations. 
We characterize the first- and second-order properties of the proposed process and derive asymptotic approximation errors for the stationary mean and autocovariance relative to those of the corresponding aggregated Hawkes process in the joint limit of fine temporal and spatial discretization.
The analysis shows that the proposed model provides a higher-order approximation than the conventional binned Poisson approximation by accounting for intra-bin excitation. Numerical experiments confirm the theoretical results and demonstrate accurate approximation over a broad range of aggregation scales. We further develop a moment-based estimation procedure and apply the proposed framework to earthquake occurrence data through an aggregated Epidemic-Type Aftershock Sequence (ETAS) model.
The results indicate that the proposed approach provides a computationally efficient alternative for inference and prediction from spatio-temporally aggregated event data.
\end{abstract}

\section{Introduction}
Many phenomena in nature and society exhibit dependence across both space and time. In a wide range of applications, the occurrence of an event increases the likelihood of subsequent events in its temporal and spatial neighborhood. The Hawkes process \cite{hawkes1971spectra,hawkes1971point} provides a natural point-process framework for describing such self-exciting dynamics. Spatio-temporal Hawkes processes have been extensively used to characterize event patterns in seismology \cite{adamopoulos1976cluster,ogata1988statistical,ogata1998space}, criminology \cite{mohler2011self,zhuang2019semiparametric}, and traffic incident analysis \cite{kalair2021non}, while temporal Hawkes processes have found applications in neurophysiology \cite{chornoboy1988maximum,pernice2011structure}, finance \cite{bacry2015hawkes,hawkes2018hawkes,hewlett2006clustering}, and social media analytics \cite{fox2016modeling,kobayashi2016tideh,koyama2020statistical}.

A fundamental requirement of the Hawkes process is access to the occurrence time and location of each event. In many applications, however, such fine-grained information is unavailable or impractical to process. Instead, observations are often provided as counts aggregated over temporal intervals and spatial regions: for example, the number of earthquakes recorded in a geographic area during a given period. Such aggregation induces a loss of information about the precise configurations of events, including the temporal and spatial relationships that generate self-excitation. Developing statistical models that preserve the essential dependence structure of the underlying Hawkes process while operating directly on aggregated counts is therefore an important methodological problem.

For temporal aggregation, the process obtained by counting events from a continuous-time Hawkes process within fixed time intervals is referred to as a binned Hawkes process. Since the binned process is generated by first realizing the continuous-time events and then aggregating them, it is not itself a standard discrete-time Hawkes process, and direct likelihood-based inference is generally unavailable. Existing approaches can broadly be divided into two classes. The first treats the unobserved event configurations within each bin as latent variables. Examples include Monte Carlo expectation-maximization (MC-EM) methods \cite{shlomovich2022parameter,shlomovich2022parameter2}, Pseudo-Marginal Metropolis-Hastings (PMMH) \cite{chen2025}, and Bayesian inference methods \cite{zhou2025bayesian}. These approaches can retain detailed information about the underlying Hawkes dynamics, but repeated sampling or imputation of latent events can be computationally expensive, particularly for large data sets.

The second class constructs a discrete-time model directly for the observed counts. Examples include the binned Poisson approximation and integer-valued autoregressive (INAR) models \cite{shlomovich2022parameter,kirchner2016hawkes}. Such approaches offer substantial computational advantages because their likelihoods or objective functions have relatively simple forms. However, standard discretization approximations can fail to capture intra-bin excitation, whereby an event occurring within a bin triggers additional events during the same bin. This effect can be substantial when the aggregation interval is not sufficiently short. The temporal coarse-grained Hawkes process proposed in \cite{koyama2025coarse} addresses this limitation by constructing an effective discrete-time excitation kernel that incorporates intra-bin excitation. In particular, it was shown to provide a higher-order approximation to the binned Hawkes process in terms of its spectral density than the binned Poisson approximation.

The temporal aggregation problem also has a direct computational interpretation. For a continuous-time Hawkes process with a general excitation kernel, naive likelihood evaluation can require $O(N^2)$ operations, where $N$ is the number of observed events. Aggregating events into bins can substantially reduce this computational burden, and discretized inference methods have therefore been developed even in settings where the exact event times are available \cite{staerman2023fadin,siviero2024flexible}. Similar considerations become more pronounced for spatio-temporal processes, for which the number of events can be large. Spatial aggregation can consequently provide an additional reduction in data complexity while yielding observations that are naturally represented as counts over spatial regions.

Despite these advantages, extending coarse-grained modeling from temporal to spatio-temporal aggregation is nontrivial. Temporal and spatial aggregation jointly obscure the event configurations that determine excitation, and the resulting approximation must account for excitation both across and within the spatio-temporal bins. In particular, a model obtained by independently discretizing the temporal and spatial coordinates need not preserve the statistical structure of the aggregated process. A useful coarse-grained model should instead provide an effective excitation mechanism that represents the average influence of unobserved events over both dimensions while retaining the essential second-order dependence induced by self-excitation. 

In this paper, we develop a spatio-temporal coarse-grained Hawkes process as an extension of the temporal coarse-grained Hawkes process of \cite{koyama2025coarse}. The proposed model describes event counts aggregated over both temporal intervals and spatial regions and constructs an effective excitation kernel by averaging the underlying spatio-temporal excitation effects over the corresponding bins. A key feature is that the construction retains excitation between events occurring within the same temporal bin, thereby incorporating intra-bin effects that are absent from conventional discretization approaches. Because the exact locations of individual events may be unavailable, the spatial averaging can be performed with respect to an assumed distribution of event locations within each spatial region. This allows the model to exploit spatial excitation without requiring reconstruction of the latent event locations.

We establish the statistical properties of the proposed model and its relationship to the corresponding aggregated process generated by an underlying continuous-time spatio-temporal Hawkes process. In particular, we derive the stationary mean and autocovariance and characterize the orders of their approximation errors. These results provide a theoretical basis for evaluating the accuracy of the proposed coarse-graining procedure as a function of the temporal and spatial aggregation scales. We further compare the resulting approximation with the binned Poisson approximation, thereby quantifying the benefit of explicitly incorporating within-bin excitation. Numerical experiments examine the dependence of the approximation error on the temporal and spatial bin sizes. Finally, we apply the proposed framework to earthquake data and demonstrate parameter estimation from aggregated spatio-temporal event counts.

The contribution of this work is therefore twofold. First, we extend the coarse-grained Hawkes framework from temporal aggregation to the substantially more general setting of joint spatio-temporal aggregation. Second, we provide a theoretical characterization of the approximation induced by this coarse-graining and demonstrate its practical utility for inference from spatially and temporally aggregated event data. The proposed framework offers a computationally tractable alternative to latent-event inference while retaining a key feature of Hawkes dynamics: self-excitation within the aggregation scale.

The remainder of this paper is organized as follows. Section~\ref{sec.hawkes} reviews spatio-temporal Hawkes processes. Section~\ref{subsec.STCGHP} introduces the spatio-temporal coarse-grained Hawkes process and derives its statistical properties. The theoretical approximation results and numerical evaluations are presented in Sections~\ref{subsec.asymptotics} and \ref{subsec.numerical}, respectively. Section~\ref{sec.estimation} describes the parameter estimation procedure and its application to earthquake data.

\section{Spatio-temporal Hawkes processes}\label{sec.hawkes}
In this section, we briefly review spatio-temporal Hawkes processes and their statistical properties, and specify the regularity and stationarity conditions required for the analysis below.
Let $\{ (t_i,\bm{x}_i)\}_{i\in\mathbb{N}}$
be a sequence of random variables taking values in $\mathbb{R}_{+}\times A$, where $A\subset\mathbb{R}^q$ is a bounded spatial domain with finite volume $|A|<\infty$. Each pair $(t_i,\bm{x}_i)$ represents the occurrence time and location of an event. For $S\in\mathcal{B}(A)$, where $\mathcal{B}(A)$ denotes the Borel $\sigma$-algebra on $A$, define the counting process 
$N(t,S)=\sum_{i\in\mathbb{N}} \bm{1}_{t_i\leq t,\bm{x}_i\in S}$
and let
$\mathcal{H}_t=\{ (t_i,\bm{x}_i) \mid t_i<t\}$
denote the history of the process up to time $t$.

The conditional intensity function
$\lambda(t,\bm{x}|\mathcal{H}_t)$ specifies the stochastic dynamics of the process. Specifically, for any $S\in\mathcal{B}(A)$,
\begin{align}
    \mathrm{Pr}(N(t+\delta t,S)-N(t,S)=1|\mathcal{H}_t) &= \int_S\lambda(t,\bm{x}|\mathcal{H}_t)d\bm{x}\delta t+o(\delta t), \label{eq.rule1}\\
    \mathrm{Pr}(N(t+\delta t,S)-N(t,S)>1|\mathcal{H}_t)&= o(\delta t), \label{eq.rule2}
\end{align}
as $\delta t\to0$.
A spatio-temporal Hawkes process is defined by the conditional intensity
\begin{equation}
    \lambda(t,\bm{x}|\mathcal{H}_t) = \mu(\bm{x})+\sum_{t_i<t}\phi(t-t_i)f(\bm{x},\bm{x}_i),
     \label{eq.intensity}
\end{equation}
where $\mu(\bm{x})\geq0$ is the background intensity, while $\phi$ and $f$ are nonnegative temporal and spatial excitation kernels, respectively. 
The temporal kernel $\phi$ describes the dependence of excitation on the elapsed time since a past event, whereas $f(\bm{x},\bm{y})$ describes the spatial distribution of offspring events at location $\bm{x}$ generated by a parent event at location $\bm{y}$.

We assume that the background intensity $\mu(\bm{x})$ is bounded. 
We impose the normalization condition
\begin{equation}
\int_A f(\bm{x},\bm{y}) d\bm{x} = 1\quad \text{for all}\ \bm{y}\in A, \label{eq.normalization-1}
\end{equation}
Thus, for a parent event at $\bm{y}$, $f(\bm{x},\bm{y})$ represents the spatial density of its offspring. We additionally assume
\begin{equation}
 \int_A f(\bm{x},\bm{y}) d\bm{y} = 1\quad \text{for all}\ \bm{x}\in A. \label{eq.normalization-2}
\end{equation}
which is a mass-preservation condition used in the spatial approximation results below. The second condition follows from the first when $f$ is symmetric or when $f$ is shift-invariant under periodic boundary conditions.
For the spatial approximation analysis, we further assume that $f$ is $L_y$-Lipschitz continuous with respect to its second argument, namely,
\begin{equation}
 |f(\bm{x},\bm{y}_1)-f(\bm{x},\bm{y}_2)|\leq L_y\| \bm{y}_1-\bm{y}_2\|_2,
\label{eq.lipschitz-f}
\end{equation}
for all $\bm{x},\bm{y}_1,\bm{y}_2\in A$ and some constant $L_y>0$. 

For the temporal excitation kernel, we assume that $\phi(t)$ is nonnegative, vanishes for $t<0$, is bounded, and belongs to $L^1(\mathbb{R}_{\geq 0})\cap W^{1,2}(\mathbb{R}_{\geq 0})$. In particular,
\begin{equation}
\| \phi\|_{1}=\int^{\infty}_{-\infty}\phi(t)dt=\int^{\infty}_0\phi(t)dt<\infty,\quad \| \phi\|_{\infty} = \sup_{t\in\mathbb{R}_{\geq 0}} \phi(t)<\infty,   
\end{equation}
\begin{equation}
    \|\phi'\|_{L^2(\mathbb{R}_{\geq 0})}=\left( \int_{\mathbb{R}_{\geq 0}}|\phi'(t)|^2dt\right)^{1/2}<\infty.
\end{equation}

Under the normalization condition in Eq.~\eqref{eq.normalization-1}, the branching ratio is given by
\begin{equation}
 \alpha = \int^{\infty}_0 \int_A \phi(t)f(\bm{x},\bm{y})d\bm{x}dt=\int^{\infty}_0 \phi(t)dt=\| \phi\|_{1}
\label{eq.branching-ratio}
\end{equation}
It represents the expected number of direct offspring events generated by a single event. We assume
\begin{equation}
\alpha<1,
\label{eq.stationarity-condition}
\end{equation}
which is the subcriticality condition ensuring the existence of a stationary Hawkes process under the present assumptions. Throughout Sections~\ref{subsec.first-order} and~\ref{subsec.second-order}, we impose this condition.

Under stationarity, the mean intensity
$\bar{\lambda}(\bm{x})=\mathbb{E}[\lambda(t,\bm{x})]$
is given by
\begin{equation}
\  \bar{\lambda}(\bm{x}) = \sum^{\infty}_{n=0}\alpha^n(f^{(*n)}*\mu)(\bm{x}),
\label{eq.stationary-intensity}
\end{equation}
where $f^{(*n)}$ denotes the $n$-fold spatial convolution of $f$. Specifically,
\begin{equation}
f^{(*0)}(\bm{x},\bm{y})=\delta(\bm{x}-\bm{y}),
\end{equation}
and, for $n\geq1$,
\begin{equation}
 f^{(*n)}(\bm{x},\bm{y}) = \int_A f^{*(n-1)}(\bm{x},\bm{z})f(\bm{z},\bm{y})d\bm{z}.
\label{eq.spatial-convolution}
\end{equation}

Finally, the conditional intensity yields a martingale characterization of the counting process. Define
\begin{equation}
M_t(S) = N(t,S)-\int_0^t\int_S \lambda(s,\bm{x})d\bm{x}ds
\label{eq.martingale}
\end{equation}
Then, for every $S\in\mathcal{B}(A)$, $M_t(S)$ is a martingale with respect to the filtration generated by $\mathcal{H}_t$. In particular, for $s<t$,
\begin{equation}
\mathbb{E}[M_t(S)-M_s(S)|\mathcal{H}_s] = 0,
\label{eq.martingale-1st}
\end{equation}
and
\begin{equation}
\mathbb{E}[(M_t(S)-M_s(S))(M_t(S')-M_s(S'))|\mathcal{H}_s]=\mathbb{E}\left[ \left.\int_s^t\int_{S\cap S'} \lambda(u,\bm{x})d\bm{x}du\right|\mathcal{H}_s\right] \label{eq.martingale-2nd}
\end{equation}
for $S,S'\in\mathcal{B}(A)$; see, for example, \cite{daley2008introduction2}. When $S=S'$, Eq.~\eqref{eq.martingale-2nd} gives the conditional variance of the martingale increment. When $S\cap S'=\emptyset$, it shows that the martingale increments associated with $S$ and $S'$ are conditionally orthogonal.

\section{Spatio-temporal coarse-grained Hawkes processes}\label{sec.STCGHP}

\subsection{Definition}\label{subsec.STCGHP}
We consider a setting in which only event counts aggregated over temporal intervals and spatial regions are observed. Let $\bm{X}_1,...,\bm{X}_n$ denote the observed bin-count vectors, where
\[
    X_{i,k} = N(k\Delta t,S_i)-N((k-1)\Delta t,S_i),
\]
and $X_{i,k}$ is the number of events observed in the spatial region $S_i$ during the $k$-th temporal bin. Here, $\{S_i\}_{i=1}^d$ is a partition of the spatial domain $A$ satisfying $S_i\in \mathcal{B}(A),\cup_{i=1}^d S_i=A$ and $S_i\cap S_j=\emptyset$ for $i\neq j$. 
Our objective is to construct a statistical model for these aggregated counts that retains the essential statistical properties of the underlying binned spatio-temporal Hawkes process. Conventional approaches are often based on direct temporal and spatial discretization together with Poisson-distributed counts; see Appendix~\ref{subset.binnedpoisson}. Here, instead, we extend the coarse-graining procedure introduced in \cite{koyama2025coarse} to the spatio-temporal setting, thereby incorporating excitation effects occurring within the same temporal bin.

Consider the integrated conditional intensity over the $n$-th temporal bin and the spatial region $S_i$. Its conditional expectation, given the observed bin counts, can be written as
\begin{align}
\lambda_{i,n} &= \mathbb{E}\left[ \left.\int^{n\Delta t}_{(n-1)\Delta t}\int_{S_i} \left( \mu(\bm{x})+\sum_{t_i<t}\phi(t-t_i)f(\bm{x},\bm{x}_i) \right) d\bm{x}dt \;\right|\;\bm{X}_1,...,\bm{X}_n \right] \nonumber\\
&= \Delta t\int_{S_i}\mu(\bm{x})d\bm{x} \nonumber \\
&\quad + \sum^{n-1}_{k=1}\sum^d_{l=1} \mathbb{E}\left[ \left.\int^{n\Delta t}_{(n-1)\Delta t}\int_{S_i} \sum_{\substack{t_i\in[(k-1)\Delta t,k\Delta t)\\\bm{x}_i\in S_l}}\phi(t-t_i)f(\bm{x},\bm{x}_i) d\bm{x}dt \;\right|\;X_{l,k} \right] \nonumber \\
&\quad +\sum^d_{l=1} \mathbb{E}\left[ \left.\int^{n\Delta t}_{(n-1)\Delta t}\int_{S_i} \sum_{\substack{(n-1)\Delta t\leq t_i<t\\\bm{x}_i\in S_l}}\phi(t-t_i)f(\bm{x},\bm{x}_i) d\bm{x}dt \;\right|\;X_{l,n} \right]. \label{eq.lambdain}
\end{align}
The conditional expectations in Eq.~\eqref{eq.lambdain} generally do not admit closed-form expressions. We therefore approximate them by assuming that, conditional on the number of events in each bin, event times are independently and uniformly distributed within the corresponding temporal bin, while event locations are independently distributed according to a prescribed density $p_l(\bm{x})$ supported on $S_l$. Specifically,
\begin{equation}
p_l(\bm{x})\geq0,
\qquad
\int_{S_l}p_l(\bm{x})d\bm{x}=1.
\label{eq.normalization-spatial-density}
\end{equation}
Under this assumption, the contribution from events occurring in the $k$-th temporal bin, $k<n$, is approximated by
\begin{align*}
    &\mathbb{E}\left[ \left.\int^{n\Delta t}_{(n-1)\Delta t}\int_{S_i} \sum_{\substack{t_i\in[(k-1)\Delta t,k\Delta t)\\\bm{x}_i\in S_l}}\phi(t-t_i)f(\bm{x},\bm{x}_i) d\bm{x}dt \;\right|\;X_{l,k} \right]\\
    &\quad \approx \dfrac{X_{l,k}}{\Delta t} \int^{n\Delta t}_{(n-1)\Delta t} \int_{S_i}\int^{k\Delta t}_{(k-1)\Delta t} \int_{S_l} \phi(t-s)f(\bm{x},\bm{y})p_l(\bm{y})d\bm{y}dsd\bm{x}dt\\
    &\quad =\phi_{n-k}f_{il} X_{l,k},
\end{align*}
where $\phi_{n-k}$ is the temporal coarse-grained excitation kernel defined by
\begin{equation}
    \phi_{n-k} = \dfrac{1}{\Delta t}\int^{n\Delta t}_{(n-1)\Delta t}\int^{k\Delta t}_{(k-1)\Delta t}\phi(t-s)dsdt,\quad k=1,2,...,n-1,
    \label{eq.temporal-cgkernel-1}
\end{equation}
and $f_{il}$ is the spatial coarse-grained excitation kernel defined by
\begin{equation}
    f_{il} = \int_{S_i}\int_{S_l} f(\bm{x},\bm{y})p_l(\bm{y})d\bm{y}d\bm{x}.
    \label{eq.spatial-cgkernel}
\end{equation}
The contribution from events occurring within the current temporal bin is treated similarly. We obtain
\begin{align*}
    &\mathbb{E}\left[ \left.\int^{n\Delta t}_{(n-1)\Delta t}\int_{S_i} \sum_{\substack{(n-1)\Delta t\leq t_i<t\\\bm{x}_i\in S_l}}\phi(t-t_i)f(\bm{x},\bm{x}_i) d\bm{x}dt \;\right|\;X_{l,n} \right]\\
    &\quad \approx \dfrac{X_{l,n}}{\Delta t} \int^{n\Delta t}_{(n-1)\Delta t} \int_{S_i}\int^{t}_{(n-1)\Delta t} \int_{S_l} \phi(t-s)f(\bm{x},\bm{y})p_l(\bm{y})d\bm{y}dsd\bm{x}dt\\
    &\quad =\phi_{0}f_{il} X_{l,n},
\end{align*}
where $\phi_0$ represents the temporal intra-bin excitation effect and is defined by
\begin{equation}
    \phi_0 = \dfrac{1}{\Delta t}\int^{n\Delta t}_{(n-1)\Delta t}\int^t_{(n-1)\Delta t} \phi(t-s)dsdt.
    \label{eq.temporal-cgkernel-0}
\end{equation}
Combining these expressions gives the following approximation to the integrated conditional intensity:
\begin{equation}
    \bm{\lambda}_n \approx \bm{\mu}\Delta t+\sum_{k\leq n}\phi_{n-k}F\bm{X}_k,
    \label{eq.statiotemporal-lambda-0}
\end{equation}
where $\bm{\mu}=(\mu_1,\ldots,\mu_d)^T$ is the vector of background intensities integrated over the spatial regions,
\[
\mu_i = \int_{S_i}\mu(\bm{x})d\bm{x}, 
\]
and $F$ is the matrix of spatial excitation effects whose $(i,j)$-th element is $f_{ij}$.

A key feature of Eq.~\eqref{eq.statiotemporal-lambda-0} is the presence of the term
$\phi_0F\bm{X}_n$, which represents excitation between events occurring within the same temporal bin. This intra-bin excitation is absent from conventional discretization approaches such as the binned Poisson approximation; see Appendix~\ref{subset.binnedpoisson}.

Because $\bm{\lambda}_n$ contains the current count vector $\bm{X}_n$ through the intra-bin excitation term, it is not measurable with respect to the history up to the $(n-1)$-th bin. We therefore characterize the statistical properties of the bin-count process through the conditional moment properties of its residual process, in analogy with the martingale characterization of the continuous-time spatio-temporal Hawkes process in Eqs.~\eqref{eq.martingale-1st}--\eqref{eq.martingale-2nd}.

\begin{dfn} \label{dfn.STCGHP}
A \emph{spatio-temporal coarse-grained Hawkes process} is a non-negative integer-valued stochastic process $\{\bm{X}_n\}_n$ characterized by the residual process
\begin{equation}
    \Delta \bm{M}_n=\bm{X}_n-\bm{\lambda}_n,
 \end{equation}
and the following conditional moment properties:
\begin{align}
        \mathbb{E}[\Delta \bm{M}_n|H_{n-1}] &= \bm{0} \label{eq.spatiotemporal-first},\\
        \mathbb{E}[\Delta \bm{M}_n\Delta \bm{M}_n^T|H_{n-1}] &= \mathrm{diag}(\mathbb{E}[\bm{\lambda}_n|H_{n-1}]), \label{eq.spatiotemporal-second}
\end{align}
where $H_{n-1}=\sigma( \bm{X}_k:k\leq n-1)$ denotes the history up to the $(n-1)$-th bin, and
\begin{equation}
        \bm{\lambda}_n=\bm{\mu}\Delta t+\sum_{k\leq n}\phi_{n-k}F\bm{X}_k. \label{eq.spatiotemporal-lambda}
 \end{equation}
\end{dfn}

Definition~\ref{dfn.STCGHP} specifies only the first- and second-order conditional moment properties of the process and does not determine the complete probability distribution of ${\bm{X}_n}$. Thus, it should be interpreted as a second-order specification of a class of stochastic processes. Parameter estimation can be performed using the moment-based method constructed from this definition, as described in Section~\ref{subsec.lossfunc}, without specifying a complete conditional distribution. In contrast, likelihood-based inference and direct simulation of the process require an additional specification of its conditional distribution.

As shown in \cite{koyama2025coarse}, the sum of the temporal coarse-grained kernel defined by Eqs.~\eqref{eq.temporal-cgkernel-1} and \eqref{eq.temporal-cgkernel-0} coincides with the integral of the original temporal excitation kernel:
\begin{equation}
\sum_{k=0}^{\infty}\phi_k = \int_0^{\infty}\phi(t)dt = \alpha,
\label{eq.cg-branching-ratio}
\end{equation}
where $\alpha$ is the branching ratio of the underlying spatio-temporal Hawkes process. 
Similarly, it follows directly from the normalization condition in Eq.~\eqref{eq.normalization-1} and the definition in Eq.~\eqref{eq.spatial-cgkernel} that the spatial coarse-grained kernel satisfies
\begin{equation}
\sum_{i=1}^{d}f_{il} = 1,
\label{eq.spatial-cg-normalization}
\end{equation}
for each $l=1,\ldots,d$.

The spatio-temporal coarse-grained Hawkes process can be viewed as a natural extension of the temporal coarse-grained Hawkes process introduced in \cite{koyama2025coarse}. In particular, many of its statistical properties follow by replacing the temporal excitation coefficient $\phi_{n-k}$ in the temporal model with the matrix-valued excitation kernel $\phi_{n-k}F$.
For example, the conditional expectation and covariance matrix of $\bm{X}_n$, given the history $H_{n-1}$, are respectively
\begin{equation}
\mathbb{E}[\bm{X}_n | H_{n-1}]  
= (I-\phi_0F)^{-1} \left( \bm{\mu}\Delta t + \sum_{k \le n-1}\phi_{n-k}F\bm{X}_k \right)
=: \bm{\lambda}^\ast_n,
\label{eq.conditional-mean}
\end{equation}
and
\begin{equation}
\operatorname{Var}(\bm{X}_n | H_{n-1})
= (I-\phi_0F)^{-1} \mathrm{diag}(\bm{\lambda}^\ast_n ){(I-\phi_0F^T)^{-1}}.
\label{eq.conditional-covariance}
\end{equation}
Here, $I$ denotes the $d\times d$ identity matrix.
Under the stationarity condition $\alpha<1$, the matrix $(I-\phi_0F)^{-1}$ exists. The conditional covariance matrix is generally non-diagonal, reflecting dependence among the spatially aggregated counts induced by intra-bin excitation. Moreover, its diagonal elements are generally larger than the corresponding conditional means, indicating overdispersion relative to conditionally Poisson counts. When the intra-bin excitation vanishes, $\phi_0=0$, Eq.~\eqref{eq.conditional-covariance} reduces to
$\operatorname{Var}(\bm{X}_n | H_{n-1}) = \operatorname{diag}(\bm{\lambda}^\ast_n )$, 
which corresponds to conditionally independent Poisson counts, as in the binned Poisson approximation described in Appendix~\ref{subset.binnedpoisson}.

We next derive the stationary mean and autocovariance of the spatio-temporal coarse-grained Hawkes process.
\begin{cor}
\label{cor:spatiotemporal-expectation}
Assuming stationarity and $\alpha<1$, the stationary expectation of $\bm{X}_n$ is
    \begin{equation}
        \mathbb{E}[\bm{X}_n] = (I-\alpha F)^{-1}\bm{\mu}\Delta t  .\label{eq.spatiotemporal-expectation}
    \end{equation}
\end{cor}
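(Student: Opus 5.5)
We take unconditional expectations in the defining relation of Definition~\ref{dfn.STCGHP}. By the tower property and Eq.~\eqref{eq.spatiotemporal-first}, $\mathbb{E}[\Delta\bm{M}_n]=\mathbb{E}[\mathbb{E}[\Delta\bm{M}_n|H_{n-1}]]=\bm{0}$, so $\mathbb{E}[\bm{X}_n]=\mathbb{E}[\bm{\lambda}_n]$. Writing $\bm{m}=\mathbb{E}[\bm{X}_n]$, which is independent of $n$ by stationarity, and substituting Eq.~\eqref{eq.spatiotemporal-lambda}, we get
\begin{equation*}
\bm{m}=\bm{\mu}\Delta t+\sum_{j\ge 0}\phi_j F\bm{m}=\bm{\mu}\Delta t+\Big(\sum_{j\ge0}\phi_j\Big)F\bm{m}.
\end{equation*}
The term $j=0$ is the intra-bin contribution $\phi_0F\bm{X}_n$; it is included because we are taking an unconditional expectation of the identity $\bm{X}_n=\bm{\lambda}_n+\Delta\bm{M}_n$, so no conditional mean needs to be computed. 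Interchanging expectation and the infinite sum is justified by monotone convergence, since all terms are nonnegative: $\phi_j\ge0$, $F$ has nonnegative entries, and $\bm{X}_k\ge0$. Note that the sum over $k\le n$ runs over the whole stationary past, i.e.\ $j=n-k\ge0$.

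Next we use Eq.~\eqref{eq.cg-branching-ratio}, $\sum_{j\ge0}\phi_j=\alpha$. This gives $(I-\alpha F)\bm{m}=\bm{\mu}\Delta t$.

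The remaining step is to show that $I-\alpha F$ is invertible. By Eq.~\eqref{eq.spatial-cg-normalization}, the columns of the nonnegative matrix $F$ sum to one, so $\|F\|_1=1$ in the induced $\ell^1$ (maximum column sum) norm. Hence the spectral radius satisfies $\rho(\alpha F)\le\alpha<1$. The Neumann series $\sum_{n\ge0}\alpha^nF^n$ therefore converges to $(I-\alpha F)^{-1}$, and $\bm{m}=(I-\alpha F)^{-1}\bm{\mu}\Delta t$, as claimed. This also shows the solution is unique and nonnegative, consistent with Eq.~\eqref{eq.stationary-intensity}.

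The only delicate point is the finiteness of $\bm{m}$. The linear equation is valid only if $\mathbb{E}[\bm{X}_n]<\infty$, which is implicit in the phrase ``assuming stationarity''. If it must be checked, we would iterate the inequality $\bm{m}\le\bm{\mu}\Delta t+\alpha F\bm{m}$ on a truncated process. Alternatively, we would note that any finite nonnegative solution must equal the Neumann series. We would simply state finiteness of the first moment as part of the stationarity assumption.
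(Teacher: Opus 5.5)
Your proof is correct and follows the paper's argument: you use $\mathbb{E}[\bm{X}_n]=\mathbb{E}[\bm{\lambda}_n]$, sum the coarse-grained kernel to $\alpha$, and invert $I-\alpha F$, which is justified because $F$ is nonnegative with unit column sums, so $\rho(\alpha F)<1$. Your extra justifications are minor refinements of the same route: monotone convergence for the interchange, the bound $\rho(F)\le\|F\|_1=1$ in place of the paper's exact $\rho(F)=1$, and treating finiteness of the mean as part of the stationarity assumption.
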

\begin{proof}
Taking expectations of Eq.~\eqref{eq.spatiotemporal-lambda} under stationarity gives
\[
        \mathbb{E}[\bm{\lambda}_n] = \bm{\mu}\Delta t+\sum_{k=0}^{\infty}\phi_k F \mathbb{E}[\bm{X}_n].
 \]
By Eq.~\eqref{eq.spatiotemporal-first},
$\mathbb{E}[\bm{X}_n]=\mathbb{E}[\bm{\lambda}_n]$, 
and hence
\[
        \left( I-\sum^{\infty}_{k=0}\phi_k F\right)\mathbb{E}[\bm{X}_n]=\bm{\mu}\Delta t.
\]
Using Eq.~\eqref{eq.cg-branching-ratio},
$\sum_{k=0}^{\infty}\phi_k=\alpha$,
we obtain
\[
(I-\alpha F)\mathbb{E}[\bm{X}_n] = \bm{\mu}\Delta t.
\]
Since $F$ is a non-negative matrix whose column sums are equal to one by Eq.~\eqref{eq.spatial-cg-normalization}, its spectral radius satisfies $\rho(F)=1$.
Consequently, $\rho(\alpha F)=\alpha<1$, and therefore $I-\alpha F$ is invertible. This yields Eq.~\eqref{eq.spatiotemporal-expectation}.
\end{proof}

When a stationary process satisfying Definition~\ref{dfn.STCGHP} exists, the condition $\alpha<1$ therefore ensures uniqueness of its non-negative stationary mean for every non-negative background-intensity vector $\bm{\mu}$.

\begin{cor}\label{cor:spatiotemporal-autocovariance}
Assuming stationarity and $\alpha<1$, the autocovariance matrix at lag $s\geq0$ is
\begin{equation}
        \mathrm{Cov}(\bm{X}_n,\bm{X}_{n+s}) = \Delta t\sum_{l=0}^{\infty}\Psi_{l}\mathrm{diag}((I-\alpha F)^{-1}\bm{\mu})\Psi_{l+s}^T,
\end{equation}
where
\[
        \Psi_k = \sum^{\infty}_{j=0}(\phi^{(*j)})_kF^j,
\]
and $(\phi^{(*j)})_k$ denotes the $k$-th element of the $j$-fold discrete convolution of $\{\phi_i\}_i$, with $(\phi^{(*0)})_k=\delta_{k,0}$.
\end{cor}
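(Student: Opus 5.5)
We rewrite the defining recursion as a causal linear filter of the martingale-difference residuals and then compute second moments. From Eq.~\eqref{eq.spatiotemporal-lambda} and $\Delta\bm{M}_n=\bm{X}_n-\bm{\lambda}_n$,
\[
\bm{X}_n=\bm{\mu}\Delta t+\sum_{k\le n}\phi_{n-k}F\bm{X}_k+\Delta\bm{M}_n .
\]
Subtracting the stationary mean $\bar{\bm{X}}=(I-\alpha F)^{-1}\bm{\mu}\Delta t$ from Corollary~\ref{cor:spatiotemporal-expectation}, and using $\sum_k\phi_k=\alpha$, the centered process $\tilde{\bm{X}}_n=\bm{X}_n-\bar{\bm{X}}$ satisfies
\[
\tilde{\bm{X}}_n=\sum_{k\ge0}\phi_kF\tilde{\bm{X}}_{n-k}+\Delta\bm{M}_n .
\]
This is a matrix-valued discrete Wiener--Hopf (renewal) equation with kernel $\Phi_k=\phi_kF$, which includes the lag-zero term $\phi_0F$.

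\textbf{Step 1: invert the equation.} We use generating functions in the lag variable, $\hat\Phi(z)=\hat\phi(z)F$ with $\hat\phi(z)=\sum_k\phi_kz^k$. For $|z|\le1$, $\rho(\hat\phi(z)F)\le\alpha\rho(F)=\alpha<1$, so $(I-\hat\phi(z)F)^{-1}=\sum_{j\ge0}\hat\phi(z)^jF^j$ converges absolutely. Its coefficient of $z^k$ is $\Psi_k=\sum_j(\phi^{(*j)})_kF^j$. The matrices $\Psi_k$ are nonnegative and summable, since $\sum_k\Psi_k=(I-\alpha F)^{-1}$. Hence the stationary solution is
\[
\tilde{\bm{X}}_n=\sum_{l\ge0}\Psi_l\,\Delta\bm{M}_{n-l},
\]
and it converges in $L^2$ because the residuals have bounded second moments. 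Writing the convolution identity $\Psi=\delta I+\Phi*\Psi$ shows this series solves the recursion. Uniqueness among stationary solutions follows because the homogeneous equation has only the zero stationary solution, since $I-\hat\phi(z)F$ is invertible on the unit circle.

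\textbf{Step 2: second moments of the residuals.} By Eq.~\eqref{eq.spatiotemporal-first}, $\Delta\bm{M}_n$ is a martingale difference with respect to $H_n$, because $\Delta\bm{M}_n$ is $H_n$-measurable. Therefore $\mathbb{E}[\Delta\bm{M}_m\Delta\bm{M}_n^T]=0$ for $m\ne n$. For $m=n$, Eq.~\eqref{eq.spatiotemporal-second} with the tower property gives $\mathbb{E}[\Delta\bm{M}_n\Delta\bm{M}_n^T]=\mathrm{diag}(\mathbb{E}\bm{\lambda}_n)$. Since $\mathbb{E}\bm{\lambda}_n=\mathbb{E}\bm{X}_n$, this equals $\Delta t\,\mathrm{diag}((I-\alpha F)^{-1}\bm{\mu})$ by Corollary~\ref{cor:spatiotemporal-expectation}.

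\textbf{Step 3: compute the covariance.} We have
\[
\mathrm{Cov}(\bm{X}_n,\bm{X}_{n+s})=\sum_{l,l'}\Psi_l\,\mathbb{E}[\Delta\bm{M}_{n-l}\Delta\bm{M}_{n+s-l'}^T]\Psi_{l'}^T .
\]
Only the terms with $l'=l+s$ survive, which yields $\Delta t\sum_l\Psi_l\mathrm{diag}((I-\alpha F)^{-1}\bm{\mu})\Psi_{l+s}^T$, as claimed. Exchanging expectation and the double sum is justified by absolute summability of $\Psi$ and the uniform $L^2$ bound on the residuals.

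\textbf{Main obstacle.} The delicate point is Step 1: justifying the moving-average representation despite the implicit term $\phi_0F\tilde{\bm{X}}_n$, and showing that the stationary process coincides with it. I would handle the implicit term first, via $(I-\phi_0F)^{-1}$ as in Eq.~\eqref{eq.conditional-mean}, which reduces the model to an explicit causal recursion. Then I would iterate that recursion backwards and use the summability $\sum_k\|\Psi_k\|<\infty$, which follows from $\alpha<1$ and the column-stochastic $F$ via $\|F\|_1=1$, to show the remainder vanishes in $L^2$.
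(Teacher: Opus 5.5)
Your proof is correct. The paper does not argue this corollary directly; it invokes Theorem~1 of \cite{koyama2025coarse} with the matrix kernel $\Phi_k$ replaced by $\phi_kF$.

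Your argument is the self-contained version of that reduction. You center using Corollary~\ref{cor:spatiotemporal-expectation}, invert $I-\hat\phi(z)F$ to obtain $\bm{X}_n-\mathbb{E}[\bm{X}_n]=\sum_{l\ge0}\Psi_l\Delta\bm{M}_{n-l}$, and then use orthogonality of the martingale differences together with $\mathbb{E}[\Delta\bm{M}_n\Delta\bm{M}_n^T]=\mathrm{diag}(\mathbb{E}[\bm{X}_n])$. This is also the discrete analogue of the martingale computation the paper carries out for the binned Hawkes process in Lemma~\ref{lem.autocovariance-binnedhawkes}.

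What your route adds is explicit checks that the citation leaves implicit. First, the substituted kernel is stable: $|\hat\phi(z)|\le\alpha$ on the closed unit disk and $\|F\|_1=1$ by Eq.~\eqref{eq.spatial-cg-normalization}, so the Neumann series converges and $\sum_k\Psi_k=(I-\alpha F)^{-1}$. Second, $\Psi_0=(I-\phi_0F)^{-1}$ rather than $I$, which is exactly where the intra-bin term enters.

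One small simplification: for uniqueness you do not need the spectral argument. Any moment-bounded solution $\bm{Y}$ of the homogeneous equation satisfies $\sup_n\mathbb{E}\|\bm{Y}_n\|_1\le\alpha\sup_n\mathbb{E}\|\bm{Y}_n\|_1$, so it vanishes.
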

\begin{proof}
See \cite[Theorem~1]{koyama2025coarse}, with the temporal excitation kernel matrix $\Phi_{n-k}$ replaced by the matrix-valued kernel $\phi_{n-k}F$.
\end{proof}

\subsection{Asymptotic analysis of the statistical moments}
\label{subsec.asymptotics}
In this subsection, we evaluate the approximation errors of the first- and second-order statistical properties of the stationary spatio-temporal coarse-grained Hawkes process relative to those of the corresponding binned Hawkes process. We consider the joint asymptotic regime $\Delta t\to 0$ and $|S_i|\to 0$, where $\Delta t$ is the temporal bin width and $|S_i|$ is the volume of the $i$-th spatial bin.

To control the spatial approximation error, we impose the following additional regularity condition on the spatial partition. There exists a constant $C_r>0$ such that
\begin{equation}
\sup_{\bm{x},\bm{y}\in S_i}
|\bm{x}-\bm{y}|_2
\leq
C_r |S_i|^{1/q},
\label{eq.regularity}
\end{equation}
for every $i=1,\ldots,d$. This condition relates the diameter of each spatial bin to its volume and, in particular, excludes spatial bins that become arbitrarily thin in some directions. 
For example, when the spatial bins are cubic, Eq.~\eqref{eq.regularity} is satisfied if their side lengths converge to zero at the same rate.

Let $\bm{X}^{(\mathrm{cg})}_n$ denote the vector of stationary spatio-temporal coarse-grained Hawkes counts and let $\bm{X}^{(\mathrm{hw})}_n$ denote the vector of stationary binned Hawkes counts aggregated over the same spatio-temporal bins.

\subsubsection{First-order properties}\label{subsec.first-order}
We first consider the stationary mean.
\begin{thm}\label{thm.expectation}
   Assuming stationarity, for each $i=1,\ldots,d$,
    \begin{align}
        \mathbb{E}[X_{i,n}^{(\mathrm{cg})}] = \mathbb{E}[X_{i,n}^{(\mathrm{hw})}] + O(\Delta t |S|^{1/q}|S_i|),
        \label{eq.cghawkes-expectation-asymptotic}
    \end{align}
    where $|S|=\max_{1\leq i\leq d}|S_i|$.
\end{thm}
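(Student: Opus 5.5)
We compare the two stationary means directly. Corollary~\ref{cor:spatiotemporal-expectation} gives $\mathbb{E}[\bm{X}^{(\mathrm{cg})}_n]=\Delta t\,(I-\alpha F)^{-1}\bm{\mu}$. For the binned Hawkes process, Eq.~\eqref{eq.stationary-intensity} gives $\mathbb{E}[X^{(\mathrm{hw})}_{i,n}]=\Delta t\int_{S_i}\bar\lambda(\bm{x})d\bm{x}$. Moreover $\bar\lambda$ is the unique bounded solution of the renewal equation $\bar\lambda(\bm{x})=\mu(\bm{x})+\alpha\int_A f(\bm{x},\bm{y})\bar\lambda(\bm{y})d\bm{y}$. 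Writing $\bm{m}=(m_1,\dots,m_d)^T$ with $m_i=\int_{S_i}\bar\lambda(\bm{x})d\bm{x}$ and integrating the renewal equation over $S_i$, we get
\[
m_i=\mu_i+\alpha\sum_{l=1}^d\int_{S_i}\int_{S_l}f(\bm{x},\bm{y})\bar\lambda(\bm{y})d\bm{y}d\bm{x}.
\]
The temporal part drops out exactly, since both means carry the factor $\Delta t$ and both use $\sum_k\phi_k=\alpha$ by Eq.~\eqref{eq.cg-branching-ratio}. The whole error therefore comes from spatial coarse-graining.

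Next we write $\bm{m}=\mu+\alpha F\bm{m}+\bm{r}$, where
\[
r_i=\alpha\sum_l\int_{S_i}\int_{S_l}f(\bm{x},\bm{y})\big(\bar\lambda(\bm{y})-m_lp_l(\bm{y})\big)d\bm{y}d\bm{x}.
\]
Since $\int_{S_l}(\bar\lambda(\bm{y})-m_lp_l(\bm{y}))d\bm{y}=0$, we may subtract $f(\bm{x},\bm{y}_l)$ for any fixed $\bm{y}_l\in S_l$. The Lipschitz condition Eq.~\eqref{eq.lipschitz-f} combined with the regularity condition Eq.~\eqref{eq.regularity} gives $|f(\bm{x},\bm{y})-f(\bm{x},\bm{y}_l)|\le L_yC_r|S_l|^{1/q}$. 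Hence
\[
|r_i|\le \alpha L_yC_r|S|^{1/q}|S_i|\sum_l\int_{S_l}\big(\bar\lambda(\bm{y})+m_lp_l(\bm{y})\big)d\bm{y}=2\alpha L_yC_r|S|^{1/q}|S_i|\,\|\bm{m}\|_1 .
\]
Here $\|\bm{m}\|_1=\int_A\bar\lambda=\int_A\mu/(1-\alpha)$ by Eq.~\eqref{eq.normalization-1}, a constant independent of the partition. So $|r_i|\le C|S|^{1/q}|S_i|$.

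Consequently $\bm{m}-(I-\alpha F)^{-1}\bm{\mu}=(I-\alpha F)^{-1}\bm{r}=\sum_{n\ge0}\alpha^nF^n\bm{r}$, and multiplying by $\Delta t$ gives the difference of means. The main obstacle is showing that this resolvent preserves the componentwise scaling $|S_i|$, rather than only an $\ell^1$ bound. I would first try to control $F$ entrywise. By Eq.~\eqref{eq.spatial-cgkernel}, $f_{il}\le|S_i|\sup f$, where $f$ is bounded because it is Lipschitz in $\bm{y}$ and integrable. Then for $n\ge1$,
\[
(F^n\bm{r})_i=\sum_l f_{il}(F^{n-1}\bm{r})_l\le|S_i|\,\|f\|_\infty\|F^{n-1}\bm{r}\|_1\le|S_i|\,\|f\|_\infty\|\bm{r}\|_1 ,
\]
where the last step uses that $F$ is column-stochastic by Eq.~\eqref{eq.spatial-cg-normalization}, so $\|F\bm{v}\|_1\le\|\bm{v}\|_1$ for nonnegative $\bm{v}$, and we apply this to $|\bm{r}|$. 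Since $\|\bm{r}\|_1\le C|S|^{1/q}\sum_i|S_i|=C|A||S|^{1/q}$, summing the geometric series in $\alpha$ gives
\[
|((I-\alpha F)^{-1}\bm{r})_i|\le |r_i|+\frac{\alpha}{1-\alpha}\|f\|_\infty|S_i|\,C|A||S|^{1/q}=O(|S|^{1/q}|S_i|).
\]
Multiplying by $\Delta t$ yields Eq.~\eqref{eq.cghawkes-expectation-asymptotic}.

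If boundedness of $f$ in the first argument is not available from the stated assumptions, I would instead use the Lipschitz bound together with the normalization Eq.~\eqref{eq.normalization-2} to bound $\int_{S_i}f(\bm{x},\bm{y})d\bm{x}$ by $O(|S_i|)$ uniformly in $\bm{y}$.
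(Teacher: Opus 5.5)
Your proof is correct, but it is organized differently from the paper's. The paper expands both means as Neumann series and compares them order by order. Lemma~\ref{lem.expectation-convolution} telescopes $\sum_j[F^n]_{ij}\mu_j$ against $\int_{S_i}(f^{(*n)}*\mu)(\bm{x})d\bm{x}$ into $n$ one-step Lipschitz errors, each controlled via Lemma~\ref{lem.weighttedsum-Fn}. This gives an error $O(n|S|^{1/q}|S_i|)$ at order $n$, and the linear growth is then absorbed by $\sum_n n\alpha^n=\alpha/(1-\alpha)^2$. You instead perturb the fixed-point equation $\bar\lambda=\mu+\alpha f*\bar\lambda$ once, obtaining $(I-\alpha F)\bm{m}=\bm{\mu}+\bm{r}$ with a single residual. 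You then push $\bm{r}$ through the resolvent using only the column-stochasticity of $F$ and $f_{il}\le\|f\|_\infty|S_i|$, which plays the role of Lemma~\ref{lem.weighttedsum-Fn}. The two arguments are in fact resummations of each other: weighting the paper's telescoped error terms by $\alpha^n$ and summing over $n$ yields exactly $(I-\alpha F)^{-1}\bm{r}$ with your $\bm{r}$. Your version is shorter and avoids the $n$-fold telescoping identity. It also makes Theorem~\ref{thm.exp-stationary} immediate, since $\bm{r}=\bm{0}$ precisely when $p_l\propto\bar\lambda$ on each $S_l$; the paper proves that theorem separately from the same fixed-point equation. The paper's order-by-order form, on the other hand, is what it reuses for the covariance (Lemma~\ref{lem.Fnij} and Theorem~\ref{thm.cov}). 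There the spatial powers $F^l$ are paired with $l$-dependent temporal factors $(\phi^{(*l)})_m$ and no longer collapse into a single resolvent. Finally, your justification of $\|f\|_\infty<\infty$ is sound. Lipschitz continuity in $\bm{y}$ together with Eq.~\eqref{eq.normalization-2} on the bounded domain gives $\sup f\le 1/|A|+L_y\,\mathrm{diam}(A)$, so the fallback in your last paragraph is unnecessary; the paper uses $\|f\|_\infty$ without comment.
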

\begin{proof}
    Under stationarity, Corollary~\ref{cor:spatiotemporal-expectation} gives
    \[
        \mathbb{E}[X_{i,n}^{(\mathrm{cg})}]=\Delta t\sum_{n=0}^{\infty}\alpha^n [F^n\bm{\mu}]_i = \Delta t\sum_{n=0}^{\infty}\alpha^n\left( \sum_{j=1}^d [F^n]_{ij}\mu_j\right).
    \]
    By Lemma~\ref{lem.expectation-convolution}, 
    \[
        \mathbb{E}[X_{i,n}^{(\mathrm{cg})}]= \Delta t\sum_{n=0}^{\infty}\alpha^n\int_{S_i}(f^{(*n)}*\mu)(\bm{x})d\bm{x}+\Delta t \sum_{n=0}^{\infty}\alpha^n O(n|S|^{1/q}|S_i|).
     \]
     The first term on the right-hand side is the stationary mean of the corresponding binned Hawkes process. Since $\alpha<1$,
     \[
     \sum_{n=0}^{\infty}n\alpha^n = \frac{\alpha}{(1-\alpha)^2},
     \]
    and therefore
     \[
        \mathbb{E}[X_{i,n}^{(\mathrm{cg})}]
           = \mathbb{E}[X_{i,n}^{(\mathrm{hw})}] + O(\Delta t |S|^{1/q}|S_i|).
     \]
\end{proof}

Theorem~\ref{thm.expectation} holds for any prescribed spatial densities $p_j$ satisfying the conditions in Eq.~\eqref{eq.normalization-spatial-density}. However, the approximation error in the stationary mean can be eliminated entirely by choosing the spatial densities proportional to the stationary intensity. This observation is summarized in the following theorem.
\begin{thm}\label{thm.exp-stationary}
Assuming stationarity and choosing
\begin{equation}
p_j(\bm{x})=\frac{\bar{\lambda}(\bm{x})}{\int_{S_j}\bar{\lambda}(\bm{y})d\bm{y}},
\qquad
\bm{x}\in S_j,
\label{eq.prescribed-spatial-density}
\end{equation}
the stationary means of the spatio-temporal coarse-grained Hawkes process and the binned Hawkes process coincide:
\begin{equation}
    \mathbb{E}[X_{i,n}^{(\mathrm{cg})}] = \mathbb{E}[X_{i,n}^{(\mathrm{hw})}].
    \label{eq.exact-mean}
\end{equation}
\end{thm}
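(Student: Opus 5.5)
The plan is to show directly that the vector $\bar{\bm{\Lambda}}$ with components $\bar{\Lambda}_i=\int_{S_i}\bar{\lambda}(\bm{x})d\bm{x}$ solves the linear system that determines the coarse-grained mean. By Corollary~\ref{cor:spatiotemporal-expectation}, $\mathbb{E}[\bm{X}^{(\mathrm{cg})}_n]$ is the unique solution of $(I-\alpha F)\bm{m}=\bm{\mu}\Delta t$; uniqueness holds because $I-\alpha F$ is invertible when $\alpha<1$. The binned Hawkes mean is $\mathbb{E}[X^{(\mathrm{hw})}_{i,n}]=\Delta t\,\bar{\Lambda}_i$ by stationarity. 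It therefore suffices to prove
\[
\bar{\Lambda}_i-\alpha\sum_{j=1}^d f_{ij}\bar{\Lambda}_j=\mu_i,\qquad i=1,\ldots,d.
\]

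First, I will derive the continuous renewal identity $\bar{\lambda}(\bm{x})=\mu(\bm{x})+\alpha\int_A f(\bm{x},\bm{y})\bar{\lambda}(\bm{y})d\bm{y}$. This follows from Eq.~\eqref{eq.stationary-intensity} by separating off the $n=0$ term and using $f^{(*n)}=f^{(*(n-1))}*f$ in the form consistent with Eq.~\eqref{eq.spatial-convolution}, since $\int_A f(\bm{x},\bm{z})(f^{(*(n-1))}*\mu)(\bm{z})d\bm{z}=(f^{(*n)}*\mu)(\bm{x})$. Alternatively, it follows directly from taking expectations in Eq.~\eqref{eq.intensity} under stationarity, together with $\int_0^\infty\phi=\alpha$. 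Convergence of the series is guaranteed by $\alpha<1$, the boundedness of $\mu$, and the fact that $f$ is a probability kernel by Eq.~\eqref{eq.normalization-1}. Exchanging sum and integral is justified by nonnegativity.

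Second, I will integrate this identity over $S_i$ and split $A=\cup_j S_j$:
\[
\bar{\Lambda}_i=\mu_i+\alpha\sum_{j}\int_{S_i}\int_{S_j} f(\bm{x},\bm{y})\bar{\lambda}(\bm{y})d\bm{y}d\bm{x}.
\]
With the prescribed density \eqref{eq.prescribed-spatial-density}, $\bar{\lambda}(\bm{y})=\bar{\Lambda}_j\,p_j(\bm{y})$ on $S_j$. The inner double integral is therefore exactly $\bar{\Lambda}_j f_{ij}$ by Eq.~\eqref{eq.spatial-cgkernel}. This gives the required discrete system, and hence Eq.~\eqref{eq.exact-mean}.

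The main care point is the degenerate case $\bar{\Lambda}_j=0$, where $p_j$ is undefined. In that case $\bar{\lambda}=0$ a.e. on $S_j$, so that block contributes zero for any choice of $p_j$, and the identity still holds. I will note this convention. The remaining potential obstacle is only confirming the order of the convolution in Eq.~\eqref{eq.spatial-convolution}, so that the renewal equation has $f(\bm{x},\bm{y})$ acting on the parent location $\bm{y}$. I expect no real difficulty there.
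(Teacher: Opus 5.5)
Your proposal is correct and follows the paper's proof essentially step for step: integrate $\bar{\lambda}=\mu+\alpha f*\bar{\lambda}$ over $S_i$, split $A$ into the $S_j$, use the prescribed $p_j$ to recognize $f_{ij}\int_{S_j}\bar{\lambda}$, and conclude from the invertibility of $I-\alpha F$ in Corollary~\ref{cor:spatiotemporal-expectation}. Your derivation of the renewal identity and your convention for bins with $\int_{S_j}\bar{\lambda}=0$ are small refinements that the paper leaves implicit; the convolution-order concern is moot, since powers of a single kernel commute by associativity.
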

\begin{proof}
    The stationary intensity $\bar{\lambda}(\bm{x})$ satisfies
    \[
\bar{\lambda}(\bm{x})
= \mu(\bm{x}) +
\alpha (f*\bar{\lambda})(\bm{x}).
    \]
Integrating both sides over $S_i$ gives
    \[
    \int_{S_i}\bar{\lambda}(\bm{x})d\bm{x}=\int_{S_i}\mu(\bm{x})d\bm{x}+\alpha\int_{S_i} (f*\bar{\lambda})(\bm{x})d\bm{x}.
    \]
    Using the partition $A=\bigcup_{j=1}^d S_j$ and the choice in Eq.~\eqref{eq.prescribed-spatial-density},
    \begin{align*}
    \mathbb{E}[X_{i,n}^{(\mathrm{hw})}] &= \Delta t\int_{S_i} \bar{\lambda}(\bm{x})d\bm{x}\\
    &=\Delta t\int_{S_i}\mu(\bm{x})d\bm{x}+\alpha\Delta t\sum^d_{j=1}\int_{S_i}\int_{S_j} f(\bm{x},\bm{y})\bar{\lambda}(\bm{y})d\bm{x}d\bm{y}\\
    &=\mu_i\Delta t + \alpha \sum^{d}_{j=1}F_{ij}\mathbb{E}[X_{j,n}^{(\mathrm{hw})}],
    \end{align*}
    Thus,
    \[
\mathbb{E}[\bm{X}_{n}^{(\mathrm{hw})}] = \bm{\mu}\Delta t+\alpha F\mathbb{E}[\bm{X}_{n}^{(\mathrm{hw})}],
    \]
    and hence
    \begin{align*}
        \mathbb{E}[\bm{X}_{n}^{(\mathrm{hw})}] = (1-\alpha F)^{-1}\bm{\mu}\Delta t.
    \end{align*}
    By Corollary~\ref{cor:spatiotemporal-expectation}, this is also the stationary mean of the spatio-temporal coarse-grained Hawkes process. Therefore Eq.~\eqref{eq.exact-mean} follows.
\end{proof}
The preceding result shows that the spatial approximation error in the stationary mean is not intrinsic to the coarse-graining procedure. Rather, it arises from the choice of the within-bin spatial distributions $p_j$. Choosing these distributions according to the stationary intensity provides an exact match of the stationary first-order property.

We next compare the approximation with the binned Poisson model described in Appendix~\ref{subset.binnedpoisson}. Let $\bm{X}^{(\mathrm{po})}_n$ denote the vector of stationary counts generated by the binned Poisson approximation.
For the asymptotic analysis of its stationary mean, we impose additional regularity assumptions. Specifically, we assume that $f$ is $L_x$-Lipschitz continuous with respect to its first argument and that the background intensity $\mu$ is Lipschitz continuous. The former condition is automatically satisfied when the spatial excitation kernel is shift-invariant and is Lipschitz continuous with respect to its relative spatial coordinate. For the temporal kernel, we additionally assume that
\[
    \Delta t\sum_{k=1}^{\infty}\phi(k\Delta t)
    =\alpha+O(\Delta t)
\]
as $\Delta t\to0$.
Under these assumptions and stationarity,
\begin{align}
        \mathbb{E}[X_{i,n}^{(\mathrm{po})}] = \mathbb{E}[X_{i,n}^{(\mathrm{hw})}]+O(\Delta t|S|^{1/q}|S_i|)+O((\Delta t)^2|S_i|), 
        \label{eq.binnedpoisson-expectation-asymptotic}
\end{align}
as $\Delta t\to0$ and $|S_i|\to0$. The proof is given in Appendix~\ref{appendix.binnedpoisson}.

Comparing Eqs.~\eqref{eq.cghawkes-expectation-asymptotic} and
\eqref{eq.binnedpoisson-expectation-asymptotic}, we see that the stationary mean of the spatio-temporal coarse-grained Hawkes process does not contain the temporal discretization error 
$O((\Delta t)^2|S_i|)$
that appears in the binned Poisson approximation.
This improvement results from explicitly incorporating excitation effects occurring within the same temporal bin through the intra-bin excitation coefficient $\phi_0$.
Moreover, when the spatial densities are chosen according to Eq.~\eqref{eq.prescribed-spatial-density}, the spatial approximation error
$O(\Delta t|S|^{1/q}|S_i|)$
is also eliminated for the stationary mean. Thus, under this choice of spatial density, the stationary mean of the proposed model agrees exactly with that of the binned Hawkes process, whereas the binned Poisson approximation retains both spatial and temporal discretization errors.

\subsubsection{Second-order properties}\label{subsec.second-order}

We next investigate the second-order properties of the stationary spatio-temporal coarse-grained Hawkes process. In particular, we evaluate the approximation error of its autocovariance matrix relative to that of the corresponding binned Hawkes process in the joint limit $\Delta t\to 0$ and $|S_i|\to 0$. 

\begin{thm} \label{thm.cov}
   Assuming stationarity, for each $i,j=1,\ldots,d$ and temporal lag $p\geq 0$,
    \begin{align}
        &\mathrm{Cov}(X_{i,n}^{(\mathrm{cg})},X_{j,n+p}^{(\mathrm{cg})}) = \mathrm{Cov}(X_{i,n}^{(\mathrm{hw})},X_{j,n+p}^{(\mathrm{hw})}) \nonumber\\
        &+\begin{cases}
            O((\Delta t)^3|S_i|^2)+O(\Delta t|S_i||S|^{1/q}), & p=0, \ i= j, \\
            O((\Delta t)^3|S_i||S_j|) + O((\Delta t)^2|S_i||S_j||S|^{1/q}), & \text{otherwise} \\
        \end{cases}
        \label{eq.cghawkes-autocovariance-asymptotic}
    \end{align}    
\end{thm}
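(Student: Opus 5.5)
The plan is to reduce the comparison to two separate approximation problems: a purely temporal one, handled by the coarse-graining results of \cite{koyama2025coarse}, and a purely spatial one, handled through Lemma~\ref{lem.expectation-convolution} and the Lipschitz and regularity assumptions \eqref{eq.lipschitz-f}, \eqref{eq.regularity}.

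\textbf{Step 1 (the binned Hawkes covariance).} I will write the exact autocovariance of the binned Hawkes counts in cluster (Bartlett) form. The second-order measure of the stationary spatio-temporal Hawkes process is
\[
\bar\lambda(\bm{x})\delta(\tau)\delta(\bm{x}-\bm{y}) + \text{(kernel part)},
\]
where the kernel part is built from the resolvent $R=\sum_{m\ge1}\alpha^m \phi_m(\tau)f^{(*m)}$, with $\phi_m:=\phi^{*m}/\alpha^m$ after normalization. Integrating over $[(n-1)\Delta t,n\Delta t)\times S_i$ and $[(n+p-1)\Delta t,(n+p)\Delta t)\times S_j$ gives
\[
\mathrm{Cov}(X^{(\mathrm{hw})}_{i,n},X^{(\mathrm{hw})}_{j,n+p})=\sum_{l}\int\!\!\int \Psi(\cdot)\,\bar\lambda\,\Psi(\cdot)^T ,
\]
which is the continuous analogue of Corollary~\ref{cor:spatiotemporal-autocovariance}. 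Here $\Psi$ is the continuous-time, continuous-space resolvent $\sum_j \phi^{(*j)}(t)f^{(*j)}(\bm{x},\bm{y})$.

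\textbf{Step 2 (separating the two errors).} The spatial and temporal factors separate, since the kernel is a product $\phi(t)f(\bm{x},\bm{y})$ and $(\phi_k F)^{j}$ is $(\phi^{(*j)})_kF^j$. So the $j$-th term of $\Psi_k$ is a product of a temporal coarse-grained convolution and $F^j$. I will handle the factors in turn.
\begin{itemize}
\item \emph{Temporal factor.} I invoke the purely temporal result of \cite{koyama2025coarse}: the coarse-grained kernels $(\phi^{(*j)})_k$ reproduce the binned autocovariance up to $O(\Delta t^3)$ per unit intensity. Multiplying by the spatial masses $|S_i|$ and $|S_j|$ (from $\bar\lambda$ being bounded) gives the $(\Delta t)^3|S_i||S_j|$ terms.
\item \emph{Spatial factor.} By Lemma~\ref{lem.expectation-convolution}, $[F^j]_{ab}$ differs from $\int_{S_a}\int_{S_b} f^{(*j)}p_b$ by $O(j|S|^{1/q}|S_a|)$. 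The error is linear in $j$, and the factor $\alpha^j$ makes the sums converge.
\end{itemize}

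\textbf{Step 3 (the diagonal versus off-diagonal split).} The case distinction comes from the $l$-sum at $p=0$, $i=j$. The $j=0$ term $\delta_{k0}I$ gives a diagonal contribution $\Delta t\,\mathrm{diag}((I-\alpha F)^{-1}\bm\mu)_{ii}$, the variance of the Poisson part. By Theorem~\ref{thm.expectation}, this term carries a mean error of $O(\Delta t|S_i||S|^{1/q})$, which is not multiplied by a second spatial mass. All other contributions involve at least one excitation factor, so they carry $|S_i||S_j|$ together with either $(\Delta t)^3$ from the temporal error or $(\Delta t)^2|S|^{1/q}$ from the spatial error; the extra $\Delta t$ comes from the double time integral of a bounded kernel. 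The diagonal $(\Delta t)^3|S_i|^2$ term arises in the same way.

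\textbf{Main obstacle.} I expect the hardest step to be controlling the cross terms uniformly over the infinite sums in $l$ and $j$. This requires bounds such as
\[
\sum_{l}\|\Psi_l\|\,\|\Psi_{l+s}\| \le C\sum_j j\alpha^j<\infty,
\]
together with the $W^{1,2}$ and boundedness conditions on $\phi$, so that the temporal $O(\Delta t^3)$ estimate holds uniformly across convolution powers. I would first try a telescoping (product-rule) decomposition of each product $A_lDB^T_{l+s}$ into single-factor errors, bounding each factor with geometric weights.
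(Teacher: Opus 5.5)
Your architecture is the paper's. Lemma~\ref{lem.autocovariance-binnedhawkes}(iii) writes the binned covariance as $\sum_{l,l'}T_{l,l'}W_{l,l'}$: $T_{l,l'}$ integrates over the common ancestor time $s$ the product of the bin integrals of $\phi^{(*l)}$ and $\phi^{(*l')}$, and $W_{l,l'}$ does the same over the ancestor location $\bm{y}$ with weight $\bar\lambda$. These are compared with $\hat T_{l,l'}=\Delta t\sum_{m\le n}(\phi^{(*l)})_{n-m}(\phi^{(*l')})_{n+p-m}$ and $\sum_k[F^l]_{ik}[F^{l'}]_{jk}\mathbb{E}[X^{(\mathrm{hw})}_{k,n}]/\Delta t$, and the mean is swapped via Theorem~\ref{thm.expectation}. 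Your Step~3 account of the $(p,i,j)=(0,i,i)$ entry is exactly Lemma~\ref{lem.Psi-sum-evaluation}.

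The gap is the temporal bullet, which is where nearly all of the work lies. You cannot use the temporal result of \cite{koyama2025coarse} as a black box. As cited here, it concerns the second-order structure of a whole temporal process of fixed dimension, i.e.\ an aggregate of the errors $T_{l,l'}-\hat T_{l,l'}$ with weights fixed by the kernel. Here the weights $W_{l,l'}$ vary with $(l,l')$ and with a partition whose size $d\to\infty$. The pairwise errors have different signs: for continuous $\phi$ and $p=0$, the current-bin parts of $T_{1,1}-\hat T_{1,1}$ and $T_{0,2}-\hat T_{0,2}$ are $\approx+\phi(0)^2(\Delta t)^3/12$ and $\approx-\phi(0)^2(\Delta t)^3/12$. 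So an aggregate bound says nothing about the weighted sum, and its constants need not scale like $|S_i||S_j|$ uniformly in $d$. You need a per-pair bound with summable dependence on the orders, such as $|T_{l,l'}-\hat T_{l,l'}|\le C(l+l')\alpha^{l+l'-2}(\Delta t)^3$, with equality at $(l,l')=(0,0)$. Exactness there is essential because $(0,0)$ is the only pair whose spatial factor carries a single mass $|S_i|$. For the other pairs, the second mass comes from $f_{ij}\le\|f\|_\infty|S_i|$ and the unit column sums of $F$ (Lemma~\ref{lem.weighttedsum-Fn}), not from the boundedness of $\bar\lambda$ as you state.

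Your telescoping plan would not produce such bounds, because $T_{l,l'}$ is an average of a \emph{product} over the ancestor bin, while $\hat T_{l,l'}$ is a product of averages. Write each bin integral as its ancestor-bin average $B_{l,m,n}$ plus a zero-mean fluctuation $D_{l,m,n}$. This leaves the term $\int D_{l,m,n}D_{l',m,n+p}\,ds$, which is not a single-factor error. For past bins $m<n$ it must be controlled by the Poincar\'e inequality (one place where $\phi\in W^{1,2}$ is used), so that its sum over $m$ is $O((\Delta t)^4)$. In the current bin $m=n$ the causal cutoff $t\ge s$ falls inside the bin, and for $(l,l')=(1,1)$, $p=0$ this term is the $\approx\phi(0)^2(\Delta t)^3/12$ above. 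The coefficient estimate $B_{l,m,n}-(\phi^{(*l)})_{n-m}=O(l\alpha^{l-2}(\Delta t)^2)$ also needs its own induction; this is the content of Lemmas~\ref{lem.phi-convolution}--\ref{lem.phi-sum-evaluation}.

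The spatial bullet has the analogous omission. Besides comparing $F^l$ with the $p_j$-averaged $f^{(*l)}$ (that is Lemma~\ref{lem.Fnij}; Lemma~\ref{lem.expectation-convolution} concerns $F^n\bm{\mu}$), you must decouple the shared ancestor location $\bm{y}\in S_k$. Each summand of $W_{l,l'}$ has to be replaced by a product of $p_k$-averages times $\int_{S_k}\bar\lambda$ (Lemma~\ref{lem.decompose-space}). This requires a Lipschitz constant for $f^{(*l)}$ in its second argument that is uniform in $l$, which is Lemma~\ref{lem.lipschitz-space} and relies on Eq.~\eqref{eq.normalization-2}.
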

The proof of Theorem~\ref{thm.cov} is given in Appendix~\ref{appendix.2nd-order}.

The diagonal zero-lag case, $p=0$ and $i=j$, has a different error order from the other covariance terms. This difference arises because the covariance in this case is the variance of a single bin count. In particular, the same event contributes to both factors in the product defining the second moment. Consequently, the leading diagonal term is proportional to the expected number of events in the bin, and the first-order spatial approximation error in the mean directly contributes to the variance error. This produces the term
$O(\Delta t |S_i||S|^{1/q})$,
which has a lower power of $\Delta t$ than the corresponding error terms for the off-diagonal and nonzero-lag covariances.

The spatial approximation error can again be reduced by choosing the within-bin spatial densities according to Eq.~\eqref{eq.prescribed-spatial-density}.
\begin{cor}
\label{cor.cov-stationary-density}
Assuming stationarity and choosing $p_j$ according to Eq.~\eqref{eq.prescribed-spatial-density},
\begin{align}
&
\operatorname{Cov}
(
X_{i,n}^{(\mathrm{cg})},
X_{j,n+p}^{(\mathrm{cg})}
)
=
\operatorname{Cov}
(
X_{i,n}^{(\mathrm{hw})},
X_{j,n+p}^{(\mathrm{hw})}
)
\nonumber\\
&\qquad+
O((\Delta t)^3|S_i||S_j|)
+
O((\Delta t)^2|S_i||S_j||S|^{1/q})
\label{eq.cghawkes-autocovariance-stationary-density}
\end{align}
for the covariance terms covered by Theorem~\ref{thm.cov}.
\end{cor}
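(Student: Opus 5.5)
The plan is to re-run the proof of Theorem~\ref{thm.cov} and isolate the single place where the diagonal zero-lag error $O(\Delta t|S_i||S|^{1/q})$ comes from. That place is the mean. Both autocovariances split into two parts. The first is a ``diagonal'' contribution $\Delta t\,\mathrm{diag}(\mathbb{E}[\bm{X}_n])/\Delta t$ type term, which in the cg case comes from $\Psi_0\,\mathrm{diag}((I-\alpha F)^{-1}\bm{\mu})\Psi_0^T$ with $\Psi_0 = I + O(\Delta t)$. It is present only for $p=0$, $i=j$. The second is a remainder collecting the excitation-induced correlations, which is the part carrying the factors $|S_i||S_j|$.

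For the binned Hawkes process I would use the analogous decomposition of the reduced covariance density. The second moment of counts in a single bin contains the term $\int_{S_i}\bar\lambda(\bm{x})d\bm{x}\,\Delta t$, coming from the atom of the Bartlett spectrum, plus a smooth part integrated over $([(n-1)\Delta t,n\Delta t)\times S_i)\times([(n+p-1)\Delta t,(n+p)\Delta t)\times S_j)$.

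The key step is then immediate from Theorem~\ref{thm.exp-stationary}. With $p_j$ chosen as in Eq.~\eqref{eq.prescribed-spatial-density}, we have $(I-\alpha F)^{-1}\bm{\mu}\Delta t = \mathbb{E}[\bm{X}_n^{(\mathrm{hw})}]$ exactly. So the atomic/diagonal term of the cg covariance, $\Delta t\,[\mathrm{diag}((I-\alpha F)^{-1}\bm{\mu})]_{ii}$, coincides exactly with $\Delta t\int_{S_i}\bar\lambda$ in the Hawkes covariance. In the proof of Theorem~\ref{thm.cov}, the $O(\Delta t|S_i||S|^{1/q})$ term arose precisely from replacing this mean via Lemma~\ref{lem.expectation-convolution}. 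That replacement is now exact, so the term disappears.

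It remains to check that every other error term is unchanged by this choice and retains the ``otherwise'' orders $O((\Delta t)^3|S_i||S_j|)+O((\Delta t)^2|S_i||S_j||S|^{1/q})$. There are three such terms:
\begin{enumerate}
\item the temporal error from comparing $\Psi_l$ with the integrated continuous-time resolvent, which uses $\phi_0$ and gives the $(\Delta t)^3$ order independently of $p_j$;
\item the spatial error from approximating $[F^j]_{ik}$ by $\int_{S_i}\int_{S_k}f^{(*j)}p_k$;
\item the error from the product of means in the smooth part.
\end{enumerate}
The spatial error bounds in the appendix use only the Lipschitz condition Eq.~\eqref{eq.lipschitz-f}, the regularity condition Eq.~\eqref{eq.regularity} and the normalization Eq.~\eqref{eq.normalization-spatial-density}. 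They hold for any admissible density. The chosen $p_j$ is admissible since $\bar\lambda\ge\mu\ge0$, and $\int_{S_j}\bar\lambda>0$ is needed for it to be defined, which I will assume (or else drop empty bins). For the diagonal case I would also check that the $\mathbb{E}$-dependent factor multiplying the $O(\Delta t)$ correction in $\Psi_0$ contributes only $O((\Delta t)^2|S_i|\cdot|S_i||S|^{1/q})$, which falls within the stated bound.

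The main obstacle is bookkeeping: verifying that in the appendix proof of Theorem~\ref{thm.cov} the mean enters the $p=0$, $i=j$ case only through the exactly-matched term, and elsewhere only multiplied by an extra factor $\Delta t|S_j|$. If some mean-replacement also appears inside the sums over $l\ge1$, I would apply Theorem~\ref{thm.exp-stationary} there too. At worst the spatial error from the lemma survives there, multiplied by at least $(\Delta t)^2|S_i||S_j|$, and so it is absorbed into the stated orders.
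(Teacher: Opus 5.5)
Your proposal is correct and follows the argument the paper intends; the paper writes out no separate proof. In the proof of Theorem~\ref{thm.cov}, the $\delta_{p,0}\delta_{i,j}O(\Delta t|S_i||S|^{1/q})$ term comes only from replacing $\mathbb{E}[X^{(\mathrm{hw})}_{k,n}]$ by $\mathbb{E}[X^{(\mathrm{cg})}_{k,n}]$ in $\sum_{m,k}[\Psi_{n-m}]_{ik}\mathbb{E}[X_{k,n}][\Psi_{n+p-m}]_{jk}$, which becomes exact by Theorem~\ref{thm.exp-stationary}, while the temporal lemmas and Lemmas~\ref{lem.decompose-space} and~\ref{lem.Fnij} hold for any admissible $p_j$. (Your third error item is vacuous, since the product of means cancels exactly in Lemma~\ref{lem.autocovariance-binnedhawkes}(iii), and the fallback checks in your last paragraph are unnecessary, because the exact mean match removes the entire mean-replacement error, including its off-atomic $O((\Delta t)^2|S_i||S_j||S|^{1/q})$ part.)
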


The corresponding approximation errors for the binned Poisson approximation are, in the joint limit $\Delta t\to0$ and $|S_i|\to0$,
\begin{align}
 &\mathrm{Cov}(X_{i,n}^{(\mathrm{po})},X_{j,n+p}^{(\mathrm{po})}) = \mathrm{Cov}(X_{i,n}^{(\mathrm{hw})},X_{j,n+p}^{(\mathrm{hw})}) \nonumber\\
 &+\begin{cases}
 	O((\Delta t)^2|S_i|^2)+O(\Delta t|S_i||S|^{1/q})+O((\Delta t)^2|S_i|), & p=0,\ i= j, \\
	O((\Delta t)^2|S_i||S_j|), & p=0,\ i\neq j, \\
 	O((\Delta t)^{5/2}|S_i||S_j|) + O((\Delta t)^2|S_i||S_j||S|^{1/q}), & p>0. \\
 \end{cases}
 \label{eq.binnedpoisson-autocovariance-asymptotic}
\end{align}
The proof is given in Appendix~\ref{appendix.binnedpoisson-autocovariance}.

Comparing Eqs.~\eqref{eq.cghawkes-autocovariance-asymptotic} and
\eqref{eq.binnedpoisson-autocovariance-asymptotic} reveals a systematic improvement in the temporal order of approximation obtained by the spatio-temporal coarse-grained Hawkes process. In particular, the proposed model incorporates the excitation generated by events occurring within the same temporal bin through the intra-bin excitation term. As a result, the leading temporal discretization errors associated with the binned Poisson approximation are absent or shifted to higher order.

The remaining approximation error of the proposed model is therefore primarily associated with the spatial coarse-graining and with the approximation of the within-bin event configuration. Furthermore, when the within-bin spatial distributions are chosen proportional to the stationary intensity, the leading spatial error terms can be removed, as demonstrated for the first-order property in Theorem~\ref{thm.exp-stationary} and for the second-order property in Corollary~\ref{cor.cov-stationary-density}. This provides a theoretical explanation for the improved approximation accuracy of the proposed spatio-temporal coarse-grained Hawkes process relative to the binned Poisson approximation.

\subsection{Numerical evaluation for varying spatial and temporal bin sizes}
\label{subsec.numerical}
As established in Section~\ref{subsec.asymptotics}, the spatio-temporal coarse-grained Hawkes process provides a higher-order approximation to the first- and second-order moment properties of the binned Hawkes process than the binned Poisson approximation in the asymptotic regime $\Delta t\to0$ and $|S_i|\to0$. In this section, we numerically investigate the validity and robustness of this approximation over a range of temporal and spatial bin sizes.

We consider a spatio-temporal Hawkes process on the two-dimensional spatial domain, 
$A=[0,1]\times[0,1]\subset\mathbb{R}^2$, 
with background intensity
\[
\mu(\bm{x})
=0.5+2.0\left|\cos\left(2\pi\times0.3\times(x_1+x_2)\right)\right|.
\]
The temporal and spatial excitation kernels are, respectively,
\[
\phi(t)=0.9e^{-t}\bm{1}_{\{t\geq0\}},
\]
which has branching ratio $\alpha=0.9$, and
\[
f(\bm{x},\bm{y})
=\sum_{\bm{v}\in\mathbb{Z}^2}
\frac{1}{(2\pi)\times0.05}
\exp\left(
-\frac{\|\bm{x}-\bm{y}+\bm{v}\|_2^2}{2\times0.05}
\right).
\]
The spatial kernel satisfies the normalization conditions in Eqs.~\eqref{eq.normalization-1}--\eqref{eq.normalization-2}. The periodic summation in the definition of $f$ corresponds to periodic boundary conditions on $A$.

We aggregate the spatio-temporal Hawkes process over spatial square bins
\[
S_{k,l}
=[k\Delta x,(k+1)\Delta x]\times[l\Delta x,(l+1)\Delta x],
\]
where $k,l=0,\ldots,\Delta x^{-1}-1$, and use temporal bins of width $\Delta t$. The number of spatial bins is therefore $d=\Delta x^{-2}$ for the values of $\Delta x$ considered below.

\subsubsection{Stationary mean}
We first evaluate the stationary mean of the aggregated counts. We measure the approximation error by the $L^1$ distance between the stationary mean vectors, normalized by the temporal bin width,
\begin{equation}
\mathcal{E}_1^{(\blacklozenge)}(\Delta t,\Delta x)
=
\frac{1}{\Delta t}
\left\|
\mathbb{E}[\bm{X}_n^{(\mathrm{hw})}]
-
\mathbb{E}[\bm{X}_n^{(\blacklozenge)}]
\right\|_1
=
\frac{1}{\Delta t}
\sum_{i=1}^d
\left|
\mathbb{E}[X_{i,n}^{(\mathrm{hw})}]
-
\mathbb{E}[X_{i,n}^{(\blacklozenge)}]
\right|,
\label{eq:numerical-first-error}
\end{equation}
where $\blacklozenge$ denotes either the coarse-grained (cg) or the binned Poisson (po) approximation. The normalization by $\Delta t$ converts the error in bin counts into an error in the corresponding mean event rates. 
For sufficiently small $\Delta t$, Eq.~\eqref{eq:numerical-first-error} approximates the $L^1$ difference between the corresponding stationary intensity fields integrated over the spatial domain. In particular, because the stationary mean of the spatio-temporal coarse-grained Hawkes process is proportional to $\Delta t$, its normalized stationary-mean error is independent of $\Delta t$ for a fixed spatial partition; see Corollary~\ref{cor:spatiotemporal-expectation}. When the spatial densities $p_j$ are proportional to the stationary means, this error is zero according to Theorem~\ref{thm.exp-stationary}.

Figure~\ref{fig:1st-dt} shows the normalized $L^1$ errors as functions of $\Delta t$, with $\Delta x=0.1$. For the spatio-temporal coarse-grained Hawkes process, we use the uniform distribution within each spatial bin for $p_j$. The error remains approximately constant as $\Delta t$ varies. In contrast, the error of the binned Poisson approximation increases with $\Delta t$. This behavior is primarily attributable to the discretization of the temporal excitation kernel: the branching ratio $\alpha=\int_0^\infty\phi(t)\,dt$ is approximated in the binned Poisson model by the Riemann sum, 
$\sum_{k=1}^{\infty}\phi(k\Delta t)\Delta t$.
The proposed coarse-grained construction instead preserves the integrated excitation exactly through Eq.~\eqref{eq.cg-branching-ratio}, including the contribution from intra-bin excitation.

Figure~\ref{fig:1st-dx} shows the corresponding errors as functions of the spatial bin width $\Delta x$, with $\Delta t=0.1$. The error of the spatio-temporal coarse-grained Hawkes process decreases toward zero as $\Delta x$ decreases, consistent with the spatial approximation results in Section~\ref{subsec.asymptotics}. In contrast, the error of the binned Poisson approximation does not vanish as $\Delta x$ decreases because the temporal discretization error in its approximation of the branching ratio remains nonzero for fixed $\Delta t$.

\begin{figure}
    \centering
    \includegraphics[width=0.55\linewidth]{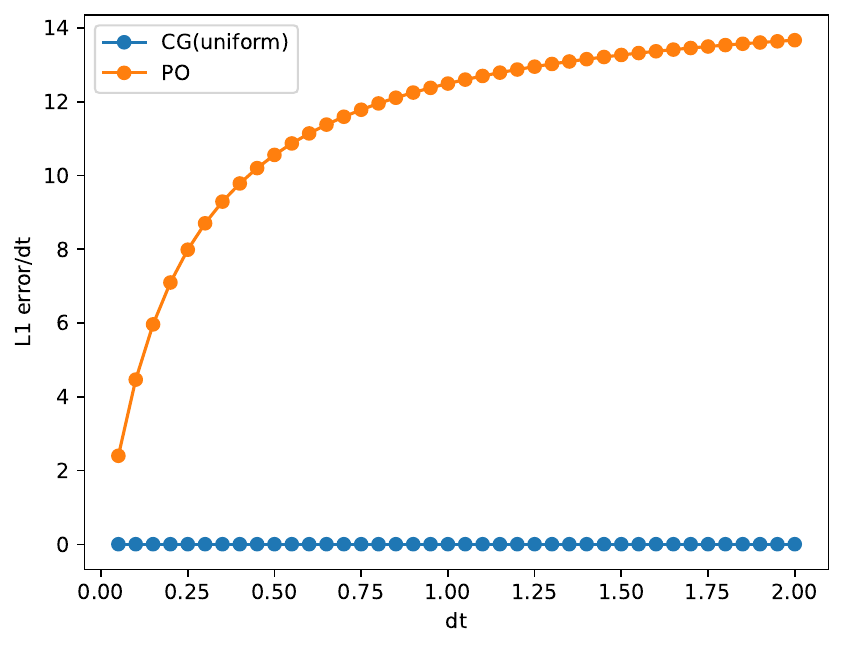}
    \caption{Error in approximating the stationary mean of the binned Hawkes process as a function of the temporal bin width $\Delta t$. The spatial bin width is fixed at $\Delta x=0.1$. The error of the binned Poisson approximation increases with $\Delta t$, whereas the error of the spatio-temporal coarse-grained Hawkes process remains approximately constant at $\approx5.3\times10^{-3}$.}
    \label{fig:1st-dt}
\end{figure}

\begin{figure}
    \centering
    \includegraphics[width=0.55\linewidth]{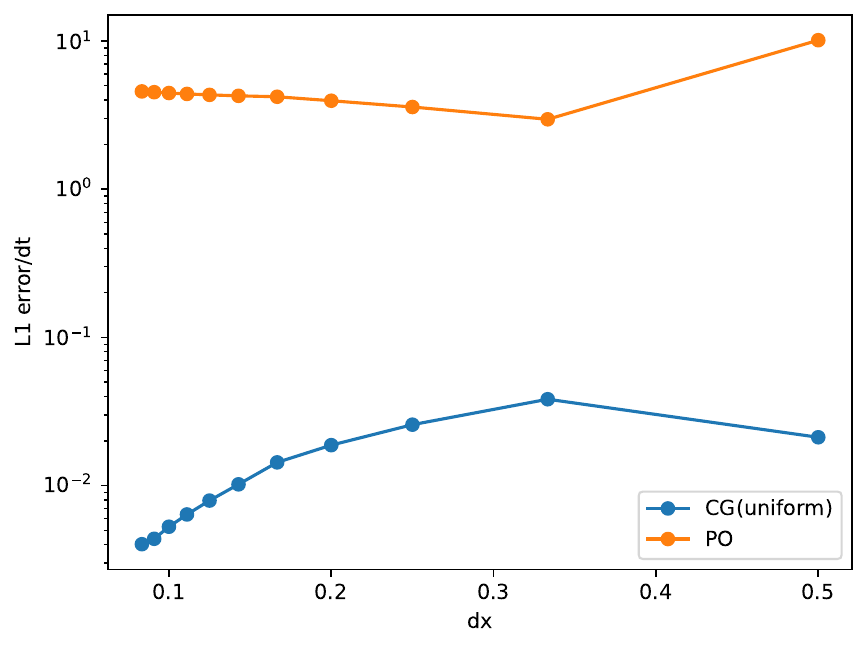}
    \caption{Error in approximating the stationary mean of the binned Hawkes process as a function of the spatial bin width $\Delta x$. The temporal bin width is fixed at $\Delta t=0.1$. The vertical axis is logarithmic. The error of the spatio-temporal coarse-grained Hawkes process decreases toward zero as $\Delta x$ decreases, whereas the error of the binned Poisson approximation remains nonzero.}
    \label{fig:1st-dx}
\end{figure}

\subsubsection{Stationary autocovariance}
We next evaluate the stationary autocovariance. To exclude the zero-lag diagonal terms, whose asymptotic order differs from that of the other covariance terms, we define
\[
\mathcal{A}
=\left\{
(p,i,j):p\in\mathbb{Z},\ i,j=1,\ldots,d,\ (p,i,j)\neq(0,i,i)
\right\}.
\]
We measure the approximation error by the normalized $L^2$ distance
\begin{equation}
\mathcal{E}_2^{(\blacklozenge)}(\Delta t,\Delta x)
=
\left[
\sum_{(p,i,j)\in\mathcal{A}}
\frac{
\left|
\operatorname{Cov}(X_{i,n}^{(\mathrm{hw})},X_{j,n+p}^{(\mathrm{hw})})
-
\operatorname{Cov}(X_{i,n}^{(\blacklozenge)},X_{j,n+p}^{(\blacklozenge)})
\right|^2
}{
(\Delta t)^3|S_i||S_j|
}
\right]^{1/2}.
\label{eq:numerical-second-error}
\end{equation}
The normalization in Eq.~\eqref{eq:numerical-second-error} is chosen according to the asymptotic order of the covariance errors established in Section~\ref{subsec.asymptotics}. In the continuum limit, this discrete measure can be interpreted as a discretized version of the Hilbert--Schmidt norm of the difference between the corresponding covariance kernels \cite{pigoli2014distances},
\begin{equation}
\left\|
 k^{(\mathrm{hw})}-k^{(\blacklozenge)}
\right\|_{\mathrm{HS}}
=
\left(
\int_{\mathbb{R}}
\int_A\int_A
\left[
 k^{(\mathrm{hw})}(t;\bm{x},\bm{y})
-
 k^{(\blacklozenge)}(t;\bm{x},\bm{y})
\right]^2
\,d\bm{x}\,d\bm{y}\,dt
\right)^{1/2}.
\label{eq:hs-covariance-error}
\end{equation}

Figure~\ref{fig:2nd-dt} shows the covariance errors as functions of $\Delta t$, with $\Delta x=0.2$. The errors decrease as the temporal bin width decreases, in agreement with the asymptotic results. Across the range considered, the errors of the spatio-temporal coarse-grained Hawkes process are approximately one to two orders of magnitude smaller than those of the binned Poisson approximation, even when $\Delta t$ is comparable to the characteristic time scale of the temporal excitation kernel. This demonstrates that explicitly incorporating intra-bin excitation can substantially improve the approximation beyond the small-bin regime.

Figure~\ref{fig:2nd-dx} shows the covariance errors as functions of $\Delta x$, with $\Delta t=0.1$. The errors of the spatio-temporal coarse-grained Hawkes process are again approximately one to two orders of magnitude smaller than those of the binned Poisson approximation. Overall, the covariance error of the proposed method is relatively insensitive to the spatial bin width over the range considered. We additionally consider a variant in which the within-bin spatial densities $p_j$ are chosen proportional to the stationary intensity, as in Eq.~\eqref{eq.prescribed-spatial-density}. This choice leads to a modest reduction in the approximation error for relatively large spatial bins.

\begin{figure}
    \centering
    \includegraphics[width=0.55\linewidth]{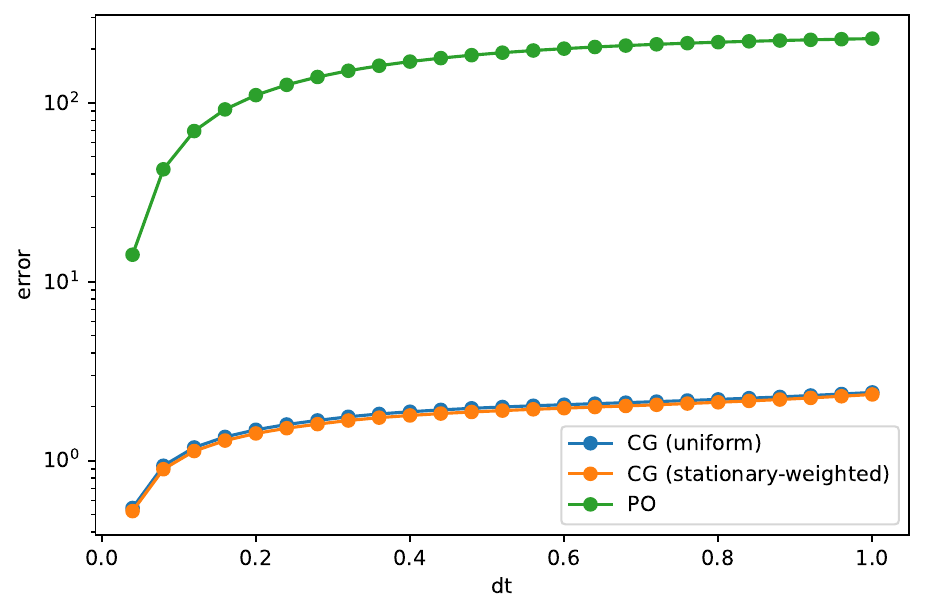}
    \caption{Error of the stationary autocovariance as a function of the temporal bin width $\Delta t$. The spatial bin width is fixed at $\Delta x=0.2$. The vertical axis is logarithmic. The error of the spatio-temporal coarse-grained Hawkes process is approximately one to two orders of magnitude smaller than that of the binned Poisson approximation over the range of $\Delta t$ considered.}
    \label{fig:2nd-dt}
\end{figure}

\begin{figure}
    \centering
    \includegraphics[width=0.55\linewidth]{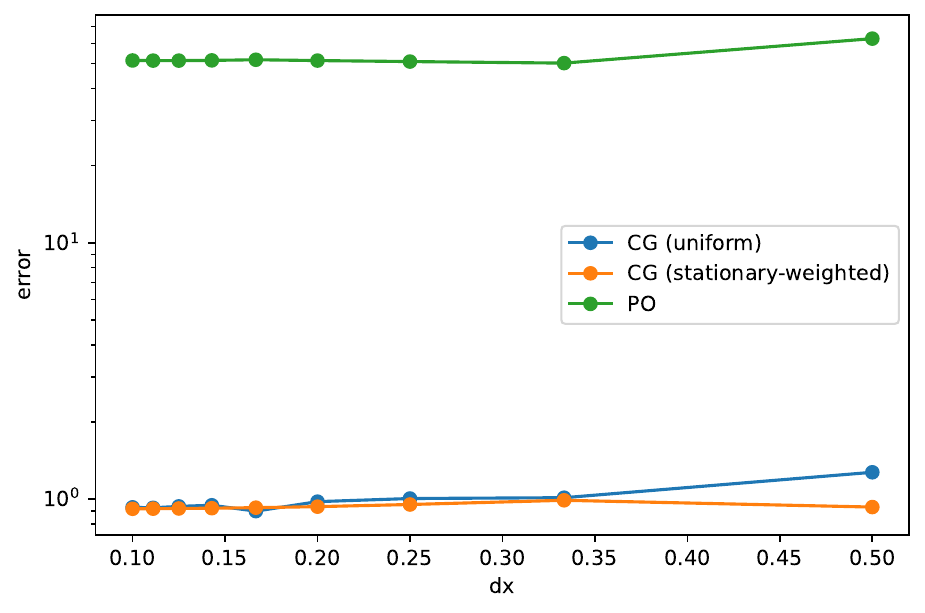}
    \caption{Error of the stationary autocovariance as a function of the spatial bin width $\Delta x$. The temporal bin width is fixed at $\Delta t=0.1$. The vertical axis is logarithmic. The error of the spatio-temporal coarse-grained Hawkes process is approximately one to two orders of magnitude smaller than that of the binned Poisson approximation over the range of $\Delta x$ considered.}
    \label{fig:2nd-dx}
\end{figure}

Overall, these numerical results support the asymptotic analysis and demonstrate that the spatio-temporal coarse-grained Hawkes process provides a substantially more accurate approximation to the first- and second-order properties of the spatially and temporally binned Hawkes process than the binned Poisson approximation over a broad range of temporal and spatial bin widths. In particular, the improvement is most pronounced for the second-order properties, for which accounting explicitly for intra-bin excitation substantially reduces the approximation error even when the temporal bin width is not small relative to the characteristic excitation time scale.

\section{Parameter estimation}\label{sec.estimation}

In this section, we develop a moment-based parameter estimation method for the spatio-temporal coarse-grained Hawkes process and evaluate its performance using earthquake data. We assume that the temporal and spatial excitation kernels, $\phi$ and $f$, are parameterized by a parameter vector $\bm{\theta}$. Accordingly, the coarse-grained temporal kernel and spatial excitation matrix are written as $\phi_k(\bm{\theta})$ and $F(\bm{\theta})$, respectively.

\subsection{Estimation procedure}\label{subsec.lossfunc}

Following the derivation in \cite{koyama2025coarse} and substituting the matrix-valued kernel $\Phi_k$ with $\phi_kF$, the centered count process $\{\bm{X}_n\}$ admits the following $\mathrm{AR}(\infty)$ representation:
\begin{equation}
    \bm{X}_n-\bm{\lambda} = \sum_{k=1}^{\infty}\phi_k(\bm{\theta})F^*(\bm{\theta})(\bm{X}_{n-k}-\bm{\lambda})+\Delta \bm{M}^*_n(\bm{\theta}),
\end{equation}
where $\bm{\lambda}$ is the stationary mean count vector,
\[
    F^*(\bm{\theta})
    =\left(I-\phi_0(\bm{\theta})F(\bm{\theta})\right)^{-1}F(\bm{\theta}),
\]
and 
\[
    \Delta \bm{M}_n^*(\bm{\theta})
    =\left(I-\phi_0(\bm{\theta})F(\bm{\theta})\right)^{-1}
    \Delta \bm{M}_n(\bm{\theta}).
\]
Because Definition~\ref{dfn.STCGHP} specifies only the first- and second-order conditional moment properties, the following objective function should not be interpreted as an exact likelihood. Instead, we construct a moment-based loss function from the $\mathrm{AR}(\infty)$ representation and the conditional moment properties in Eqs.~\eqref{eq.spatiotemporal-first} and \eqref{eq.spatiotemporal-second}:
\begin{align}
    \mathcal{L}(\bm{\theta})&=\sum_{k=1}^n \left( \bm{\Upsilon}_k-\sum_{l=1}^{k-1}\phi_{k-l}(\bm{\theta})F^*(\bm{\theta})\bm{\Upsilon}_l\right)^{T}\Lambda^*(\bm{\theta})^{-1}\left( \bm{\Upsilon}_k-\sum_{l=1}^{k-1}\phi_{k-l}(\bm{\theta})F^*(\bm{\theta})\bm{\Upsilon}_l\right) \nonumber\\
    &\quad +n\log \det\Lambda^*(\bm{\theta}),
\end{align}
where $\bm{\Upsilon}_k=\bm{X}_k-\bm{\lambda}$ and
\[
 \Lambda^*(\bm{\theta})=(I-\phi_0(\bm{\theta})F(\bm{\theta}))^{-1}\mathrm{diag}(\bm{\lambda})((I-\phi_0(\bm{\theta})F(\bm{\theta}))^{-1})^T.
\]
This objective can be interpreted as a discrete, moment-based analogue of a least-squares criterion for the conditional intensity. In particular, it is partly related asymptotically to the weighted least-squares loss for point-process intensities \cite{cheng2025deep} in the limit $\Delta t\to0$ and $|S|\to0$.

In practice, the stationary mean $\bm{\lambda}$ is unknown and is replaced by the empirical mean
\[
\hat{\bm{\lambda}}
=
\frac{1}{n}\sum_{k=1}^n\bm{X}_k.
\]
After substituting the empirical mean and omitting terms that are independent of $\bm{\theta}$, the objective function can be written as
\begin{align}
    \mathcal{L}(\bm{\theta})
    &=
    \sum_{k=1}^n
    \left(
    \bm{\Upsilon}_k
    -
    \sum_{l=1}^{k}
    \phi_{k-l}(\bm{\theta})
    F(\bm{\theta})
    \bm{\Upsilon}_l
    \right)^{T}
    \mathrm{diag}(\hat{\bm{\lambda}})^{-1}
    \left(
    \bm{\Upsilon}_k
    -
    \sum_{l=1}^{k}
    \phi_{k-l}(\bm{\theta})
    F(\bm{\theta})
    \bm{\Upsilon}_l
    \right)
    \nonumber\\
    &\quad
    -2n\log
    \det\left(
    I-\phi_0(\bm{\theta})F(\bm{\theta})
    \right),
\end{align}
where $\bm{\Upsilon}_k = \bm{X}_k-\hat{\bm{\lambda}}$ in the empirical implementation. The inclusion of the $l=k$ term accounts for excitation occurring within the current temporal bin through the intra-bin coefficient $\phi_0$. The second term acts as a regularization term that prevents the objective from being minimized by a degenerate solution 
$\phi_k(\bm{\theta})F(\bm{\theta}) = \delta_{k0}I$.

The parameter-estimation procedure is summarized in Algorithm~\ref{algorithm}.

\begin{algorithm}[H]
    \caption{Procedure for parameter estimation}
    \label{algorithm}
    \begin{algorithmic}[1]
    \STATE From the count data $\{\bm{X}_1,\ldots,\bm{X}_n\}$, compute the empirical mean
    $\hat{\bm{\lambda}}=n^{-1}\sum_{k=1}^n\bm{X}_k$ and set
    $\bm{\Upsilon}_k=\bm{X}_k-\hat{\bm{\lambda}}$.
    \STATE For a given parameter vector $\bm{\theta}$, calculate $\phi_k(\bm{\theta})$ and $F(\bm{\theta})$, and evaluate $\mathcal{L}(\bm{\theta})$.
    \STATE Minimize $\mathcal{L}(\bm{\theta})$ over the parameter space and obtain the estimate $\hat{\bm{\theta}}$.
    \end{algorithmic}
\end{algorithm}

\subsection{Application to earthquake forecasting}
\label{subsec.earthquake}

Earthquake occurrence is a natural application of spatio-temporal Hawkes processes. In this subsection, we formulate an ETAS-type model within the proposed coarse-grained framework and evaluate its forecasting performance using aggregated earthquake counts.

The Epidemic-Type Aftershock Sequence (ETAS) model \cite{ogata1988statistical,ogata1998space} is a spatio-temporal Hawkes model for earthquake occurrence in which the excitation generated by an earthquake depends on its magnitude. Its conditional intensity can be written as
\begin{equation}
    \lambda(t,\bm{x}\mid\mathcal{H}_t)
    =\mu(\bm{x})
    +\sum_{t_i<t}
    g(t-t_i,\bm{x}-\bm{x}_i,M_i)\bm{1}_{M_i\geq M_0},
\end{equation}
where $\{(t_i,\bm{x}_i,M_i)\}_i$ denotes the sequence of earthquake occurrence times, locations, and magnitudes, and $M_0$ is the magnitude threshold.

Various parametric forms have been proposed based on empirical laws in seismology. Following \cite{stockman2026earthquakenpp}, we adopt the intensity function
\begin{equation}
    \lambda(t,\bm{x}\mid\mathcal{H}_t)
    =\mu
    +\sum_{t_i<t}
    \frac{
    e^{-(t-t_i)/\tau}\,k\,e^{a(M_i-M_0)}
    }{
    (t-t_i+c)^{1+\omega}
    \left(\|\bm{x}-\bm{x}_i\|^2+d\,e^{\gamma(M_i-M_0)}\right)^{1+\rho}
    }
    \bm{1}_{M_i\geq M_0},
    \label{eq.ETASintensity}
\end{equation}
with parameters $\bm{\theta} = (\mu,k,a,c,\omega,\tau,d,\gamma,\rho)$.

Following the benchmark of \cite{stockman2026earthquakenpp}, we apply the spatio-temporal coarse-grained Hawkes process to earthquake forecasting based on Eq.~\eqref{eq.ETASintensity}. We compare its forecasting performance with that of the ETAS model, Neural Spatio-Temporal Point Process (NSTPP) \cite{chen2021neural}, Deep Spatio-Temporal Point Process (DeepSTPP) \cite{zhou2022neural}, and Automatic Integration for Spatiotemporal Neural Point Processes (AutoSTPP) \cite{zhou2023automatic}.

The purpose of this experiment is twofold: to assess how much forecasting accuracy can be retained when only aggregated count data are available for parameter estimation, and to examine whether aggregation provides practical computational savings. The benchmark methods use complete earthquake data, including occurrence times and locations, for training. In contrast, our method uses only counts aggregated over spatio-temporal bins; earthquake magnitudes are not observed at the aggregated level and are therefore modeled according to the Gutenberg--Richter distribution,
\[
    p(M)=\beta e^{-\beta(M-M_0)}\bm{1}_{M\geq M_0},
\]
where $\beta$ is first estimated empirically from the training data.
The temporal and spatial coarse-grained kernels are then obtained as
\begin{equation}
\begin{aligned}
    \phi_{n-k}
    ={}&\frac{1}{\Delta t}
    \int_{(n-1)\Delta t}^{n\Delta t}
    \int_{(k-1)\Delta t}^{k\Delta t}
    \frac{e^{-(t-s)/\tau}}{(t-s+c)^{1+\omega}}
    \bm{1}_{t\geq s}\,ds\,dt,
\end{aligned}
\end{equation}
and
\begin{align}
    f_{ij}
    ={}&\frac{1}{|S_j|}
    \int_{S_i}\int_{S_j}
    \mathbb{E}_{M\sim p(M)}
    \left[
    k\frac{e^{a(M-M_0)}}
    {\left(\|\bm{x}-\bm{y}\|^2+d\,e^{\gamma(M-M_0)}\right)^{1+\rho}}
    \right]
    d\bm{y}\,d\bm{x}.
\end{align}
We use the uniform distribution within each spatial bin for $p_j$. This choice is consistent with the stationary-intensity-based distribution discussed in Section~\ref{subsec.asymptotics}, because the background intensity is spatially homogeneous and the spatial kernel is normalized and shift-invariant under periodic boundary conditions.

We use earthquake data from two regions in California in the QTM catalog collected by SCSN and reanalyzed by \cite{ross2019QTM}. The data are divided into training and test periods. We evaluate the models using test log-likelihood scores, with parameters estimated from the training period, and by measuring the training time.

Although our parameters are estimated using only aggregated count data, the estimated parameters are subsequently substituted into the continuous-time ETAS intensity, and the test log-likelihood is evaluated on the same complete test catalog used for the benchmark methods. Thus, the forecasting evaluation is performed on event-level data even though our parameter estimation uses only aggregated observations.

\begin{table}
    \centering
    \begin{tabular}{cccccc} \hline
        \textbf{Region} &\textbf{Ours} & \textbf{ETAS}  & \textbf{NSTPP} & \textbf{DeepSTPP} & \textbf{AutoSTPP} \\ \hline
        Salton Sea & $-2.745$ & $0.016$ & $-3.579$ & $-1.875$ & $-1.285$ \\ \hline
        San Jacinto & $-5.287$ & $-4.265$ & $-7.078$ & $-5.848$ & $-5.334$ \\ \hline
    \end{tabular}
    \caption{Test log-likelihood scores for the spatio-temporal coarse-grained Hawkes process, the ETAS model, and existing neural point-process models. Scores for the benchmark methods are taken from \cite{stockman2026earthquakenpp}. Our method is estimated using only aggregated count data, whereas the benchmark methods use complete event-time and location data for parameter estimation.}
    \label{tab:loglikelihood}
\end{table}

Table~\ref{tab:loglikelihood} summarizes the test log-likelihood scores. The spatio-temporal coarse-grained Hawkes process yields lower test log-likelihood scores than the ETAS model, as expected from the information loss caused by coarse-graining. Nevertheless, its scores are comparable to or better than those of the neural point-process models in these two regions, despite using substantially coarser information for parameter estimation.

Table~\ref{tab:trainingtime} reports the training times and the numbers of training events. Compared with ETAS, aggregation substantially reduces the effective data size for parameter estimation, so the computational cost of our method is governed primarily by the numbers of temporal and spatial bins rather than by the total number of observed events. Compared with the three neural point-process models, our method achieves competitive forecasting performance with substantially lower computational requirements. Although the computational environments differ across methods, these results demonstrate a practical trade-off among the information required for estimation, forecasting performance, and computational cost.

\begin{table}
    \centering
    \begin{tabular}{ccccccc} \hline
        \textbf{Region} & \textbf{Training Events} &\textbf{Ours} & \textbf{ETAS}  & \textbf{NSTPP} & \textbf{DeepSTPP} & \textbf{AutoSTPP} \\ \hline
        Salton Sea & $44,042$ & 00:00:44 & 07:28:28 & 2days, 00:26:45 & 00:26:46 & 01:45:34 \\ \hline
        San Jacinto & $18,664$ & 00:19:50 & 00:32:40 & 1days, 22:06:33 & 00:09:31 & 00:37:03 \\ \hline
    \end{tabular}
    \caption{Training times for parameter estimation. Our method was run on a Mac Pro (MacPro7,1) with a 16-core Intel Xeon W CPU at 3.2 GHz and 48 GB of memory. The benchmark methods were run on an HPC system equipped with four NVIDIA A100 40 GB SXM GPUs and an AMD EPYC 7543P 32-core processor, as reported in \cite{stockman2026earthquakenpp}.}
    \label{tab:trainingtime}
\end{table}

\section{Discussion}

This study introduced a class of discrete-time stochastic processes for spatio-temporally aggregated event-count data that retain the self-exciting behavior of continuous-time Hawkes processes. We defined this class by coarse-graining continuous-time Hawkes processes and requiring the resulting residual process to satisfy the first- and second-order conditional moment properties inherited from the martingale increments.

We evaluated the approximation errors of the stationary mean and autocovariance relative to those of the corresponding binned Hawkes process in the joint limit $\Delta t\to0$ and $|S|\to0$. The resulting asymptotic analysis shows that the proposed model has smaller discretization errors than the binned Poisson approximation. Numerical experiments further demonstrate that these errors can be one to two orders of magnitude smaller than those of the binned Poisson approximation, even for relatively coarse temporal and spatial discretizations.

We also applied the proposed framework to earthquake occurrence data modeled by the ETAS model, a continuous-time spatio-temporal Hawkes process. Although our method did not match the ETAS model in test log-likelihood, which is expected because parameter estimation was based only on aggregated occurrence data, it achieved predictive performance comparable to that of existing neural point-process models. At the same time, the proposed method required substantially less computational time in the experiments. These results suggest that the proposed framework can provide a practical alternative when event-level temporal and spatial information is unavailable or too costly to process.

The proposed framework has advantages over the binned Poisson approximation from both temporal and spatial perspectives. Temporally, it explicitly incorporates excitation among events occurring within the same temporal bin through the intra-bin excitation coefficient $\phi_0$. This effect is particularly important for temporal kernels with substantial mass near zero, such as exponential kernels, for which an event can strongly influence subsequent events within the same temporal bin. Spatially, the framework represents event locations within each spatial bin through a prescribed distribution. This allows available spatial information to be incorporated into the coarse-grained model and accommodates spatial bins with different sizes and shapes. Moreover, because the spatial excitation kernel is defined through an integral over each bin, the total spatial excitation is preserved under the normalization conditions imposed on the kernel. The numerical experiments indicate that these features make the proposed model relatively robust to coarse spatio-temporal aggregation, which is particularly useful when fine-scale event data are unavailable.

From a computational perspective, the proposed method provides a direct way to control computational cost through the temporal and spatial resolutions. A single evaluation of the loss function has computational complexity $O\left(n^2d+nd^2+d^3\right)$, where $n$ and $d$ denote the numbers of temporal and spatial bins, respectively. Importantly, this complexity depends on the number of bins rather than on the total number of events. Consequently, aggregation can substantially reduce the computational burden when the original event catalog is large.

This study has several limitations. First, the proposed process is defined only through its first- and second-order conditional moment properties rather than through a complete probability law. Consequently, likelihood-based inference and direct simulation of the process require additional distributional assumptions. Second, the theoretical properties of the proposed parameter estimator remain to be established. In particular, future work should investigate the consistency and asymptotic distribution of the estimator, as well as its dynamical and dependence properties. It would also be valuable to develop explicit distributional specifications for the coarse-grained process and to extend the framework to nonlinear and nonstationary Hawkes processes and to other forms of aggregated event data.

\begin{appendices}
\section{Proofs}
\subsection{Proofs of the first-order properties}
\begin{lem}\label{lem.weighttedsum-Fn}
    For every $n=0,1,2,\ldots$ and $i=1,\ldots,d$,
    \begin{equation}
        \sum_{j=1}^d|S_j|[F^n]_{ij}
        \leq\|f\|_\infty|A||S_i|.
    \end{equation}
    Moreover, for every $n=1,2,\ldots$ and $i,j=1,\ldots,d$,
    \begin{equation}
        [F^n]_{ij}\leq\|f\|_\infty|S_i|.
    \end{equation}
\end{lem}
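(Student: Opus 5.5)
The plan is to prove the second (entrywise) inequality first and then derive the first from it. The ingredients are the definition of the spatial coarse-grained kernel in Eq.~\eqref{eq.spatial-cgkernel}, the normalization of the within-bin densities in Eq.~\eqref{eq.normalization-spatial-density}, and the column-sum identity Eq.~\eqref{eq.spatial-cg-normalization}. Throughout, $\|f\|_\infty=\sup_{\bm{x},\bm{y}\in A}f(\bm{x},\bm{y})$, which is finite. To see this, fix any $\bm{y}_0\in A$. Since $\int_A f(\bm{x},\bm{y}_0)\,d\bm{x}=1$ and $A$ is bounded, the Lipschitz condition Eq.~\eqref{eq.lipschitz-f} gives $f(\bm{x},\bm{y})\le f(\bm{x},\bm{y}_0)+L_y\,\mathrm{diam}(A)$. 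The $\bm{x}$-dependence of $f(\cdot,\bm{y}_0)$ is not controlled by that condition, so I would simply state boundedness of $f$ as a standing assumption implicit in the statement. I expect this to be the only delicate point.

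\emph{Step 1 (case $n=1$).} For every $i,l$, using $f\le\|f\|_\infty$ and $\int_{S_l}p_l=1$,
\[
f_{il}=\int_{S_i}\int_{S_l}f(\bm{x},\bm{y})p_l(\bm{y})\,d\bm{y}\,d\bm{x}\le\|f\|_\infty\int_{S_i}\int_{S_l}p_l(\bm{y})\,d\bm{y}\,d\bm{x}=\|f\|_\infty|S_i|.
\]

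\emph{Step 2 (induction on $n\ge 2$).} By Eq.~\eqref{eq.spatial-cg-normalization}, $F$ is nonnegative with unit column sums, i.e.\ column-stochastic. Hence every power $F^{m}$ is also column-stochastic, including $F^0=I$. Writing $[F^n]_{ij}=\sum_k F_{ik}[F^{n-1}]_{kj}$ and applying Step 1 to $F_{ik}$ gives
\[
[F^n]_{ij}\le\|f\|_\infty|S_i|\sum_{k=1}^d[F^{n-1}]_{kj}=\|f\|_\infty|S_i|.
\]
This proves the second inequality for all $n\ge1$.

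\emph{Step 3 (first inequality).} For $n\ge1$, the entrywise bound and $\sum_j|S_j|=|A|$ (the $S_j$ partition $A$) give
\[
\sum_j|S_j|[F^n]_{ij}\le\|f\|_\infty|S_i|\sum_j|S_j|=\|f\|_\infty|A||S_i|.
\]
For $n=0$ the left side equals $|S_i|$. It therefore suffices to show $\|f\|_\infty|A|\ge1$, which follows from Eq.~\eqref{eq.normalization-1}:
\[
1=\int_A f(\bm{x},\bm{y})\,d\bm{x}\le\|f\|_\infty|A|.
\]

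Every step is elementary. Apart from the boundedness of $f$ noted above, the only thing to watch is treating $n=0$ separately, since the entrywise bound fails for $F^0=I$.
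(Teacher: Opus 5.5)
Your proof is correct and follows the paper's argument step for step: the entrywise bound $f_{il}\le\|f\|_\infty|S_i|$, its propagation to $F^n$ through the unit column sums of $F^{n-1}$, summation using $\sum_j|S_j|=|A|$, and the separate case $n=0$ via $\|f\|_\infty|A|\ge 1$. The one cosmetic difference is that you get $\|f\|_\infty|A|\ge1$ from Eq.~\eqref{eq.normalization-1}, whereas the paper uses Eq.~\eqref{eq.normalization-2}. Your worry about finiteness of $\|f\|_\infty$ can be removed rather than assumed. By Eq.~\eqref{eq.normalization-2}, for each $\bm{x}$ there is some $\bm{y}'\in A$ with $f(\bm{x},\bm{y}')\le 1/|A|$. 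The Lipschitz condition Eq.~\eqref{eq.lipschitz-f} then gives $f(\bm{x},\bm{y})\le 1/|A|+L_y\,\mathrm{diam}(A)$ uniformly in $\bm{x}$ and $\bm{y}$.
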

\begin{proof}
    By Eq.~\eqref{eq.normalization-2}, for every $\bm{x}\in A$,
    \[
        1=\int_Af(\bm{x},\bm{y})d\bm{y}\leq\|f\|_\infty|A|.
    \]
    Therefore, for $n=0$, since $F^0=I$, we have
    \[
        \sum_{j=1}^d|S_j|[F^0]_{ij}=|S_i|\leq\|f\|_\infty|A||S_i|.
    \]
    For $n=1$, since $[F^1]_{ij}=f_{ij}$, we have
    \begin{align*}
        [F^1]_{ij}
        &=\int_{S_i}\int_{S_j}f(\bm{x},\bm{y})p_j(\bm{y})d\bm{y}d\bm{x}
        \leq\|f\|_\infty|S_i|.
    \end{align*}
    Therefore,
    \begin{align*}
        \sum_{j=1}^d|S_j|[F^1]_{ij}
        \leq\|f\|_\infty|S_i|\sum_{j=1}^d|S_j|
        =\|f\|_\infty|A||S_i|.
    \end{align*}
    Applying Eq.~\eqref{eq.spatial-cg-normalization} repeatedly yields, for every $n\geq1$ and $j=1,\ldots,d$,
    \begin{align*}
        \sum_{l=1}^d[F^n]_{lj}
        &=\sum_{m=1}^d\left(\sum_{l=1}^df_{lm}\right)[F^{n-1}]_{mj}
        =\sum_{m=1}^d[F^{n-1}]_{mj}
        =\cdots=1.
    \end{align*}
    Hence, for every $n\geq2$,
    \begin{align*}
        [F^n]_{ij}
        &=\sum_{l=1}^df_{il}[F^{n-1}]_{lj}
        \leq\|f\|_\infty|S_i|\sum_{l=1}^d[F^{n-1}]_{lj}
        =\|f\|_\infty|S_i|.
    \end{align*}
    Consequently,
    \begin{align*}
        \sum_{j=1}^d|S_j|[F^n]_{ij}
        \leq\|f\|_\infty|S_i|\sum_{j=1}^d|S_j|
        =\|f\|_\infty|A||S_i|
    \end{align*}
    for every $n\geq1$, which completes the proof.
\end{proof}
\begin{lem}\label{lem.expectation-convolution}
    For each $n=0,1,2,\ldots$ and $i=1,\ldots,d$,
    \begin{equation}
        \sum_{j=1}^d[F^n]_{ij}\mu_j
        =\int_{S_i}(f^{(*n)}*\mu)(\bm{x})d\bm{x}
        +O(n|S|^{1/q}|S_i|),
    \end{equation}
    where the constant in the $O$ term is independent of $n$, $i$, and the spatial partition.
\end{lem}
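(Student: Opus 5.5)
I will argue by induction on $n$, comparing the discrete iteration $\bm{v}^{(n)}=F^n\bm{\mu}$ with the continuous iteration $g_n=f^{(*n)}*\mu$ integrated over bins. Write $G_{n,i}=\int_{S_i}g_n(\bm{x})d\bm{x}$ and $E_{n,i}=[F^n\bm{\mu}]_i-G_{n,i}$. The goal is $|E_{n,i}|\le Cn|S|^{1/q}|S_i|$ with $C$ independent of $n$, $i$ and the partition. For $n=0$ we have $F^0=I$ and $f^{(*0)}=\delta$, so $[F^0\bm{\mu}]_i=\mu_i=\int_{S_i}\mu$ and $E_{0,i}=0$.

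For the inductive step, use $g_{n}(\bm{x})=\int_A f(\bm{x},\bm{y})g_{n-1}(\bm{y})d\bm{y}$ (the order of the convolutions is immaterial here, since $f^{(*n)}*\mu=f*(f^{(*(n-1)})*\mu)$ by associativity). Split $A$ into the bins $S_j$ and write
\[
G_{n,i}=\sum_{j=1}^d\int_{S_i}\int_{S_j}f(\bm{x},\bm{y})g_{n-1}(\bm{y})d\bm{y}d\bm{x}.
\]
Compare this with
\[
\sum_j f_{ij}G_{n-1,j}=\sum_j\int_{S_i}\int_{S_j}f(\bm{x},\bm{y}')p_j(\bm{y}')d\bm{y}'d\bm{x}\int_{S_j}g_{n-1}(\bm{y})d\bm{y}.
\]
Since $p_j$ integrates to one on $S_j$, the difference of the two expressions is
\[
\sum_j\int_{S_i}\int_{S_j}\int_{S_j}p_j(\bm{y}')\bigl(f(\bm{x},\bm{y})-f(\bm{x},\bm{y}')\bigr)g_{n-1}(\bm{y})d\bm{y}'d\bm{y}d\bm{x}.
\]
By the Lipschitz condition \eqref{eq.lipschitz-f} and the regularity condition \eqref{eq.regularity}, the inner difference is at most $L_yC_r|S_j|^{1/q}\le L_yC_r|S|^{1/q}$. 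Note that $g_{n-1}\ge0$, and $\int_A g_{n-1}=\int_A\mu$ by \eqref{eq.normalization-1}. This one-step discrepancy is therefore bounded by $L_yC_r|S|^{1/q}|S_i|\int_A\mu$, uniformly in $n$.

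Then
\[
E_{n,i}=\sum_j f_{ij}E_{n-1,j}+(\text{one-step discrepancy}).
\]
For the propagated term, the inductive hypothesis gives $|E_{n-1,j}|\le C(n-1)|S|^{1/q}|S_j|$. Lemma~\ref{lem.weighttedsum-Fn} with $n=1$ then gives
\[
\sum_j f_{ij}|S_j|\le\|f\|_\infty|A||S_i|.
\]
This inflates the constant by the factor $\|f\|_\infty|A|\ge1$ at each step, which would produce geometric rather than linear growth in $n$.

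This is the main obstacle. To avoid it, I will not propagate the error recursively. Instead I will telescope:
\[
F^n\bm{\mu}-\bm{G}_n=\sum_{m=1}^{n}F^{n-m}\bigl(F\bm{G}_{m-1}-\bm{G}_m\bigr).
\]
Each bracket $\bm{r}_m$ satisfies $|r_{m,j}|\le K|S|^{1/q}|S_j|$ with $K=L_yC_r\int_A\mu$, by the one-step estimate above. Hence
\[
\bigl|[F^{n-m}\bm{r}_m]_i\bigr|\le K|S|^{1/q}\sum_j[F^{n-m}]_{ij}|S_j|\le K\|f\|_\infty|A|\,|S|^{1/q}|S_i|,
\]
where the last step uses Lemma~\ref{lem.weighttedsum-Fn}, valid for every power $n-m\ge0$. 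Summing the $n$ terms gives the bound $n\,K\|f\|_\infty|A|\,|S|^{1/q}|S_i|$, whose constant is independent of $n$, $i$ and the partition, as required.

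The remaining checks are routine. Boundedness of $\mu$ makes $\int_A\mu$ finite, and the measurability and integrability of $g_{m}$ follow from the boundedness of $f$ and $\mu$.
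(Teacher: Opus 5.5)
Your proof is correct and is essentially the paper's argument: after reindexing $m\mapsto n-m+1$, the expansion the paper obtains by ``repeating this calculation'' is exactly your telescoping identity $F^n\bm{\mu}-\bm{G}_n=\sum_{m=1}^{n}F^{n-m}(F\bm{G}_{m-1}-\bm{G}_m)$. Each term is bounded the same way, via the Lipschitz/regularity estimate, mass preservation $\int_A g_{m}=\int_A\mu$, and Lemma~\ref{lem.weighttedsum-Fn} applied directly to $F^{n-m}$, which gives the same constant $C_rL_y\|f\|_\infty|A|\|\mu\|_1$; writing the telescoping out explicitly, and noting why recursive propagation would fail, simply makes rigorous the step the paper leaves informal.
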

\begin{proof}
    For $n=0$, the assertion follows immediately from $F^0=I$ and $f^{(*0)}*\mu=\mu$. For $n=1$, we have
    \begin{align*}
        \sum_{j=1}^d[F^1]_{ij}\mu_j
        &=\sum_{j=1}^d\int_{S_i}\int_{S_j}f(\bm{x},\bm{y})p_j(\bm{y})d\bm{y}d\bm{x}\int_{S_j}\mu(\bm{z})d\bm{z}\\
        &=\sum_{j=1}^d\int_{S_i}\int_{S_j}f(\bm{x},\bm{y})\mu(\bm{y})d\bm{y}d\bm{x}\\
        &\quad-\sum_{j=1}^d\int_{S_i}\int_{S_j}\left(f(\bm{x},\bm{y})-\int_{S_j}f(\bm{x},\bm{z})p_j(\bm{z})d\bm{z}\right)\mu(\bm{y})d\bm{y}d\bm{x}\\
        &=\int_{S_i}(f*\mu)(\bm{x})d\bm{x}
        -\sum_{j=1}^d\int_{S_i}\int_{S_j}\left(f(\bm{x},\bm{y})-\mathbb{E}_{P_j}[f(\bm{x},\bm{Y})]\right)\mu(\bm{y})d\bm{y}d\bm{x},
    \end{align*}
    where $P_j$ denotes the distribution with density $p_j$. Similarly, for $n=2$,
    \begin{align*}
        \sum_{j=1}^d[F^2]_{ij}\mu_j
        &=\sum_{j=1}^d\left(\sum_{k=1}^df_{ik}f_{kj}\right)\mu_j
        =\sum_{k=1}^df_{ik}\left(\sum_{j=1}^df_{kj}\mu_j\right)\\
        &=\sum_{k=1}^d\int_{S_i}\int_{S_k}f(\bm{x},\bm{y})p_k(\bm{y})d\bm{y}d\bm{x}\int_{S_k}(f*\mu)(\bm{y})d\bm{y}\\
        &\quad-\sum_{k=1}^d\sum_{j=1}^df_{ik}\int_{S_k}\int_{S_j}\left(f(\bm{y},\bm{z})-\mathbb{E}_{P_j}[f(\bm{y},\bm{Z})]\right)\mu(\bm{z})d\bm{z}d\bm{y}\\
        &=\int_{S_i}(f^{(*2)}*\mu)(\bm{x})d\bm{x}\\
        &\quad-\sum_{k=1}^d\int_{S_i}\int_{S_k}\left(f(\bm{x},\bm{y})-\mathbb{E}_{P_k}[f(\bm{x},\bm{Y})]\right)(f*\mu)(\bm{y})d\bm{y}d\bm{x}\\
        &\quad-\sum_{k=1}^d\sum_{j=1}^d[F^1]_{ik}\int_{S_k}\int_{S_j}\left(f(\bm{y},\bm{z})-\mathbb{E}_{P_j}[f(\bm{y},\bm{Z})]\right)\mu(\bm{z})d\bm{z}d\bm{y}.
    \end{align*}
    Repeating this calculation gives
    \begin{align*}
        \sum_{j=1}^d[F^n]_{ij}\mu_j
        &=\int_{S_i}(f^{(*n)}*\mu)(\bm{x})d\bm{x}\\
        &\quad-\sum_{m=1}^n\sum_{k=1}^d\sum_{j=1}^d[F^{m-1}]_{ik}
        \int_{S_k}\int_{S_j}\left(f(\bm{y},\bm{z})-\mathbb{E}_{P_j}[f(\bm{y},\bm{Z})]\right)\\
        &\qquad\times(f^{*(n-m)}*\mu)(\bm{z})d\bm{z}d\bm{y}.
    \end{align*}
    By the Lipschitz continuity of $f$ with respect to its second argument and the regularity condition in Eq.~\eqref{eq.regularity}, there exists a constant $C_r>0$ such that
    \begin{align*}
        \left|f(\bm{y},\bm{z})-\mathbb{E}_{P_j}[f(\bm{y},\bm{Z})]\right|
        &\leq\mathbb{E}_{P_j}\left[\left|f(\bm{y},\bm{z})-f(\bm{y},\bm{Z})\right|\right]\\
        &\leq L_y\mathbb{E}_{P_j}\left[\|\bm{z}-\bm{Z}\|\right]
        \leq C_rL_y|S|^{1/q}.
    \end{align*}
    Moreover, Eq.~\eqref{eq.normalization-1} implies that, for $n-m\geq1$,
    \begin{align*}
        \sum_{j=1}^d\int_{S_j}(f^{*(n-m)}*\mu)(\bm{z})d\bm{z}
        &=\int_A\int_Af(\bm{z},\bm{w})(f^{*(n-m-1)}*\mu)(\bm{w})d\bm{w}d\bm{z}\\
        &=\int_A(f^{*(n-m-1)}*\mu)(\bm{w})d\bm{w}
        =\cdots=\int_A\mu(\bm{z})d\bm{z}.
    \end{align*}
    The same equality is immediate when $n-m=0$. Therefore, the absolute value of the error term is bounded by
    \begin{align*}
        &\sum_{m=1}^n\sum_{k=1}^d\sum_{j=1}^d[F^{m-1}]_{ik}
        \int_{S_k}\int_{S_j}\left|f(\bm{y},\bm{z})-\mathbb{E}_{P_j}[f(\bm{y},\bm{Z})]\right|\\
        &\qquad\times(f^{*(n-m)}*\mu)(\bm{z})d\bm{z}d\bm{y}\\
        &\quad\leq C_rL_y|S|^{1/q}\sum_{m=1}^n\sum_{k=1}^d\sum_{j=1}^d[F^{m-1}]_{ik}
        \int_{S_k}\int_{S_j}(f^{*(n-m)}*\mu)(\bm{z})d\bm{z}d\bm{y}\\
        &\quad=C_rL_y|S|^{1/q}\sum_{m=1}^n\sum_{k=1}^d[F^{m-1}]_{ik}|S_k|\|\mu\|_{1}\\
        &\quad\leq nC_rL_y\|f\|_\infty|A|\|\mu\|_{1}|S|^{1/q}|S_i|,
    \end{align*}
    where the last inequality follows from Lemma~\ref{lem.weighttedsum-Fn}. Hence, the assertion follows.
\end{proof}

\subsection{Proofs of the second-order properties}\label{appendix.2nd-order}
In this subsection, we first establish several lemmas needed for the proof of Theorem~\ref{thm.cov} and then present the proof of the theorem.

\begin{lem}\label{lem.phi-convolution}
    For $l=0,1,2,\ldots$ and $i=0,1,2,\ldots$,
    \begin{align}
        &\left|\dfrac{1}{\Delta t}\int_{(n-1)\Delta t}^{n\Delta t}
        \int_{(n-i-1)\Delta t}^{(n-i)\Delta t}
        \phi^{(*l)}(t-s)dsdt-(\phi^{(*l)})_i\right|\nonumber\\
        &\quad\leq
        \begin{cases}
            0, & l=0,1,\\
            Cl\alpha^{l-2}(\Delta t)^2, & l\geq2,
        \end{cases}
    \end{align}
    where $C>0$ is independent of $l$, $i$, and $\Delta t$. Moreover, when $i=0$ and $l\geq2$, the left-hand side is also $O((\Delta t)^l)$.
\end{lem}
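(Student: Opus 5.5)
The plan is to treat the bin-averaging in the statement as a linear functional, show it agrees exactly with the discrete kernel for $l=0,1$, and obtain the bound for $l\ge2$ by induction on $l$ from a one-step commutator estimate.

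\textbf{Setup.} For a function $g$ vanishing on $(-\infty,0)$ and an integer $i\ge0$, set
\[
T_i[g]=\frac{1}{\Delta t}\int_{(n-1)\Delta t}^{n\Delta t}\int_{(n-i-1)\Delta t}^{(n-i)\Delta t}g(t-s)\,ds\,dt .
\]
This does not depend on $n$. It equals $\int g(u)\Lambda_i(u)\,du$, where $\Lambda_i(u)=\max\{0,1-|u-i\Delta t|/\Delta t\}$ is a tent function. For $i=0$ only the half-tent on $[0,\Delta t]$ contributes.

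\textbf{Exact cases $l=0,1$.} By Eqs.~\eqref{eq.temporal-cgkernel-1} and \eqref{eq.temporal-cgkernel-0}, $T_i[\phi]=\phi_i$, so the case $l=1$ is an identity. For $l=0$ I will read $\phi^{(*0)}$ as the Dirac mass at $0$ with the same convention used for $\phi_0$, which gives $T_i[\delta]=\delta_{i0}=(\phi^{(*0)})_i$.

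\textbf{Induction for $l\ge2$.} Write $e_l=\sup_i|T_i[\phi^{(*l)}]-(\phi^{(*l)})_i|$ and split
\begin{align*}
T_i[\phi^{(*l)}]-(\phi^{(*l)})_i
&=\Bigl(T_i[\phi^{(*(l-1))}*\phi]-\textstyle\sum_{j}T_j[\phi^{(*(l-1))}]\,\phi_{i-j}\Bigr)\\
&\quad+\textstyle\sum_j\bigl(T_j[\phi^{(*(l-1))}]-(\phi^{(*(l-1))})_j\bigr)\phi_{i-j}.
\end{align*}
By Eq.~\eqref{eq.cg-branching-ratio} the second line is at most $\alpha e_{l-1}$. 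Call the first line $d_l$. If I can show $|d_l|\le C\alpha^{l-2}(\Delta t)^2$, then $e_1=0$ and the recursion $e_l\le\alpha e_{l-1}+C\alpha^{l-2}(\Delta t)^2$ unroll to
\[
e_l\le C\sum_{k=2}^{l}\alpha^{l-k}\alpha^{k-2}(\Delta t)^2\le Cl\,\alpha^{l-2}(\Delta t)^2 ,
\]
which is the claim.

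\textbf{Commutator bound.} I expand the continuous convolution $g*\phi$, with $g=\phi^{(*(l-1))}$, by inserting the intermediate time $r$ and splitting $r$ over temporal bins $m$. On each bin the term in $d_l$ becomes a difference of the form
\[
\frac{1}{\Delta t}\int_{\text{bin }m}a(r)b(r)\,dr-\Bigl(\frac{1}{\Delta t}\int a\Bigr)\Bigl(\frac{1}{\Delta t}\int b\Bigr).
\]
Here $a(r)$ is the $t$-average of $g(t-r)$ and $b(r)$ is the $s$-average of $\phi(r-s)$. Such a difference is a within-bin covariance, so Cauchy--Schwarz and the Poincaré inequality on an interval of length $\Delta t$ bound it by $(\Delta t/\pi)^2$ times $L^2$-norms of $a'$ and $b'$ on the bin, up to normalization. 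Summing over $m$ then gives
\[
|d_l|\le C(\Delta t)^2\|g'\|_2\|\phi'\|_2 .
\]
Since $g'=\phi^{(*(l-2))}*\phi'$ plus a boundary term involving $\phi(0)$, Young's inequality yields $\|g'\|_2\le\alpha^{l-2}\|\phi'\|_2+O(\alpha^{l-2})$. This is where the hypothesis $\phi\in W^{1,2}$ enters.

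\textbf{Main obstacle.} The hard step is this commutator bound near the diagonal. Because $\phi$ jumps at $0$ (it vanishes for $t<0$), $a$ and $b$ are not globally $W^{1,2}$ in the bins $m=n-i-1,\dots,n$, where $r$ may coincide with $s$ or $t$. I would handle these at most three bins separately, comparing the truncated triangle integrals directly. The jump contributes terms of size $\phi(0)\|\phi\|_\infty(\Delta t)^2$ times $\alpha^{l-2}$, since the remaining $l-2$ convolution factors integrate to $\alpha^{l-2}$. This keeps the $(\Delta t)^2$ order but may cost a change of constant.

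\textbf{Case $i=0$.} Here all times lie in a triangle with $t-s\le\Delta t$. The pointwise bound $\phi^{(*l)}(u)\le\|\phi\|_\infty^l u^{l-1}/(l-1)!$ gives $T_0[\phi^{(*l)}]=O((\Delta t)^l)$. On the discrete side $(\phi^{(*l)})_0=\phi_0^l$ and $\phi_0\le\|\phi\|_\infty\Delta t/2$, so $(\phi^{(*l)})_0=O((\Delta t)^l)$ as well. Hence the difference is $O((\Delta t)^l)$.
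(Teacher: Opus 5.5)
Your argument is correct. Its skeleton matches the paper's: induction on $l$; a split into a one-step commutator plus a propagated error bounded by $\alpha e_{l-1}$ via $\sum_k\phi_k=\alpha$; crude sup-norm bounds where the jump of $\phi$ at the origin enters; Young's inequality for $\|(\phi^{(*k)})'\|_2$; and the same pointwise bound $\phi^{(*l)}(u)\le\|\phi\|_\infty^l u^{l-1}/(l-1)!$ for $i=0$.

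Where you genuinely differ is the commutator estimate. The paper works with tent weights and a partition of unity:
\begin{itemize}
\item it writes $\sum_j\phi_{i-j}\phi_j$ with tent weights and replaces the node value $\phi(t-j\Delta t)$ by $\phi(t-s)$, so that $\sum_j\Lambda_j=1$ reproduces $\phi^{(*2)}$;
\item it bounds the replacement error pointwise via $|\phi(u)-\phi(v)|^2\le|u-v|\,\bigl|\int_v^u|\phi'|^2\bigr|$, Cauchy--Schwarz and a bounded-overlap count.
\end{itemize}
This is a first-order argument that exploits no cancellation, and general $l$ is handled by re-running it with $\phi^{(*k)}$ as one factor.

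You instead bin the intermediate time and observe that the discrepancy on bin $m$ is exactly $\Delta t\int_m(a-\bar a)(b-\bar b)\,dr$. Poincar\'e then applies to both factors; this is the same zero-mean device the paper uses for $D_{l,m,n}$ in Lemma~\ref{lem.decompose-time}, and one estimate covers every $l\geq 2$. If you track the normalization, the interior bins contribute at most $\pi^{-2}(\Delta t)^3\|g'\|_2\|\phi'\|_2$, so you have lost a power of $\Delta t$ there. Your route thus shows that the $(\Delta t)^2$ order comes only from the bins containing $s$ and $t$, where the jump of $\phi$ at $0$ is felt. In the paper's labeling these are just $m=n-i$ and $m=n$, not a range $n-i-1,\dots,n$. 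It is essential that their number is bounded independently of $i$. On them the provable bound is $(\Delta t)^2\|g\|_\infty\|\phi\|_\infty\le\alpha^{l-2}\|\phi\|_\infty^2(\Delta t)^2$; the $\phi(0)$ prefactor you quote is only heuristic.

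One small slip: the boundary term in $g'$ is $\phi(0)\phi^{(*(l-2))}$, whose $L^2$ norm is $O(\alpha^{l-3})$, not $O(\alpha^{l-2})$. The paper recovers $\alpha^{l-2}$ using $\|\phi\|_2\le\sqrt{\alpha\|\phi\|_\infty}$ and the extra $\sqrt{\alpha}$ from its Cauchy--Schwarz step. In your version this is absorbed either by your surplus factor of $\Delta t$ (for $\Delta t\le\alpha$) or into $C$, since $\alpha$ is fixed.
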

\begin{proof}
    We proceed by induction on $l$. For $l=0,1$, the assertion follows immediately from the definitions. We extend $\phi$ by zero on $\mathbb{R}_{-}$. For $i\geq1$ and $l\geq2$, the integral can be written as
    \[
        \dfrac{1}{\Delta t}\int_{(n-1)\Delta t}^{n\Delta t}
        \int_{(n-i-1)\Delta t}^{(n-i)\Delta t}\phi^{(*l)}(t-s)dsdt=\dfrac{1}{\Delta t}\int_{(i-1)\Delta t}^{(i+1)\Delta t}
        (\Delta t-|t-i\Delta t|)\phi^{(*l)}(t)dt.
    \]

    For $l=2$, $(\phi^{(*2)})_i$ can be written as
    \begin{align*}
        (\phi^{(*2)})_i
        &=\sum_{j=0}^i\phi_{i-j}\phi_j\\
        &=\sum_{j=0}^i\dfrac{1}{\Delta t}
        \int_{(i-j)\Delta t}^{(i-j+1)\Delta t}\int_0^{\Delta t}\phi(t-u)dudt
        \dfrac{1}{\Delta t}\int_{j\Delta t}^{(j+1)\Delta t}\int_0^{\Delta t}\phi(s-v)dvds\\
        &=\sum_{j=0}^i\dfrac{1}{\Delta t}
        \int_{(i-j-1)\Delta t}^{(i-j+1)\Delta t}
        (\Delta t-|t-(i-j)\Delta t|)\phi(t)dt\\
        &\qquad\times\dfrac{1}{\Delta t}
        \int_{(j-1)\Delta t}^{(j+1)\Delta t}
        (\Delta t-|s-j\Delta t|)\phi(s)ds\\
        &=\sum_{j=0}^i\dfrac{1}{\Delta t}
        \int_{(i-1)\Delta t}^{(i+1)\Delta t}
        (\Delta t-|t-i\Delta t|)\phi(t-j\Delta t)dt\\
        &\qquad\times\dfrac{1}{\Delta t}
        \int_{(j-1)\Delta t}^{(j+1)\Delta t}
        (\Delta t-|s-j\Delta t|)\phi(s)ds\\
        &=\dfrac{1}{\Delta t}\int_{(i-1)\Delta t}^{(i+1)\Delta t}
        (\Delta t-|t-i\Delta t|)
        \sum_{j=0}^i\int_{(j-1)\Delta t}^{(j+1)\Delta t}
        \dfrac{\Delta t-|s-j\Delta t|}{\Delta t}
        \phi(t-j\Delta t)\phi(s)dsdt\\
        &=\dfrac{1}{\Delta t}\int_{(i-1)\Delta t}^{(i+1)\Delta t}
        (\Delta t-|t-i\Delta t|)
        \sum_{j=0}^i\int_{(j-1)\Delta t}^{(j+1)\Delta t}
        \dfrac{\Delta t-|s-j\Delta t|}{\Delta t}
        \phi(t-s)\phi(s)dsdt\\
        &\quad+\dfrac{1}{\Delta t}\int_{(i-1)\Delta t}^{(i+1)\Delta t}
        (\Delta t-|t-i\Delta t|)
        \sum_{j=0}^i\int_{(j-1)\Delta t}^{(j+1)\Delta t}
        \dfrac{\Delta t-|s-j\Delta t|}{\Delta t}\\
        &\qquad\times\left(\phi(t-j\Delta t)-\phi(t-s)\right)\phi(s)dsdt.
    \end{align*}

    For every $t\in[(i-1)\Delta t,(i+1)\Delta t]$, the integrand in the first term satisfies
    \begin{align*}
        &\left|\phi^{(*2)}(t)-\sum_{j=0}^i\int_{(j-1)\Delta t}^{(j+1)\Delta t}
        \dfrac{\Delta t-|s-j\Delta t|}{\Delta t}\phi(t-s)\phi(s)ds\right|\\
        &=\left|\int_0^t\phi(t-s)\phi(s)ds-\int_0^{i\Delta t}\phi(t-s)\phi(s)ds
        -\int_{i\Delta t}^{(i+1)\Delta t}
        \dfrac{\Delta t-|s-i\Delta t|}{\Delta t}\phi(t-s)\phi(s)ds\right|\\
        &\leq\left|\int_{i\Delta t}^t\phi(t-s)\phi(s)ds\right|
        +\int_{i\Delta t}^{(i+1)\Delta t}
        \dfrac{\Delta t-|s-i\Delta t|}{\Delta t}\phi(t-s)\phi(s)ds\\
        &\leq2\Delta t\|\phi\|_\infty^2.
    \end{align*}
    Therefore,
    \begin{align*}
        &\left|\dfrac{1}{\Delta t}\int_{(i-1)\Delta t}^{(i+1)\Delta t}
        (\Delta t-|t-i\Delta t|)\phi^{(*2)}(t)dt\right.\\
        &\quad\left.-\dfrac{1}{\Delta t}\int_{(i-1)\Delta t}^{(i+1)\Delta t}
        (\Delta t-|t-i\Delta t|)
        \sum_{j=0}^i\int_{(j-1)\Delta t}^{(j+1)\Delta t}
        \dfrac{\Delta t-|s-j\Delta t|}{\Delta t}\phi(t-s)\phi(s)dsdt\right|\\
        &\leq\dfrac{1}{\Delta t}\times(\Delta t)^2\times
        2\Delta t\|\phi\|_\infty^2
        =2\|\phi\|_\infty^2(\Delta t)^2.
    \end{align*}

    Taking absolute values and decomposing the sum over $j$ into the ranges $j=0,\ldots,i-2$ and $j=i-1,i$, we bound the second term by
    \begin{align*}
        &\dfrac{1}{\Delta t}\int_{(i-1)\Delta t}^{(i+1)\Delta t}
        (\Delta t-|t-i\Delta t|)
        \sum_{j=0}^{i-2}\int_{(j-1)\Delta t}^{(j+1)\Delta t}
        \dfrac{\Delta t-|s-j\Delta t|}{\Delta t}\\
        &\qquad\times|\phi(t-j\Delta t)-\phi(t-s)|\phi(s)dsdt\\
        &\quad+\dfrac{1}{\Delta t}\int_{(i-1)\Delta t}^{(i+1)\Delta t}
        (\Delta t-|t-i\Delta t|)
        \sum_{j=i-1}^{i}\int_{(j-1)\Delta t}^{(j+1)\Delta t}
        \dfrac{\Delta t-|s-j\Delta t|}{\Delta t}\\
        &\qquad\times|\phi(t-j\Delta t)-\phi(t-s)|\phi(s)dsdt,
    \end{align*}
    where the first sum is understood to be empty when $i=1$. By the Cauchy--Schwarz inequality,
    \begin{align*}
        &\sum_{j=0}^{i-2}\int_{(j-1)\Delta t}^{(j+1)\Delta t}
        \dfrac{\Delta t-|s-j\Delta t|}{\Delta t}
        |\phi(t-j\Delta t)-\phi(t-s)|\phi(s)ds\\
        &\leq\left(\sum_{j=0}^{i-2}\int_{(j-1)\Delta t}^{(j+1)\Delta t}
        \left(\dfrac{\Delta t-|s-j\Delta t|}{\Delta t}\right)^2
        |\phi(t-j\Delta t)-\phi(t-s)|^2\phi(s)ds\right)^{1/2}\\
        &\qquad\times\left(\sum_{j=0}^{i-2}
        \int_{(j-1)\Delta t}^{(j+1)\Delta t}\phi(s)ds\right)^{1/2}\\
        &\leq\sqrt{2\alpha\|\phi\|_\infty}\\
        &\qquad\times\left(\sum_{j=0}^{i-2}\int_{(j-1)\Delta t}^{(j+1)\Delta t}
        \left(\dfrac{\Delta t-|s-j\Delta t|}{\Delta t}\right)^2
        |s-j\Delta t|\left|\int_{t-s}^{t-j\Delta t}|\phi'(u)|^2du\right|ds\right)^{1/2}\\
        &=\sqrt{2\alpha\|\phi\|_\infty}
        \left(\sum_{j=0}^{i-2}\int_{-\Delta t}^{\Delta t}
        \left(1-\dfrac{|s|}{\Delta t}\right)^2|s|
        \left|\int_{t-s-j\Delta t}^{t-j\Delta t}|\phi'(u)|^2du\right|ds\right)^{1/2}\\
        &=\sqrt{2\alpha\|\phi\|_\infty}
        \left(\int_{-\Delta t}^{\Delta t}
        \left(1-\dfrac{|s|}{\Delta t}\right)^2|s|
        \int_0^\infty\sum_{j=0}^{i-2}\bm I_j(s,t,u)|\phi'(u)|^2duds\right)^{1/2},
    \end{align*}
    where
    \[
        \bm I_j(s,t,u)=
        \begin{cases}
            1, & u\text{ is between }t-j\Delta t\text{ and }t-s-j\Delta t,\\
            0, & \text{otherwise}.
        \end{cases}
    \]
    For fixed $s$, $t$, and $u$, the intervals corresponding to $\bm I_j$ have uniformly bounded overlap, and in particular
    \[
        \sum_{j=0}^{i-2}\bm I_j(s,t,u)\leq3.
    \]
    Hence,
    \begin{align*}
        &\sum_{j=0}^{i-2}\int_{(j-1)\Delta t}^{(j+1)\Delta t}
        \dfrac{\Delta t-|s-j\Delta t|}{\Delta t}
        |\phi(t-j\Delta t)-\phi(t-s)|\phi(s)ds\\
        &\leq\sqrt{6\alpha\|\phi\|_\infty}
        \left(\int_{-\Delta t}^{\Delta t}
        \left(1-\dfrac{|s|}{\Delta t}\right)^2|s|ds
        \int_0^\infty|\phi'(u)|^2du\right)^{1/2}\\
        &=\sqrt{\alpha\|\phi\|_\infty}\|\phi'\|_2\Delta t.
    \end{align*}
    Here, we used
    \begin{align*}
        |\phi(t-j\Delta t)-\phi(t-s)|^2
        &=\left|\int_{t-s}^{t-j\Delta t}\phi'(u)du\right|^2\\
        &\leq|s-j\Delta t|
        \left|\int_{t-s}^{t-j\Delta t}|\phi'(u)|^2du\right|.
    \end{align*}
    Therefore,
    \begin{align*}
        &\dfrac{1}{\Delta t}\int_{(i-1)\Delta t}^{(i+1)\Delta t}
        (\Delta t-|t-i\Delta t|)
        \sum_{j=0}^{i-2}\int_{(j-1)\Delta t}^{(j+1)\Delta t}
        \dfrac{\Delta t-|s-j\Delta t|}{\Delta t}\\
        &\qquad\times|\phi(t-j\Delta t)-\phi(t-s)|\phi(s)dsdt\\
        &\leq\dfrac{1}{\Delta t}\times(\Delta t)^2\times
        \sqrt{\alpha\|\phi\|_\infty}\|\phi'\|_2\Delta t\\
        &=\sqrt{\alpha\|\phi\|_\infty}\|\phi'\|_2(\Delta t)^2.
    \end{align*}

    The second sum is bounded by
    \begin{align*}
        &\dfrac{1}{\Delta t}\int_{(i-1)\Delta t}^{(i+1)\Delta t}
        (\Delta t-|t-i\Delta t|)
        \sum_{j=i-1}^{i}\int_{(j-1)\Delta t}^{(j+1)\Delta t}
        \dfrac{\Delta t-|s-j\Delta t|}{\Delta t}\\
        &\qquad\times|\phi(t-j\Delta t)-\phi(t-s)|\phi(s)dsdt\\
        &\leq\dfrac{1}{\Delta t}\int_{(i-1)\Delta t}^{(i+1)\Delta t}
        (\Delta t-|t-i\Delta t|)dt
        \times2\|\phi\|_\infty^2
        \sum_{j=i-1}^{i}\int_{(j-1)\Delta t}^{(j+1)\Delta t}ds\\
        &\leq\dfrac{1}{\Delta t}\times(\Delta t)^2\times
        2\|\phi\|_\infty^2\times4\Delta t
        =8\|\phi\|_\infty^2(\Delta t)^2.
    \end{align*}
    Thus, for $l=2$, the total error is bounded by
    \[
        \left(10\|\phi\|_\infty^2+
        \sqrt{\alpha\|\phi\|_\infty}\|\phi'\|_2\right)(\Delta t)^2.
    \]

        Suppose that the assertion holds for $l=k\geq2$. From the discrete convolution,
    \[
        (\phi^{*(k+1)})_i=\sum_{j=0}^i(\phi^{(*k)})_{i-j}\phi_j.
    \]
    Inserting the corresponding averaged continuous convolution gives
    \begin{align*}
        &\left|\dfrac{1}{\Delta t}\int_{(i-1)\Delta t}^{(i+1)\Delta t}
        (\Delta t-|t-i\Delta t|)\phi^{*(k+1)}(t)dt
        -(\phi^{*(k+1)})_i\right|\\
        &\leq
        \left|\dfrac{1}{\Delta t}\int_{(i-1)\Delta t}^{(i+1)\Delta t}
        (\Delta t-|t-i\Delta t|)\phi^{*(k+1)}(t)dt\right.\\
        &\qquad\left.-\sum_{j=0}^i\left\{\dfrac{1}{\Delta t}
        \int_{(i-j)\Delta t}^{(i-j+1)\Delta t}
        \int_0^{\Delta t}\phi^{(*k)}(t-u)dudt\right\}\phi_j\right|\\
        &\quad+
        \left|\sum_{j=0}^i\left\{\dfrac{1}{\Delta t}
        \int_{(i-j)\Delta t}^{(i-j+1)\Delta t}
        \int_0^{\Delta t}\phi^{(*k)}(t-u)dudt\right\}\phi_j
        -\sum_{j=0}^i(\phi^{(*k)})_{i-j}\phi_j\right|.
    \end{align*}
    By the induction hypothesis, the second term is bounded by
    \begin{align*}
        &\sum_{j=0}^i\left|
        \dfrac{1}{\Delta t}\int_{(i-j)\Delta t}^{(i-j+1)\Delta t}
        \int_0^{\Delta t}\phi^{(*k)}(t-u)dudt
        -(\phi^{(*k)})_{i-j}\right|\phi_j\\
        &\quad\leq Ck\alpha^{k-2}(\Delta t)^2\sum_{j=0}^i\phi_j
        \leq Ck\alpha^{k-1}(\Delta t)^2.
    \end{align*}
    Applying the same argument as in the case $l=2$, with $\phi^{(*k)}$ in place of the first factor $\phi$, the first term is bounded by
    \[
        \left(10\|\phi^{(*k)}\|_\infty\|\phi\|_\infty
        +\sqrt{\alpha\|\phi\|_\infty}\|(\phi^{(*k)})'\|_2\right)(\Delta t)^2.
    \]
    By Young's convolution inequality,
    \[
        \|\phi^{(*k)}\|_\infty\leq\alpha^{k-1}\|\phi\|_\infty
    \]
    and
    \begin{align*}
        \|(\phi^{(*k)})'\|_2
        &\leq\alpha^{k-2}\|(\phi^{(*2)})'\|_2\\
        &\leq\alpha^{k-2}\left(\alpha\|\phi'\|_2+|\phi(0)|\|\phi\|_2\right)\\
        &\leq\alpha^{k-2}\left(\alpha\|\phi'\|_2
        +|\phi(0)|\sqrt{\alpha\|\phi\|_\infty}\right).
    \end{align*}
    Hence, the first term is bounded by
    \[
        \alpha^{k-1}\left(10\|\phi\|_\infty^2
        +\sqrt{\alpha\|\phi\|_\infty}\|\phi'\|_2
        +|\phi(0)|\|\phi\|_\infty\right)(\Delta t)^2.
    \]
    Taking $C$ sufficiently large so that
    \[
        10\|\phi\|_\infty^2
        +\sqrt{\alpha\|\phi\|_\infty}\|\phi'\|_2
        +|\phi(0)|\|\phi\|_\infty\leq C,
    \]
    the first term is bounded by $C\alpha^{k-1}(\Delta t)^2$. Combining the two bounds gives
    \[
        C\alpha^{k-1}(\Delta t)^2
        +Ck\alpha^{k-1}(\Delta t)^2=C(k+1)\alpha^{k-1}(\Delta t)^2.
    \]
    Therefore, the assertion holds for $l=k+1$.

    Finally, let $i=0$ and $l\geq2$. Repeated convolution gives
    \[
        \phi^{(*l)}(t)\leq\dfrac{\|\phi\|_\infty^l}{(l-1)!}t^{l-1}.
    \]
    Therefore,
    \begin{align*}
        \dfrac{1}{\Delta t}\int_0^{\Delta t}\int_0^t
        \phi^{(*l)}(t-s)dsdt
        &\leq\dfrac{\|\phi\|_\infty^l}{(l-1)!\Delta t}
        \int_0^{\Delta t}\int_0^t(t-s)^{l-1}dsdt\\
        &=O((\Delta t)^l).
    \end{align*}
    On the other hand,
    \[
        (\phi^{(*l)})_0=\phi_0^l=O((\Delta t)^l).
    \]
    Thus, the error is $O((\Delta t)^l)$. For sufficiently small $\Delta t$, this also satisfies the bound $Cl\alpha^{l-2}(\Delta t)^2$ after enlarging $C$ if necessary. This completes the proof.
\end{proof}

\begin{lem}\label{lem.phiconvolution-bound}
    For each $l=0,1,2,\ldots$,
    \begin{equation}
        \sum_{m=0}^{\infty}(\phi^{(*l)})_m=\alpha^l.
    \end{equation}
\end{lem}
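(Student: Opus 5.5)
The plan is to argue by induction on $l$, using only the nonnegativity of the coarse-grained kernel and the identity $\sum_{k=0}^\infty\phi_k=\alpha$ in Eq.~\eqref{eq.cg-branching-ratio}. Equivalently, I would encode everything in generating functions: set $\Phi(z)=\sum_{k\ge0}\phi_k z^k$ for $|z|\le1$. Since $\phi_k\ge0$ and $\Phi(1)=\alpha<\infty$, this series converges absolutely on the closed unit disk. The generating function of the $l$-fold discrete convolution $\{(\phi^{(*l)})_m\}_m$ is then $\Phi(z)^l$, and evaluating at $z=1$ gives the claim. To keep the argument elementary, I would phrase it as an induction with Tonelli's theorem doing the exchange of sums.

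\emph{Base cases.} For $l=0$ we have $(\phi^{(*0)})_m=\delta_{m,0}$, so the sum equals $1=\alpha^0$. For $l=1$ the sum is $\sum_m\phi_m=\alpha$ by Eq.~\eqref{eq.cg-branching-ratio}. That identity is quoted from \cite{koyama2025coarse}. It can also be checked directly: $\phi_0$ together with the $\phi_k$, $k\ge1$, defined in Eqs.~\eqref{eq.temporal-cgkernel-1}--\eqref{eq.temporal-cgkernel-0}, sums to $\frac{1}{\Delta t}\int_{(n-1)\Delta t}^{n\Delta t}\int_{-\infty}^{t}\phi(t-s)\,ds\,dt=\alpha$, after extending the $k$-range to all past bins and using stationarity in $n$.

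\emph{Inductive step.} Assume the claim holds for some $l\ge1$. Write $(\phi^{(*(l+1))})_m=\sum_{j=0}^m(\phi^{(*l)})_{m-j}\phi_j$, and note that all terms are nonnegative because $\phi\ge0$. Tonelli's theorem lets us interchange the double sum over $\{(m,j):0\le j\le m\}$, giving
\[
\sum_{m=0}^\infty(\phi^{(*(l+1))})_m=\sum_{j=0}^\infty\phi_j\sum_{r=0}^\infty(\phi^{(*l)})_r=\alpha\cdot\alpha^l=\alpha^{l+1}.
\]
This completes the induction.

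The only real subtlety is justifying the exchange of summation. It is handled by nonnegativity and Tonelli, with no convergence assumption beyond $\alpha<\infty$; in particular, $\alpha<1$ is not needed for this lemma. The other point to be careful about is that Eq.~\eqref{eq.cg-branching-ratio} includes the intra-bin term $\phi_0$, which is exactly what makes the total mass equal $\alpha$ rather than a Riemann-sum approximation of it.
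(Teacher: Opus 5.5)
Your proposal is correct and follows the same route as the paper. The paper also inducts on $l$ and interchanges the double sum in $(\phi^{(*l)})_m=\sum_{j=0}^{m}\phi_j(\phi^{(*(l-1))})_{m-j}$ to factor out $\sum_j\phi_j=\alpha$. Two of your steps make explicit what the paper takes for granted: justifying the interchange by nonnegativity and Tonelli, and checking $\sum_k\phi_k=\alpha$ directly from the bin definitions.
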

\begin{proof}
    For $l=0$, we have
    \[
        \sum_{m=0}^{\infty}(\phi^{(*0)})_m
        =\sum_{m=0}^{\infty}\delta_{m,0}
        =1.
    \]
    For $l=1,2,\ldots$, changing the order of summation and setting $r=m-n$, we obtain
    \begin{align*}
        \sum_{m=0}^{\infty}(\phi^{(*l)})_m
        &=\sum_{m=0}^{\infty}\sum_{n=0}^m\phi_n(\phi^{*(l-1)})_{m-n}\\
        &=\sum_{n=0}^{\infty}\phi_n\sum_{r=0}^{\infty}(\phi^{*(l-1)})_r\\
        &=\alpha\sum_{r=0}^{\infty}(\phi^{*(l-1)})_r.
    \end{align*}
    Applying this relation repeatedly, we obtain
    \[
        \sum_{m=0}^{\infty}(\phi^{(*l)})_m=\alpha^l.
    \]
    This completes the proof.
\end{proof}

\begin{lem}\label{lem.decompose-time}
    Let $m,n\in\mathbb{Z}$ with $m\leq n$, and let $p,l,l'=0,1,2,\ldots$. Then, the following expansions hold as $\Delta t\to 0$.
    \begin{enumerate}[(i)]
        \item For $m<n$,
        \begin{align}
            &\int_{(m-1)\Delta t}^{m\Delta t}
            \left(\int_{(n-1)\Delta t}^{n\Delta t}\phi^{(*l)}(t-s)dt\right)
            \left(\int_{(n+p-1)\Delta t}^{(n+p)\Delta t}\phi^{(*l')}(t'-s)dt'\right)ds \nonumber\\
            &=\Delta t(\phi^{(*l)})_{n-m}(\phi^{(*l')})_{n+p-m}\nonumber\\
            &\quad+\begin{cases}
                \begin{aligned}
                    &(\phi^{(*l)})_{n-m}\bm{1}_{l'\geq2}l'\alpha^{l'-2}O((\Delta t)^3)\\
                    &\quad+\bm{1}_{l\geq2}l\alpha^{l-2}O((\Delta t)^3)
                    \dfrac{1}{\Delta t}\int_{(n+p-1)\Delta t}^{(n+p)\Delta t}
                    \int_{(m-1)\Delta t}^{m\Delta t}\phi^{(*l')}(t'-s')ds'dt'\\
                    &\quad+O\left((\Delta t)^2
                    \|\partial_sD_{l,m,n}\|_{L^2((m-1)\Delta t,m\Delta t]}
                    \|\partial_sD_{l',m,n+p}\|_{L^2((m-1)\Delta t,m\Delta t]}\right),
                \end{aligned}
                & ll'>0,\\
                0, & ll'=0,
            \end{cases}
            \label{eq.decompose-time-error}
        \end{align}
        where
        \[
            D_{l,m,n}(s)
            =\int_{(n-1)\Delta t}^{n\Delta t}
            \left(\phi^{(*l)}(t-s)
            -\dfrac{1}{\Delta t}\int_{(m-1)\Delta t}^{m\Delta t}
            \phi^{(*l)}(t-s')ds'\right)dt.
        \]

        \item For $m=n$,
        \begin{align}
            &\int_{(n-1)\Delta t}^{n\Delta t}
            \left(\int_{(n-1)\Delta t}^{n\Delta t}\phi^{(*l)}(t-s)dt\right)
            \left(\int_{(n+p-1)\Delta t}^{(n+p)\Delta t}\phi^{(*l')}(t'-s)dt'\right)ds \nonumber\\
            &=\Delta t(\phi^{(*l)})_0(\phi^{(*l')})_p\nonumber\\
            &\quad+\begin{cases}
                \left(\bm{1}_{l\geq2}l\alpha^{l-2}
                +\bm{1}_{l'\geq2}l'\alpha^{l'-2}\right)O((\Delta t)^3),
                & ll'=0,\\
                \left(\bm{1}_{l\geq2}l+\bm{1}_{l'\geq2}l'\right)
                \alpha^{l+l'-3}O((\Delta t)^4),
                & ll'>0\text{ and }(l,l')\neq(1,1),\\
                O((\Delta t)^{7/2}),
                & p>0\text{ and }(l,l')=(1,1),\\
                O((\Delta t)^3),
                & p=0\text{ and }(l,l')=(1,1).
            \end{cases}
        \end{align}
    \end{enumerate}
    The constants implicit in the $O$ terms depend only on $\phi$ and are independent of $m$, $n$, $p$, $l$, $l'$, and $\Delta t$.
\end{lem}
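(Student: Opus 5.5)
The plan is to treat both parts through one variance-type decomposition in the variable $s$. Write
\[
A_l(s)=\int_{(n-1)\Delta t}^{n\Delta t}\phi^{(*l)}(t-s)\,dt,\qquad B_{l'}(s)=\int_{(n+p-1)\Delta t}^{(n+p)\Delta t}\phi^{(*l')}(t'-s)\,dt'.
\]
Let $\bar A_l$ and $\bar B_{l'}$ denote their averages over the source bin $s\in((m-1)\Delta t,m\Delta t]$. With $A_l=\bar A_l+D_{l,m,n}$ and $B_{l'}=\bar B_{l'}+D_{l',m,n+p}$, the deviations have zero mean on the bin, so
\[
\int A_lB_{l'}\,ds=\Delta t\,\bar A_l\bar B_{l'}+\int D_{l,m,n}D_{l',m,n+p}\,ds .
\]
By construction, $\bar A_l=\frac{1}{\Delta t}\int\int\phi^{(*l)}(t-s)\,ds\,dt$ is exactly the averaged kernel appearing in Lemma~\ref{lem.phi-convolution}. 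That lemma gives $\bar A_l=(\phi^{(*l)})_{n-m}+e_l$ with $|e_l|\le Cl\alpha^{l-2}(\Delta t)^2\bm 1_{l\ge2}$, and similarly for $\bar B_{l'}$.

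For part (i), $m<n$, I expand $\Delta t\,\bar A_l\bar B_{l'}$ as
\[
\Delta t(\phi^{(*l)})_{n-m}\bar B_{l'}+\Delta t\,e_l\bar B_{l'},
\]
and then replace $\bar B_{l'}$ in the first product by $(\phi^{(*l')})_{n+p-m}+e_{l'}$. This produces the main term, the error $(\phi^{(*l)})_{n-m}\bm 1_{l'\ge2}l'\alpha^{l'-2}O((\Delta t)^3)$, and the error $\bm 1_{l\ge2}l\alpha^{l-2}O((\Delta t)^3)\bar B_{l'}$. Here $\bar B_{l'}$ is kept as the averaged integral, exactly as in the statement. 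The remainder $\int DD'$ I bound by Cauchy--Schwarz and the Poincar\'e--Wirtinger inequality on an interval of length $\Delta t$ (zero-mean functions satisfy $\|D\|_{L^2}\le(\Delta t/\pi)\|\partial_sD\|_{L^2}$). This gives the stated $O((\Delta t)^2\|\partial_sD\|\|\partial_sD'\|)$.

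If $ll'=0$, say $l=0$, then $\phi^{(*0)}=\delta$ and $A_0(s)=\bm 1_{s\in\text{bin }n}$ vanishes on bin $m<n$, as does $(\phi^{(*0)})_{n-m}$, so both sides are zero. The case $l'=0$ is handled the same way.

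For part (ii), $m=n$, the source bin coincides with the first target bin, and $A_l(s)=\int_s^{n\Delta t}\phi^{(*l)}(t-s)\,dt$. Throughout I use the pointwise bound $\phi^{(*l)}(t)\le\|\phi\|_\infty^lt^{l-1}/(l-1)!$ from the end of the proof of Lemma~\ref{lem.phi-convolution}, which gives $A_l=O((\Delta t)^l)$ uniformly on the bin. The cases are then as follows.
\begin{enumerate}[(a)]
\item If $l=0$, then $A_0\equiv1$ on the bin, so the integral equals $\Delta t\,\bar B_{l'}$. Lemma~\ref{lem.phi-convolution} then gives the error $\Delta t\cdot l'\alpha^{l'-2}O((\Delta t)^2)$. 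The case $l'=0$ is symmetric via $(\phi^{(*0)})_0=1$.
\item If $l,l'\ge1$ and $(l,l')\ne(1,1)$, both $\int A_lB_{l'}$ and $\Delta t\,\phi_0^{l}(\phi^{(*l')})_p$ are individually $O((\Delta t)^4)$. One factor is $O((\Delta t)^{\ge2})$, the other $O(\Delta t)$, and integration over $s$ contributes a further $\Delta t$. I would track the $\alpha$-powers through Young's inequality, as in Lemma~\ref{lem.phi-convolution}, to get the prefactor $\alpha^{l+l'-3}$.
\item If $(l,l')=(1,1)$ and $p=0$, the crude bound $A_1,B_1=O(\Delta t)$ already gives $O((\Delta t)^3)$.
\end{enumerate}

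The delicate case is $(1,1)$ with $p>0$. Here I use the exact identity $\int A_1\,ds=\Delta t\,\phi_0$, so the error is exactly $\int A_1(B_1-\bar B_1)\,ds$. I bound this by $\|A_1\|_{L^2}\,\|D'\|_{L^2}$, where $\|A_1\|_{L^2}=O((\Delta t)^{3/2})$ and, by Poincar\'e, $\|D'\|_{L^2}\le C\Delta t\,\|\partial_sB_1\|_{L^2}$. The gain comes from $\partial_sB_1(s)=\phi((n+p-1)\Delta t-s)-\phi((n+p)\Delta t-s)$, which is $O((\Delta t)^{1/2}\|\phi'\|_2)$ pointwise since $\phi\in W^{1,2}$. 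Hence $\|\partial_sB_1\|_{L^2}=O(\Delta t)$, and the total is $O((\Delta t)^{7/2})$.

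I expect this $W^{1,2}$ Hölder-$1/2$ step, together with keeping every constant uniform in $m,n,p,l,l'$ (in particular the $\alpha$-power bookkeeping for large $l$), to be the main obstacle.
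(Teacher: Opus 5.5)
Part (i), all $ll'=0$ cases, and both $(1,1)$ cases of part (ii) are handled correctly and essentially as in the paper: you use the centered decomposition $\int A_lB_{l'}\,ds=\Delta t\,\bar A_l\bar B_{l'}+\int D_{l,m,n}D_{l',m,n+p}\,ds$, Lemma~\ref{lem.phi-convolution} for the averages, Poincar\'e for the remainder, and the $W^{1,2}$ half-power gain for $(1,1)$, $p>0$.

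\textbf{The gap} is in case (b) of part (ii) when $l=1$ and $l'\geq2$. There your claim that one factor is $O((\Delta t)^{\geq2})$ is false.
\begin{itemize}
\item $A_1(s)=\int_s^{n\Delta t}\phi(t-s)\,dt$ is only $O(\Delta t)$.
\item $B_{l'}(s)$ integrates $\phi^{(*l')}$ over arguments in $((p-1)\Delta t,(p+1)\Delta t)$. The factorial bound therefore gives $O((\Delta t)^{l'})$ only with a constant growing like $(p+1)^{l'-1}$.
\item When $p\Delta t$ is of order one, $B_{l'}(s)\approx\Delta t\,\phi^{(*l')}(p\Delta t)$ is genuinely of order $\Delta t$.
\end{itemize}
For example, take $\phi(t)=ae^{-t}$, $l'=2$ and $p=\lfloor 1/\Delta t\rfloor$. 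Then $\int A_1B_2\,ds$ and $\Delta t\,\phi_0(\phi^{(*2)})_p$ are both asymptotic to $\tfrac12a^3e^{-1}(\Delta t)^3$. So neither term is individually $O((\Delta t)^4)$. Since the lemma asserts constants independent of $p$, your direct bound does not prove the claim; an $O((\Delta t)^4)$ estimate needs cancellation between the two terms.

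\textbf{The fix} uses the same mechanism you already apply to $(1,1)$, $p>0$, with a stronger regularity input. This is what the paper does. Since $\int A_1\,ds=\Delta t\,\phi_0$ exactly, the error equals
\[
\Delta t\,\phi_0\bigl(\bar B_{l'}-(\phi^{(*l')})_p\bigr)+\int_{(n-1)\Delta t}^{n\Delta t}A_1(s)D_{l',n,n+p}(s)\,ds .
\]
\begin{itemize}
\item The first term is $\phi_0\cdot l'\alpha^{l'-2}O((\Delta t)^3)=l'\alpha^{l'-2}O((\Delta t)^4)$, by Lemma~\ref{lem.phi-convolution} and $\phi_0\le\|\phi\|_\infty\Delta t$.
\item For the second term, use that $\phi^{(*l')}$ is Lipschitz for $l'\geq2$. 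It vanishes at $0$, and
\[
\|(\phi^{(*l')})'\|_\infty\le\alpha^{l'-2}\bigl(|\phi(0)|\|\phi\|_\infty+\|\phi'\|_2\|\phi\|_2\bigr).
\]
Hence $|D_{l',n,n+p}(s)|\le\alpha^{l'-2}O((\Delta t)^2)$ uniformly in $p$. Together with $|A_1|\le\|\phi\|_\infty\Delta t$, the integral is $\alpha^{l'-2}O((\Delta t)^4)$.
\end{itemize}

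The point you missed is that the $(1,l'\geq2)$ case with large $p$ is just as cancellation-dependent as $(1,1)$. The difference is that the second factor is now Lipschitz rather than merely H\"older-$\tfrac12$, which buys a full extra power of $\Delta t$ instead of a half.
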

\begin{proof}
    We first consider the case $ll'=0$. If $l=0$, then, for almost every $s\in((m-1)\Delta t,m\Delta t]$,
    \[
        \int_{(n-1)\Delta t}^{n\Delta t}\phi^{(*0)}(t-s)dt
        =\int_{(n-1)\Delta t}^{n\Delta t}\delta(t-s)dt
        =\delta_{n,m},
        \qquad
        (\phi^{(*0)})_{n-m}=\delta_{n,m}.
    \]
    Therefore, Lemma~\ref{lem.phi-convolution} gives
    \begin{align*}
        &\int_{(m-1)\Delta t}^{m\Delta t}
        \left(\int_{(n-1)\Delta t}^{n\Delta t}\phi^{(*0)}(t-s)dt\right)
        \left(\int_{(n+p-1)\Delta t}^{(n+p)\Delta t}\phi^{(*l')}(t'-s)dt'\right)ds\\
        &=\delta_{n,m}\int_{(m-1)\Delta t}^{m\Delta t}
        \int_{(n+p-1)\Delta t}^{(n+p)\Delta t}\phi^{(*l')}(t'-s)dt'ds\\
        &=\Delta t(\phi^{(*0)})_{n-m}(\phi^{(*l')})_{n+p-m}
        +\begin{cases}
            \delta_{n,m}l'\alpha^{l'-2}O((\Delta t)^3), & l'\geq2,\\
            0, & l'=0,1.
        \end{cases}
    \end{align*}

    Similarly, if $l'=0$, then
    \[
        \int_{(n+p-1)\Delta t}^{(n+p)\Delta t}\phi^{(*0)}(t'-s)dt'
        =\delta_{n+p,m},
        \qquad
        (\phi^{(*0)})_{n+p-m}=\delta_{n+p,m}
    \]
    for almost every $s\in((m-1)\Delta t,m\Delta t]$. Hence,
    \begin{align*}
        &\int_{(m-1)\Delta t}^{m\Delta t}
        \left(\int_{(n-1)\Delta t}^{n\Delta t}\phi^{(*l)}(t-s)dt\right)
        \left(\int_{(n+p-1)\Delta t}^{(n+p)\Delta t}\phi^{(*0)}(t'-s)dt'\right)ds\\
        &=\delta_{n+p,m}\int_{(m-1)\Delta t}^{m\Delta t}
        \int_{(n-1)\Delta t}^{n\Delta t}\phi^{(*l)}(t-s)dtds\\
        &=\Delta t(\phi^{(*l)})_{n-m}(\phi^{(*0)})_{n+p-m}
        +\begin{cases}
            \delta_{n,m}l\alpha^{l-2}O((\Delta t)^3), & p=0\text{ and }l\geq2,\\
            0, & \text{otherwise}.
        \end{cases}
    \end{align*}
    These identities prove the assertion when $ll'=0$.\\

    We next consider the case $ll'>0$. For convenience, define the bin-averaged coefficient
    \[
        B_{l,m,n}
        :=\dfrac{1}{\Delta t}\int_{(m-1)\Delta t}^{m\Delta t}
        \int_{(n-1)\Delta t}^{n\Delta t}\phi^{(*l)}(t-s')dtds'
    \]
    and
    \[
        D_{l,m,n}(s)
        :=\int_{(n-1)\Delta t}^{n\Delta t}\phi^{(*l)}(t-s)dt-B_{l,m,n}.
    \]
    By definition,
    \[
        \int_{(m-1)\Delta t}^{m\Delta t}D_{l,m,n}(s)ds=0.
    \]
    Therefore,
    \begin{align*}
        &\int_{(m-1)\Delta t}^{m\Delta t}
        \left(\int_{(n-1)\Delta t}^{n\Delta t}\phi^{(*l)}(t-s)dt\right)
        \left(\int_{(n+p-1)\Delta t}^{(n+p)\Delta t}\phi^{(*l')}(t'-s)dt'\right)ds\\
        &=\int_{(m-1)\Delta t}^{m\Delta t}
        \left(B_{l,m,n}+D_{l,m,n}(s)\right)
        \left(B_{l',m,n+p}+D_{l',m,n+p}(s)\right)ds\\
        &=\Delta t B_{l,m,n}B_{l',m,n+p}
        +\int_{(m-1)\Delta t}^{m\Delta t}
        D_{l,m,n}(s)D_{l',m,n+p}(s)ds.
    \end{align*}
    From Lemma~\ref{lem.phi-convolution},
    \[
        B_{l,m,n}
        =(\phi^{(*l)})_{n-m}
        +\bm{1}_{l\geq2}l\alpha^{l-2}O((\Delta t)^2).
    \]
    Hence,
    \begin{align*}
        &\Delta t B_{l,m,n}B_{l',m,n+p}\\
        &=\Delta t
        \left\{(\phi^{(*l)})_{n-m}
        +\bm{1}_{l\geq2}l\alpha^{l-2}O((\Delta t)^2)\right\}
        B_{l',m,n+p}\\
        &=\Delta t(\phi^{(*l)})_{n-m}
        \left\{(\phi^{(*l')})_{n+p-m}
        +\bm{1}_{l'\geq2}l'\alpha^{l'-2}O((\Delta t)^2)\right\}\\
        &\quad+\bm{1}_{l\geq2}l\alpha^{l-2}O((\Delta t)^3)B_{l',m,n+p}\\
        &=\Delta t(\phi^{(*l)})_{n-m}(\phi^{(*l')})_{n+p-m}\\
        &\quad+(\phi^{(*l)})_{n-m}\bm{1}_{l'\geq2}l'\alpha^{l'-2}O((\Delta t)^3)\\
        &\quad+\bm{1}_{l\geq2}l\alpha^{l-2}O((\Delta t)^3)
        \dfrac{1}{\Delta t}\int_{(m-1)\Delta t}^{m\Delta t}
        \int_{(n+p-1)\Delta t}^{(n+p)\Delta t}
        \phi^{(*l')}(t'-s')dt'ds'.
    \end{align*}
    Consequently,
    \begin{align*}
        &\int_{(m-1)\Delta t}^{m\Delta t}
        \left(\int_{(n-1)\Delta t}^{n\Delta t}\phi^{(*l)}(t-s)dt\right)
        \left(\int_{(n+p-1)\Delta t}^{(n+p)\Delta t}\phi^{(*l')}(t'-s)dt'\right)ds\\
        &=\Delta t(\phi^{(*l)})_{n-m}(\phi^{(*l')})_{n+p-m}\\
        &\quad+(\phi^{(*l)})_{n-m}\bm{1}_{l'\geq2}l'\alpha^{l'-2}O((\Delta t)^3)\\
        &\quad+\bm{1}_{l\geq2}l\alpha^{l-2}O((\Delta t)^3)
        \dfrac{1}{\Delta t}\int_{(m-1)\Delta t}^{m\Delta t}
        \int_{(n+p-1)\Delta t}^{(n+p)\Delta t}
        \phi^{(*l')}(t'-s')dt'ds'\\
        &\quad+\int_{(m-1)\Delta t}^{m\Delta t}
        D_{l,m,n}(s)D_{l',m,n+p}(s)ds.
    \end{align*}

    For $m<n$, the functions $D_{l,m,n}$ and $D_{l',m,n+p}$ belong to
    $W^{1,2}((m-1)\Delta t,m\Delta t]$ and have zero mean on this interval.
    Therefore, the Cauchy--Schwarz and Poincaré inequalities give
    \begin{align*}
        &\left|\int_{(m-1)\Delta t}^{m\Delta t}
        D_{l,m,n}(s)D_{l',m,n+p}(s)ds\right|\\
        &\quad\leq
        \|D_{l,m,n}\|_{L^2((m-1)\Delta t,m\Delta t]}
        \|D_{l',m,n+p}\|_{L^2((m-1)\Delta t,m\Delta t]}\\
        &\quad\leq C(\Delta t)^2
        \|\partial_sD_{l,m,n}\|_{L^2((m-1)\Delta t,m\Delta t]}
        \|\partial_sD_{l',m,n+p}\|_{L^2((m-1)\Delta t,m\Delta t]}
    \end{align*}
    for some constant $C>0$ independent of $m$, $n$, $p$, $l$, $l'$, and $\Delta t$.
    Combining this bound with the preceding decomposition proves part~(i) when $ll'>0$.\\
    
    For $m=n$ and $l\geq2$, we first note that
    \[
        \phi^{(*2)}(u)=\int_0^u\phi(u-v)\phi(v)dv\leq\|\phi\|_\infty^2u.
    \]
    Therefore,
    \begin{align*}
        \phi^{(*l)}(u)
        &=\int_0^u\phi^{*(l-2)}(u-v)\phi^{(*2)}(v)dv\\
        &\leq\|\phi\|_\infty^2\int_0^u\phi^{*(l-2)}(u-v)v\,dv\\
        &\leq\|\phi\|_\infty^2u\int_0^u\phi^{*(l-2)}(u-v)dv\\
        &\leq\|\phi\|_\infty^2\alpha^{l-2}u.
    \end{align*}
    Using this bound and extending $\phi^{(*l)}$ by zero on $\mathbb{R}_-$, we obtain
    \begin{align*}
        |D_{l,n,n}(s)|
        &\leq\int_{(n-1)\Delta t}^{n\Delta t}
        \dfrac{1}{\Delta t}\int_{(n-1)\Delta t}^{n\Delta t}
        |\phi^{(*l)}(t-s)-\phi^{(*l)}(t-s')|ds'dt\\
        &\leq\dfrac{\|\phi\|_\infty^2\alpha^{l-2}}{\Delta t}
        \int_{(n-1)\Delta t}^{n\Delta t}
        \int_{(n-1)\Delta t}^{n\Delta t}
        \left\{\bm{1}_{t\geq s}(t-s)+\bm{1}_{t\geq s'}(t-s')\right\}ds'dt\\
        &=\|\phi\|_\infty^2\alpha^{l-2}
        \left\{\int_{(n-1)\Delta t}^{n\Delta t}
        \bm{1}_{t\geq s}(t-s)dt+\dfrac{(\Delta t)^2}{6}\right\}.
    \end{align*}
    Moreover, for every $l'\geq1$,
    \[
        |D_{l',n,n+p}(s)|
        \leq2\Delta t\|\phi^{(*l')}\|_\infty
        \leq2\alpha^{l'-1}\|\phi\|_\infty\Delta t.
    \]
    Hence, for $l\geq2$ and $l'\geq1$,
    \begin{align*}
        &\left|\int_{(n-1)\Delta t}^{n\Delta t}
        D_{l,n,n}(s)D_{l',n,n+p}(s)ds\right|\\
        &\leq2\|\phi\|_\infty^3\alpha^{l+l'-3}\Delta t
        \int_{(n-1)\Delta t}^{n\Delta t}
        \left\{\int_{(n-1)\Delta t}^{n\Delta t}
        \bm{1}_{t\geq s}(t-s)dt+\dfrac{(\Delta t)^2}{6}\right\}ds\\
        &=\dfrac{2}{3}\|\phi\|_\infty^3
        \alpha^{l+l'-3}(\Delta t)^4.
    \end{align*}

    We next evaluate the error terms arising from Lemma~\ref{lem.phi-convolution}. Since
    $\phi_0\leq\alpha$ and $\phi_0\leq\|\phi\|_\infty\Delta t$, we have
    \[
        (\phi^{(*l)})_0=\phi_0^l
        \leq\alpha^{l-1}\|\phi\|_\infty\Delta t.
    \]
    Moreover, for $l'\geq1$,
    \begin{align*}
        (\phi^{(*l')})_p
        &=\sum_{r=0}^p\phi_r(\phi^{*(l'-1)})_{p-r}\\
        &\leq\left(\sup_{r\geq0}\phi_r\right)
        \sum_{r=0}^p(\phi^{*(l'-1)})_{p-r}\\
        &\leq\alpha^{l'-1}\|\phi\|_\infty\Delta t,
    \end{align*}
    where the last inequality follows from Lemma~\ref{lem.phiconvolution-bound}. Similarly,
    \[
        B_{l',n,n+p}
        \leq\Delta t\|\phi^{(*l')}\|_\infty
        \leq\alpha^{l'-1}\|\phi\|_\infty\Delta t.
    \]
    Consequently, when $l\geq2$ and $l'\geq1$,
    \begin{align*}
        &(\phi^{(*l)})_0\bm{1}_{l'\geq2}l'\alpha^{l'-2}O((\Delta t)^3)
        +\bm{1}_{l\geq2}l\alpha^{l-2}O((\Delta t)^3)B_{l',n,n+p}\\
        &\quad=
        \left(\bm{1}_{l\geq2}l+\bm{1}_{l'\geq2}l'\right)
        \alpha^{l+l'-3}O((\Delta t)^4).
    \end{align*}
        Combining this estimate with the preceding bound for the product of the $D$ terms, and using $l\geq2$, the total error is
    \[
        \left(\bm{1}_{l\geq2}l+\bm{1}_{l'\geq2}l'\right)
        \alpha^{l+l'-3}O((\Delta t)^4).
    \]

        It remains to consider the case $l=1$ and $l'\geq2$. Since
    \[
        \|(\phi^{(*l')})'\|_\infty
        \leq\alpha^{l'-2}\|(\phi^{(*2)})'\|_\infty
        \leq\alpha^{l'-2}
        \left(|\phi(0)|\|\phi\|_\infty+\|\phi'\|_2\|\phi\|_2\right),
    \]
    the same averaging argument as above gives
    \begin{align*}
        |D_{l',n,n+p}(s)|
        &\leq\dfrac{1}{\Delta t}
        \int_{(n-1)\Delta t}^{n\Delta t}
        \int_{(n+p-1)\Delta t}^{(n+p)\Delta t}
        |\phi^{(*l')}(t'-s)-\phi^{(*l')}(t'-s')|dt'ds'\\
        &\leq\|(\phi^{(*l')})'\|_\infty(\Delta t)^2
        =\alpha^{l'-2}O((\Delta t)^2).
    \end{align*}
    Together with
    \[
        |D_{1,n,n}(s)|\leq2\|\phi\|_\infty\Delta t,
    \]
    this yields
    \[
        \left|\int_{(n-1)\Delta t}^{n\Delta t}
        D_{1,n,n}(s)D_{l',n,n+p}(s)ds\right|
        =\alpha^{l'-2}O((\Delta t)^4).
    \]
    Moreover, since $(\phi^{(*1)})_0=\phi_0\leq\|\phi\|_\infty\Delta t$, the error arising from Lemma~\ref{lem.phi-convolution} satisfies
    \[
        (\phi^{(*1)})_0l'\alpha^{l'-2}O((\Delta t)^3)
        =l'\alpha^{l'-2}O((\Delta t)^4).
    \]
    Since $l'\geq2$, the first error can be absorbed into the second one. Therefore, the total error in this case is
    \[
        l'\alpha^{l'-2}O((\Delta t)^4),
    \]
    which agrees with
    \[
        \left(\bm{1}_{l\geq2}l+\bm{1}_{l'\geq2}l'\right)
        \alpha^{l+l'-3}O((\Delta t)^4)
    \]
    for $l=1$ and $l'\geq2$.\\
    
    Finally, consider the case $(l,l')=(1,1)$. In this case, the error arising from Lemma~\ref{lem.phi-convolution} is zero, so it remains only to evaluate
    \[
        \int_{(n-1)\Delta t}^{n\Delta t}
        D_{1,n,n}(s)D_{1,n,n+p}(s)ds.
    \]
    For $p=0$, since
    \[
        |D_{1,n,n}(s)|\leq2\|\phi\|_\infty\Delta t,
    \]
    we have
    \begin{align*}
        \left|\int_{(n-1)\Delta t}^{n\Delta t}
        D_{1,n,n}(s)^2ds\right|
        &\leq\int_{(n-1)\Delta t}^{n\Delta t}
        \left(2\|\phi\|_\infty\Delta t\right)^2ds\\
        &=4\|\phi\|_\infty^2(\Delta t)^3.
    \end{align*}

    For $p>0$, the $L^1$ Poincaré inequality and the zero-mean property of
    $D_{1,n,n+p}$ give
    \begin{align*}
        &\left|\int_{(n-1)\Delta t}^{n\Delta t}
        D_{1,n,n}(s)D_{1,n,n+p}(s)ds\right|\\
        &\quad\leq2\|\phi\|_\infty\Delta t
        \int_{(n-1)\Delta t}^{n\Delta t}|D_{1,n,n+p}(s)|ds\\
        &\quad\leq2C\|\phi\|_\infty(\Delta t)^2
        \int_{(n-1)\Delta t}^{n\Delta t}
        |\partial_sD_{1,n,n+p}(s)|ds.
    \end{align*}
    Moreover,
    \begin{align*}
        |\partial_sD_{1,n,n+p}(s)|
        &=\left|\phi((n+p)\Delta t-s)
        -\phi((n+p-1)\Delta t-s)\right|\\
        &=\left|\int_{(n+p-1)\Delta t-s}^{(n+p)\Delta t-s}
        \phi'(u)du\right|\\
        &\leq\sqrt{\Delta t}\|\phi'\|_2.
    \end{align*}
    Therefore,
    \begin{align*}
        &\left|\int_{(n-1)\Delta t}^{n\Delta t}
        D_{1,n,n}(s)D_{1,n,n+p}(s)ds\right|\\
        &\quad\leq2C\|\phi\|_\infty\|\phi'\|_2(\Delta t)^{7/2}.
    \end{align*}
    This proves the assertion for $(l,l')=(1,1)$ and completes the proof.
\end{proof}

\begin{lem}\label{lem.phi-sum-evaluation}
    For $p,l,l'=0,1,2,\ldots$, the following expansion holds as $\Delta t\to0$:
    \begin{align}
        &\sum_{m=-\infty}^n\int_{(m-1)\Delta t}^{m\Delta t}
        \left(\int_{(n-1)\Delta t}^{n\Delta t}\phi^{(*l)}(t-s)dt\right)
        \left(\int_{(n+p-1)\Delta t}^{(n+p)\Delta t}\phi^{(*l')}(t'-s)dt'\right)ds \nonumber\\
        &=\Delta t\sum_{m=-\infty}^n
        (\phi^{(*l)})_{n-m}(\phi^{(*l')})_{n+p-m} \nonumber\\
        &\quad+\begin{cases}
            \left(\bm{1}_{l\geq2}l\alpha^{l-2}
            +\bm{1}_{l'\geq2}l'\alpha^{l'-2}\right)
            O((\Delta t)^3),
            & ll'=0,\\
            (l+l')\alpha^{l+l'-2}O((\Delta t)^3),
            & ll'>0\text{ and }(l,l')\neq(1,1),\\
            O((\Delta t)^{7/2}),
            & p>0\text{ and }(l,l')=(1,1),\\
            O((\Delta t)^3),
            & p=0\text{ and }(l,l')=(1,1).
        \end{cases}
    \end{align}
    The constants implicit in the $O$ terms depend only on $\phi$ and are independent of $n$, $p$, $l$, $l'$, and $\Delta t$.
\end{lem}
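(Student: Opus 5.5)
The plan is to sum the termwise expansions of Lemma~\ref{lem.decompose-time} over $m\le n$ and to control the accumulated error using the summability statements of Lemmas~\ref{lem.phi-convolution} and~\ref{lem.phiconvolution-bound}. The main term is immediate: for each $m$, part~(i) (for $m<n$) or part~(ii) (for $m=n$) produces $\Delta t(\phi^{(*l)})_{n-m}(\phi^{(*l')})_{n+p-m}$, and summing gives the leading term of the statement. It remains to show that the sum of the remainders has the claimed order. I will split $\sum_{m\le n}$ into the diagonal term $m=n$ and the off-diagonal terms $m<n$.

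\emph{Case $ll'=0$.} The remainders are explicit. For $l=0$, only $m=n$ contributes, through $\delta_{n,m}$, with error $\bm{1}_{l'\ge2}l'\alpha^{l'-2}O((\Delta t)^3)$. For $l'=0$, only $m=n+p$ contributes, and an error appears only when $p=0$; it is $\bm{1}_{l\ge2}l\alpha^{l-2}O((\Delta t)^3)$. This gives the first line of the statement directly.

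\emph{Case $ll'>0$, diagonal term.} Part~(ii) of Lemma~\ref{lem.decompose-time} already has the required form. Its error $(\bm{1}_{l\ge2}l+\bm{1}_{l'\ge2}l')\alpha^{l+l'-3}O((\Delta t)^4)$ is dominated by $(l+l')\alpha^{l+l'-2}O((\Delta t)^3)$ once $\Delta t\le\alpha$. If $\alpha$ is fixed, the constant can be rescaled by $\alpha^{-1}$. For $(l,l')=(1,1)$, the diagonal term supplies exactly the $O((\Delta t)^{7/2})$ error when $p>0$ and the $O((\Delta t)^3)$ error when $p=0$.

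\emph{Case $ll'>0$, off-diagonal terms.} Three sums must be controlled. The first is $\sum_{m<n}(\phi^{(*l)})_{n-m}\bm{1}_{l'\ge2}l'\alpha^{l'-2}O((\Delta t)^3)$, which is at most $\alpha^l l'\alpha^{l'-2}O((\Delta t)^3)$ by Lemma~\ref{lem.phiconvolution-bound}. The second is $\sum_{m<n}\bm{1}_{l\ge2}l\alpha^{l-2}O((\Delta t)^3)B_{l',m,n+p}$. Here $\sum_m B_{l',m,n+p}=\frac{1}{\Delta t}\int\!\int\phi^{(*l')}\le\alpha^{l'}$, since the $s$-bins tile the line and the $t'$-integral has length $\Delta t$. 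So this sum is at most $l\alpha^{l+l'-2}O((\Delta t)^3)$.

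The third sum, the Poincaré cross term, is the main obstacle, because it must be summed over infinitely many bins $m$ while keeping the order $(\Delta t)^3$ and the geometric factor in $l,l'$. I will first apply Cauchy--Schwarz in $m$:
\[
\sum_m\|\partial_sD_{l,m,n}\|\|\partial_sD_{l',m,n+p}\|\le\Big(\sum_m\|\partial_sD_{l,m,n}\|^2\Big)^{1/2}\Big(\sum_m\|\partial_sD_{l',m,n+p}\|^2\Big)^{1/2}.
\]
Next, $\partial_sD_{l,m,n}(s)=\phi^{(*l)}((n-1)\Delta t-s)-\phi^{(*l)}(n\Delta t-s)=-\int_{(n-1)\Delta t-s}^{n\Delta t-s}(\phi^{(*l)})'(u)\,du$. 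Its square is at most $\Delta t\int|(\phi^{(*l)})'|^2$ over a window of length $\Delta t$. Summing over the bins in $m$ covers each $u$ at most twice, so
\[
\sum_m\|\partial_sD_{l,m,n}\|^2_{L^2}\le2(\Delta t)^2\|(\phi^{(*l)})'\|_2^2.
\]
The bound $\|(\phi^{(*l)})'\|_2\le C\alpha^{l-1}$ (for $l\ge2$, as in the proof of Lemma~\ref{lem.phi-convolution}; for $l=1$ it is $\|\phi'\|_2$) then bounds the cross term by $(\Delta t)^2\cdot(\Delta t)^2\,C\alpha^{l+l'-2}$, that is, $\alpha^{l+l'-2}O((\Delta t)^4)$. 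This is even better than needed.

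The $m=n+p$ subtlety (a Dirac contribution) does not arise here because $l'\ge1$. Collecting the three off-diagonal sums with the diagonal term yields $(l+l')\alpha^{l+l'-2}O((\Delta t)^3)$ for $(l,l')\ne(1,1)$. For $(l,l')=(1,1)$, the off-diagonal sums contribute only the Poincaré term, which is $O((\Delta t)^4)$, so the diagonal bound dominates. All constants depend only on $\|\phi\|_\infty$, $\|\phi'\|_2$ and $\alpha$, and are therefore uniform in $n,p,l,l'$.
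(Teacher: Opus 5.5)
Your proposal is correct and follows the paper's proof essentially step for step. You sum Lemma~\ref{lem.decompose-time} over $m\le n$, isolate the diagonal term $m=n$, and bound the two bin-average error sums via Lemma~\ref{lem.phiconvolution-bound} and $\sum_{m<n}B_{l',m,n+p}\le\|\phi^{(*l')}\|_1=\alpha^{l'}$. You then control the Poincar\'e cross terms by Cauchy--Schwarz in $m$ to get $O((\Delta t)^4)\|(\phi^{(*l)})'\|_2\|(\phi^{(*l')})'\|_2$; your overlap-counting bound on $\sum_m\|\partial_sD_{l,m,n}\|^2$ is just an explicit form of the paper's Minkowski-inequality step.

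The only cosmetic point is that the available estimate is $\|(\phi^{(*l)})'\|_2\le C\alpha^{l-2}$, so writing $C\alpha^{l-1}$ absorbs a factor $\alpha^{-1}$ into $C$. This is harmless because $\alpha$ is fixed by $\phi$, just as in the paper's own absorption of the $(\Delta t)^4$ terms.
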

\begin{proof}
    We first consider the case $ll'=0$. By Lemma~\ref{lem.decompose-time}, the expansion is exact for every $m<n$. Therefore, only the term $m=n$ contributes to the error. Applying the second part of Lemma~\ref{lem.decompose-time}, we obtain
    \[
        \left(\bm{1}_{l\geq2}l\alpha^{l-2}
        +\bm{1}_{l'\geq2}l'\alpha^{l'-2}\right)
        O((\Delta t)^3).
    \]

    Next, suppose that $ll'>0$. By Lemma~\ref{lem.phiconvolution-bound}, the sum of the first two error terms in Eq.~\eqref{eq.decompose-time-error} is bounded by
    \begin{align*}
        &\left|\sum_{m=-\infty}^{n-1}\left\{
        (\phi^{(*l)})_{n-m}\bm{1}_{l'\geq2}l'\alpha^{l'-2}O((\Delta t)^3)\right.\right.\\
        &\left.\left.\quad+\bm{1}_{l\geq2}l\alpha^{l-2}O((\Delta t)^3)
        \dfrac{1}{\Delta t}\int_{(n+p-1)\Delta t}^{(n+p)\Delta t}
        \int_{(m-1)\Delta t}^{m\Delta t}\phi^{(*l')}(t'-s')ds'dt'
        \right\}\right|\\
        &\quad\leq\left(
        \bm{1}_{l'\geq2}l'\alpha^{l+l'-2}
        +\bm{1}_{l\geq2}l\alpha^{l-2}\|\phi^{(*l')}\|_1
        \right)O((\Delta t)^3)\\
        &\quad=\left(
        \bm{1}_{l\geq2}l+\bm{1}_{l'\geq2}l'
        \right)\alpha^{l+l'-2}O((\Delta t)^3).
    \end{align*}

    By the Cauchy--Schwarz inequality, the sum of the third error terms in Eq.~\eqref{eq.decompose-time-error} is bounded by
    \begin{align*}
        &\sum_{m=-\infty}^{n-1}C(\Delta t)^2
        \|\partial_sD_{l,m,n}\|_{L^2((m-1)\Delta t,m\Delta t]}
        \|\partial_sD_{l',m,n+p}\|_{L^2((m-1)\Delta t,m\Delta t]}\\
        &\quad\leq C(\Delta t)^2
        \left(\sum_{m=-\infty}^{n-1}
        \|\partial_sD_{l,m,n}\|_{L^2((m-1)\Delta t,m\Delta t]}^2\right)^{1/2}
        \left(\sum_{m=-\infty}^{n-1}
        \|\partial_sD_{l',m,n+p}\|_{L^2((m-1)\Delta t,m\Delta t]}^2\right)^{1/2}\\
        &=C(\Delta t)^2
        \left(\sum_{m=-\infty}^{n-1}
        \left\|\int_{(n-1)\Delta t}^{n\Delta t}
        (\phi^{(*l)})'(t-\cdot)dt
        \right\|_{L^2((m-1)\Delta t,m\Delta t]}^2\right)^{1/2}\\
        &\qquad\times
        \left(\sum_{m=-\infty}^{n-1}
        \left\|\int_{(n+p-1)\Delta t}^{(n+p)\Delta t}
        (\phi^{(*l')})'(t'-\cdot)dt'
        \right\|_{L^2((m-1)\Delta t,m\Delta t]}^2\right)^{1/2}\\
        &\leq C(\Delta t)^2
        \int_{(n-1)\Delta t}^{n\Delta t}\|(\phi^{(*l)})'\|_2dt
        \int_{(n+p-1)\Delta t}^{(n+p)\Delta t}\|(\phi^{(*l')})'\|_2dt'\\
        &=C(\Delta t)^4
        \|(\phi^{(*l)})'\|_2\|(\phi^{(*l')})'\|_2.
    \end{align*}

    Suppose first that $l,l'\geq2$. By Young's convolution inequality,
    \[
        \|(\phi^{(*l)})'\|_2
        \leq\alpha^{l-2}
        \left(\alpha\|\phi'\|_2+|\phi(0)|\|\phi\|_2\right),
    \]
    and the same bound holds with $l'$ in place of $l$. Hence, the sum of the third error terms for $m<n$ is
    \[
        \alpha^{l+l'-4}O((\Delta t)^4).
    \]
    By the second part of Lemma~\ref{lem.decompose-time}, the error corresponding to $m=n$ is
    \[
        \left(\bm{1}_{l\geq2}l+\bm{1}_{l'\geq2}l'\right)
        \alpha^{l+l'-3}O((\Delta t)^4).
    \]
    Combining these estimates, the total error is bounded by
    \begin{align*}
        &\left(\bm{1}_{l\geq2}l+\bm{1}_{l'\geq2}l'\right)
        \alpha^{l+l'-2}O((\Delta t)^3)
        +\alpha^{l+l'-4}O((\Delta t)^4)\\
        &\quad+\left(\bm{1}_{l\geq2}l+\bm{1}_{l'\geq2}l'\right)
        \alpha^{l+l'-3}O((\Delta t)^4)\\
        &=\left(\bm{1}_{l\geq2}l+\bm{1}_{l'\geq2}l'\right)
        \alpha^{l+l'-2}O((\Delta t)^3).
    \end{align*}

    If $l=1$ and $l'\geq2$, then
    \[
        \|(\phi^{(*l)})'\|_2=\|\phi'\|_2,
        \qquad
        \|(\phi^{(*l')})'\|_2
        \leq\alpha^{l'-2}
        \left(\alpha\|\phi'\|_2+|\phi(0)|\|\phi\|_2\right).
    \]
    Therefore, the sum of the third error terms for $m<n$ is
    \[
        \alpha^{l'-2}O((\Delta t)^4)
        =\alpha^{l+l'-3}O((\Delta t)^4).
    \]
    The same estimate holds when $l\geq2$ and $l'=1$. Adding the error corresponding to $m=n$, the total error is bounded by
    \begin{align*}
        &\left(\bm{1}_{l\geq2}l+\bm{1}_{l'\geq2}l'\right)
        \alpha^{l+l'-2}O((\Delta t)^3)
        +\alpha^{l+l'-3}O((\Delta t)^4)\\
        &\quad+\left(\bm{1}_{l\geq2}l+\bm{1}_{l'\geq2}l'\right)
        \alpha^{l+l'-3}O((\Delta t)^4)\\
        &=\left(\bm{1}_{l\geq2}l+\bm{1}_{l'\geq2}l'\right)
        \alpha^{l+l'-2}O((\Delta t)^3).
    \end{align*}
    Since $\alpha>0$ is fixed, the $O((\Delta t)^4)$ terms in both cases can be absorbed into the $O((\Delta t)^3)$ term for sufficiently small $\Delta t$.

    Finally, let $(l,l')=(1,1)$. The first two error terms in Eq.~\eqref{eq.decompose-time-error} vanish, and the preceding Cauchy--Schwarz estimate shows that the sum of the third error terms for $m<n$ is $O((\Delta t)^4)$. By the second part of Lemma~\ref{lem.decompose-time}, the error corresponding to $m=n$ is $O((\Delta t)^{7/2})$ when $p>0$ and $O((\Delta t)^3)$ when $p=0$. Consequently, the total error is $O((\Delta t)^{7/2})$ when $p>0$ and $O((\Delta t)^3)$ when $p=0$. This completes the proof.
\end{proof}

\begin{lem}\label{lem.lipschitz-space}
    For every $l=1,2,\ldots$, $f^{(*l)}(\bm{x},\bm{y})$ is $L_y$-Lipschitz continuous with respect to its second argument.
\end{lem}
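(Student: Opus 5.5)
The argument is an induction on $l$, using the recursive definition in Eq.~\eqref{eq.spatial-convolution}. At each step the Lipschitz property is carried by the \emph{first} factor's action on the second argument, weighted by a kernel that integrates to one.

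For the base case $l=1$ we have $f^{(*1)}=f$, and the claim is exactly the assumption in Eq.~\eqref{eq.lipschitz-f}. Strictly, Eq.~\eqref{eq.spatial-convolution} gives $f^{(*1)}(\bm{x},\bm{y})=\int_A\delta(\bm{x}-\bm{z})f(\bm{z},\bm{y})d\bm{z}=f(\bm{x},\bm{y})$, so this is consistent.

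For the inductive step, suppose $f^{(*(l-1))}$ is defined with $l\geq2$. Write
\[
f^{(*l)}(\bm{x},\bm{y})=\int_A f^{(*(l-1))}(\bm{x},\bm{z})f(\bm{z},\bm{y})d\bm{z}.
\]
The dependence on $\bm{y}$ enters only through $f(\bm{z},\bm{y})$. Hence, for $\bm{y}_1,\bm{y}_2\in A$,
\[
|f^{(*l)}(\bm{x},\bm{y}_1)-f^{(*l)}(\bm{x},\bm{y}_2)|\leq\int_A f^{(*(l-1))}(\bm{x},\bm{z})\,|f(\bm{z},\bm{y}_1)-f(\bm{z},\bm{y}_2)|\,d\bm{z}\leq L_y\|\bm{y}_1-\bm{y}_2\|_2\int_A f^{(*(l-1))}(\bm{x},\bm{z})d\bm{z},
\]
using nonnegativity of $f^{(*(l-1))}$ (a convolution of nonnegative kernels) and Eq.~\eqref{eq.lipschitz-f}. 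So no inductive Lipschitz hypothesis on $f^{(*(l-1))}$ is actually needed. What must be propagated instead is the row-integral identity $\int_A f^{(*k)}(\bm{x},\bm{z})d\bm{z}=1$ for all $\bm{x}$ and all $k\geq0$.

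That identity follows by its own induction from the mass-preservation condition Eq.~\eqref{eq.normalization-2}. For $k=0$ it is the delta. For $k\geq1$ we use Fubini:
\[
\int_A f^{(*k)}(\bm{x},\bm{z})d\bm{z}=\int_A f^{(*(k-1))}(\bm{x},\bm{w})\left(\int_A f(\bm{w},\bm{z})d\bm{z}\right)d\bm{w}=\int_A f^{(*(k-1))}(\bm{x},\bm{w})d\bm{w}=1.
\]
Substituting this into the previous bound gives the constant $L_y$ exactly, uniformly in $l$.

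The main point of care is using the correct normalization. Eq.~\eqref{eq.normalization-1} integrates over the first argument and would not suffice here. It is Eq.~\eqref{eq.normalization-2} that keeps the Lipschitz constant from growing with $l$. Two further details need checking: the Fubini interchange, which holds since everything is nonnegative and $f$ is bounded on the bounded domain $A$, and the interpretation of the $k=0$ delta case.
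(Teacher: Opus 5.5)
Your proof is correct, but it splits the convolution the opposite way from the paper. The paper writes $f^{(*(k+1))}(\bm{x},\bm{y})=\int_A f(\bm{x},\bm{w})f^{(*k)}(\bm{w},\bm{y})d\bm{w}$, peeling the single factor $f$ off on the \emph{left}, so the $\bm{y}$-dependence sits in $f^{(*k)}$. It then uses the inductive hypothesis that $f^{(*k)}$ is $L_y$-Lipschitz in its second argument, and needs only $\int_A f(\bm{x},\bm{w})d\bm{w}=1$, which is Eq.~\eqref{eq.normalization-2} for $f$ itself.

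You instead use the recursion exactly as stated in Eq.~\eqref{eq.spatial-convolution}, with $f$ on the right. The Lipschitz property then comes directly from the assumption on $f$, and what you propagate by induction is the row normalization $\int_A f^{(*k)}(\bm{x},\bm{z})d\bm{z}=1$, again derived from Eq.~\eqref{eq.normalization-2}.

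The two arguments are mirror images and equally short. The paper's version avoids an auxiliary normalization induction, but it tacitly uses associativity of the kernel composition, since its splitting is not the defining recursion. Yours works literally from the definition and yields the row normalization of all iterates as a by-product, at the cost of that extra easy induction. Your remark that Eq.~\eqref{eq.normalization-1} would not suffice is correct, and it applies to both routes.
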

\begin{proof}
    We prove the assertion by induction on $l$. For $l=1$, it follows directly from the assumption on $f$. Suppose that the assertion holds for $l=k$. Then, for every $\bm{x},\bm{y},\bm{z}\in A$,
    \begin{align*}
        &|f^{*(k+1)}(\bm{x},\bm{y})-f^{*(k+1)}(\bm{x},\bm{z})|\\
        &\quad=\left|\int_Af(\bm{x},\bm{w})
        \left(f^{(*k)}(\bm{w},\bm{y})-f^{(*k)}(\bm{w},\bm{z})\right)d\bm{w}\right|\\
        &\quad\leq\int_Af(\bm{x},\bm{w})
        |f^{(*k)}(\bm{w},\bm{y})-f^{(*k)}(\bm{w},\bm{z})|d\bm{w}\\
        &\quad\leq L_y\|\bm{y}-\bm{z}\|_2
        \int_Af(\bm{x},\bm{w})d\bm{w}\\
        &\quad=L_y\|\bm{y}-\bm{z}\|_2,
    \end{align*}
    where the last equality follows from Eq.~\eqref{eq.normalization-2}. Thus, the assertion also holds for $l=k+1$, which completes the proof.
\end{proof}

\begin{lem}\label{lem.decompose-space}
    For $i,j,k=1,\ldots,d$ and $l,l'=0,1,2,\ldots$, the following expansion holds as $|S|\to0$:
    \begin{align}
        &\int_{S_k}
        \left(\int_{S_i}f^{(*l)}(\bm{x},\bm{y})d\bm{x}\right)
        \left(\int_{S_j}f^{(*l')}(\bm{x}',\bm{y})d\bm{x}'\right)
        \bar{\lambda}(\bm{y})d\bm{y}\nonumber\\
        &=\left(\int_{S_i}\int_{S_k}
        f^{(*l)}(\bm{x},\bm{y}')p_k(\bm{y}')
        d\bm{y}'d\bm{x}\right)
        \left(\int_{S_j}\int_{S_k}
        f^{(*l')}(\bm{x}',\bm{y}')p_k(\bm{y}')
        d\bm{y}'d\bm{x}'\right)
        \int_{S_k}\bar{\lambda}(\bm{y})d\bm{y}\nonumber\\
        &\quad+\begin{cases}
            O(|S_i||S_j||S_k||S|^{1/q}),
            & ll'>0,\\
            \delta_{i,k}O(|S_j||S_k||S|^{1/q}),
            & l=0,\ l'>0,\\
            \delta_{j,k}O(|S_i||S_k||S|^{1/q}),
            & l>0,\ l'=0,\\
            0,
            & (l,l')=(0,0).
        \end{cases}
    \end{align}
    The constants implicit in the $O$ terms depend only on $f$, $\bar{\lambda}$, and the regularity constant of the spatial partition, and are independent of $i$, $j$, $k$, $l$, $l'$, and the spatial partition.
\end{lem}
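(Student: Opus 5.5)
The plan mirrors the proof of Lemma~\ref{lem.decompose-time}: split each factor into its $p_k$-average plus a fluctuation, and bound the cross term using the Lipschitz property in the second argument. For $\bm{y}\in S_k$ and $l\geq1$, set
\[
G_{i,l}(\bm{y})=\int_{S_i}f^{(*l)}(\bm{x},\bm{y})d\bm{x},\qquad
\bar G_{i,l,k}=\int_{S_k}G_{i,l}(\bm{y}')p_k(\bm{y}')d\bm{y}',
\]
and write $G_{i,l}(\bm{y})=\bar G_{i,l,k}+E_{i,l,k}(\bm{y})$. The quantity $\bar G_{i,l,k}$ is exactly the bracketed coefficient in the main term.

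\textbf{Case $l=0$ or $l'=0$.} For $l=0$, $f^{(*0)}=\delta$, so $G_{i,0}(\bm{y})=\bm{1}_{\bm{y}\in S_i}$. On $S_k$ this equals $\delta_{i,k}$, and $\bar G_{i,0,k}=\delta_{i,k}$ because $p_k$ is a probability density. Hence the $l=0$ factor is exact and introduces no error, and the same holds for $l'=0$. When $(l,l')=(0,0)$ both sides therefore coincide. When $l=0$ and $l'>0$, the left-hand side equals
\[
\delta_{i,k}\int_{S_k}G_{j,l'}(\bm{y})\bar\lambda(\bm{y})d\bm{y}.
\]
The error is then $\delta_{i,k}\int_{S_k}E_{j,l',k}\bar\lambda$, so it remains to bound the fluctuation.

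\textbf{Fluctuation bound.} By Lemma~\ref{lem.lipschitz-space}, $f^{(*l)}(\bm{x},\cdot)$ is $L_y$-Lipschitz. Together with the regularity condition \eqref{eq.regularity}, this gives
\[
|E_{i,l,k}(\bm{y})|\leq\int_{S_i}\mathbb{E}_{P_k}\bigl[|f^{(*l)}(\bm{x},\bm{y})-f^{(*l)}(\bm{x},\bm{Y})|\bigr]d\bm{x}\leq C_rL_y|S_i||S|^{1/q}.
\]
This holds uniformly in $l$, which is the point of Lemma~\ref{lem.lipschitz-space}. Since $\bar\lambda$ is bounded (it is $\mu$ bounded times $\sum\alpha^n$, using $\|f^{(*n)}*\mu\|_\infty\le\|\mu\|_\infty$ from Eq.~\eqref{eq.normalization-2}), $\int_{S_k}|E_{j,l',k}|\bar\lambda\leq C|S_j||S_k||S|^{1/q}$. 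This gives the $\delta_{i,k}$ case, and symmetrically the $\delta_{j,k}$ case.

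\textbf{Case $ll'>0$.} Expand the product:
\begin{align*}
\int_{S_k}G_{i,l}G_{j,l'}\bar\lambda
&=\bar G_{i,l,k}\bar G_{j,l',k}\int_{S_k}\bar\lambda
+\bar G_{i,l,k}\int_{S_k}E_{j,l',k}\bar\lambda\\
&\quad+\bar G_{j,l',k}\int_{S_k}E_{i,l,k}\bar\lambda
+\int_{S_k}E_{i,l,k}E_{j,l',k}\bar\lambda .
\end{align*}
Unlike the temporal case, the cross terms do not vanish, because the weight is $\bar\lambda$ and not $p_k$. They are nonetheless small enough. We need a uniform sup bound on $f^{(*l)}$: for $l\ge1$, $f^{(*l)}(\bm{x},\bm{y})=\int f^{(*(l-1))}(\bm{x},\bm{z})f(\bm{z},\bm{y})d\bm{z}\le\|f\|_\infty$ by Eq.~\eqref{eq.normalization-2} and induction. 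It follows that $|\bar G_{i,l,k}|\le\|f\|_\infty|S_i|$. Each cross term is then at most
\[
\|f\|_\infty|S_i|\cdot C|S_j||S_k||S|^{1/q},
\]
and the last term is even smaller, $O(|S_i||S_j||S_k||S|^{2/q})$. All constants are independent of $l,l',i,j,k$.

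The main subtlety, rather than any hard estimate, is making sure the constants do not grow with $l$. The uniform Lipschitz constant from Lemma~\ref{lem.lipschitz-space} and the uniform sup bound $\|f^{(*l)}\|_\infty\le\|f\|_\infty$ are what guarantee this.
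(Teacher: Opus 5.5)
Your proposal is correct and follows essentially the same route as the paper: split each factor into its $p_k$-average plus a fluctuation, bound the fluctuation uniformly in $l$ via Lemma~\ref{lem.lipschitz-space} and the regularity condition, bound the averaged factors by $\|f\|_\infty|S_i|$ using $\|f^{(*l)}\|_\infty\leq\|f\|_\infty$, and treat the $l=0$ or $l'=0$ factors as exact indicators $\delta_{i,k}$ or $\delta_{j,k}$. Two small differences are both valid: you justify the boundedness of $\bar{\lambda}$ explicitly, which the paper only asserts, and you obtain the sup bound on $f^{(*l)}$ through Eq.~\eqref{eq.normalization-2} rather than Eq.~\eqref{eq.normalization-1}.
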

\begin{proof}
    We first consider the case $l\geq1$ and $l'\geq1$.  For every $\bm{y}\in S_k$, we decompose the first spatial integral as
    \begin{align*}
        \int_{S_i}f^{(*l)}(\bm{x},\bm{y})d\bm{x}
        &=\int_{S_i}\int_{S_k}
        f^{(*l)}(\bm{x},\bm{y}')p_k(\bm{y}')
        d\bm{y}'d\bm{x}\\
        &\quad+\int_{S_i}\left(
        f^{(*l)}(\bm{x},\bm{y})
        -\int_{S_k}f^{(*l)}(\bm{x},\bm{y}')
        p_k(\bm{y}')d\bm{y}'
        \right)d\bm{x}.
    \end{align*}
    Similarly,
    \begin{align*}
        \int_{S_j}f^{(*l')}(\bm{x}',\bm{y})d\bm{x}'
        &=\int_{S_j}\int_{S_k}
        f^{(*l')}(\bm{x}',\bm{y}')p_k(\bm{y}')
        d\bm{y}'d\bm{x}'\\
        &\quad+\int_{S_j}\left(
        f^{(*l')}(\bm{x}',\bm{y})
        -\int_{S_k}f^{(*l')}(\bm{x}',\bm{y}')
        p_k(\bm{y}')d\bm{y}'
        \right)d\bm{x}'.
    \end{align*}
    Multiplying these decompositions and integrating over $S_k$, we obtain
    \begin{align*}
        &\int_{S_k}
        \left(\int_{S_i}f^{(*l)}(\bm{x},\bm{y})d\bm{x}\right)
        \left(\int_{S_j}f^{(*l')}(\bm{x}',\bm{y})d\bm{x}'\right)
        \bar{\lambda}(\bm{y})d\bm{y}\\
        &=\left(\int_{S_i}\int_{S_k}
        f^{(*l)}(\bm{x},\bm{y}')p_k(\bm{y}')
        d\bm{y}'d\bm{x}\right)
        \left(\int_{S_j}\int_{S_k}
        f^{(*l')}(\bm{x}',\bm{y}')p_k(\bm{y}')
        d\bm{y}'d\bm{x}'\right)
        \int_{S_k}\bar{\lambda}(\bm{y})d\bm{y}\\
        &\quad+(\text{error terms}).
    \end{align*}

    By Lemma~\ref{lem.lipschitz-space} and the regularity condition on the spatial partition,
    \begin{align*}
        &\left|
        \int_{S_i}\left(
        f^{(*l)}(\bm{x},\bm{y})
        -\int_{S_k}f^{(*l)}(\bm{x},\bm{y}')p_k(\bm{y}')d\bm{y}'
        \right)d\bm{x}
        \right|\\
        &\quad\leq
        \int_{S_i}\int_{S_k}
        |f^{(*l)}(\bm{x},\bm{y})
        -f^{(*l)}(\bm{x},\bm{y}')|
        p_k(\bm{y}')d\bm{y}'d\bm{x}\\
        &\quad\leq
        L_y|S_i|\int_{S_k}
        \|\bm{y}-\bm{y}'\|_2p_k(\bm{y}')d\bm{y}'\\
        &\quad\leq C_rL_y|S_i||S|^{1/q}.
    \end{align*}
    Similarly,
    \begin{align*}
        &\left|
        \int_{S_j}\left(
        f^{(*l')}(\bm{x}',\bm{y})
        -\int_{S_k}f^{(*l')}(\bm{x}',\bm{y}')p_k(\bm{y}')d\bm{y}'
        \right)d\bm{x}'
        \right|\\
        &\quad\leq C_rL_y|S_j||S|^{1/q}.
    \end{align*}

    Moreover, the normalization condition in Eq.~\eqref{eq.normalization-1} implies that
    \[
        \|f^{(*l)}\|_\infty\leq\|f\|_\infty
    \]
    for every $l\geq1$. Indeed, for $l\geq2$,
    \begin{align*}
        f^{(*l)}(\bm{x},\bm{y})
        &=\int_Af(\bm{x},\bm{w})
        f^{*(l-1)}(\bm{w},\bm{y})d\bm{w}\\
        &\leq\|f\|_\infty
        \int_Af^{*(l-1)}(\bm{w},\bm{y})d\bm{w}
        =\|f\|_\infty.
    \end{align*}
    Therefore,
    \[
        \int_{S_i}\int_{S_k}
        f^{(*l)}(\bm{x},\bm{y}')p_k(\bm{y}')
        d\bm{y}'d\bm{x}
        =O(|S_i|)
    \]
    and
    \[
        \int_{S_j}\int_{S_k}
        f^{(*l')}(\bm{x}',\bm{y}')p_k(\bm{y}')
        d\bm{y}'d\bm{x}'
        =O(|S_j|).
    \]
    Since $\bar{\lambda}$ is bounded,
    \[
        \int_{S_k}\bar{\lambda}(\bm{y})d\bm{y}=O(|S_k|).
    \]
    Since the two differences are bounded by
    $O(|S_i||S|^{1/q})$ and
    $O(|S_j||S|^{1/q})$, respectively, while the two averaged terms are bounded by $O(|S_i|)$ and $O(|S_j|)$, all the error terms obtained by multiplying the two decompositions are bounded by
    \[
        O(|S_i||S_j||S_k||S|^{1/q}).
    \] Thus, when $ll'>0$, the total error is
    \[
        O(|S_i||S_j||S_k||S|^{1/q}).
    \]

    Next, suppose that $l=0$ and $l'\geq1$. For almost every $\bm{y}\in S_k$,
    \[
        \int_{S_i}f^{(*0)}(\bm{x},\bm{y})d\bm{x}
        =\int_{S_i}\delta(\bm{x}-\bm{y})d\bm{x}
        =\delta_{i,k},
    \]
    and
    \[
        \int_{S_i}\int_{S_k}
        f^{(*0)}(\bm{x},\bm{y}')p_k(\bm{y}')
        d\bm{y}'d\bm{x}
        =\delta_{i,k}.
    \]
    Therefore, the difference between the left-hand side and the main term is
    \begin{align*}
        &\delta_{i,k}\int_{S_k}\left\{
        \int_{S_j}f^{(*l')}(\bm{x}',\bm{y})d\bm{x}'
        -\int_{S_j}\int_{S_k}
        f^{(*l')}(\bm{x}',\bm{y}')p_k(\bm{y}')
        d\bm{y}'d\bm{x}'
        \right\}\bar{\lambda}(\bm{y})d\bm{y}.
    \end{align*}
    By Lemma~\ref{lem.lipschitz-space}, its absolute value is bounded by
    \[
        \delta_{i,k}O(|S_j||S_k||S|^{1/q}).
    \]

    The case $l\geq1$ and $l'=0$ follows symmetrically, and the corresponding error is
    \[
        \delta_{j,k}O(|S_i||S_k||S|^{1/q}).
    \]

    Finally, suppose that $(l,l')=(0,0)$. Then,
    \begin{align*}
        &\int_{S_k}
        \left(\int_{S_i}f^{(*0)}(\bm{x},\bm{y})d\bm{x}\right)
        \left(\int_{S_j}f^{(*0)}(\bm{x}',\bm{y})d\bm{x}'\right)
        \bar{\lambda}(\bm{y})d\bm{y}\\
        &=\delta_{i,k}\delta_{j,k}
        \int_{S_k}\bar{\lambda}(\bm{y})d\bm{y}\\
        &=\left(\int_{S_i}\int_{S_k}
        f^{(*0)}(\bm{x},\bm{y}')p_k(\bm{y}')
        d\bm{y}'d\bm{x}\right)
        \left(\int_{S_j}\int_{S_k}
        f^{(*0)}(\bm{x}',\bm{y}')p_k(\bm{y}')
        d\bm{y}'d\bm{x}'\right)
        \int_{S_k}\bar{\lambda}(\bm{y})d\bm{y}.
    \end{align*}
    Thus, the equality is exact when $(l,l')=(0,0)$, which completes the proof.
\end{proof}

\begin{lem}\label{lem.Fnij}
    For every $i,j=1,\ldots,d$, the following equality holds for $n=0,1$:
    \begin{equation}
        [F^n]_{ij}
        =\int_{S_i}\int_{S_j}
        f^{(*n)}(\bm{x},\bm{y})p_j(\bm{y})
        d\bm{y}d\bm{x}.
    \end{equation}
    For every $n=2,3,\ldots$,
    \begin{equation}
        \left|
        [F^n]_{ij}
        -\int_{S_i}\int_{S_j}
        f^{(*n)}(\bm{x},\bm{y})p_j(\bm{y})
        d\bm{y}d\bm{x}
        \right|
        \leq (n-1)C_rL_y|S_i||S|^{1/q}.
    \end{equation}
\end{lem}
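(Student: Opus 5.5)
We argue by induction on $n$, using a telescoping decomposition. Write
\[
G^{(n)}_{ij}:=\int_{S_i}\int_{S_j}f^{(*n)}(\bm{x},\bm{y})p_j(\bm{y})d\bm{y}d\bm{x}.
\]
For $n=0$, $G^{(0)}_{ij}=\int_{S_i}\int_{S_j}\delta(\bm{x}-\bm{y})p_j(\bm{y})d\bm{y}d\bm{x}=\delta_{ij}$, which is $[I]_{ij}$. For $n=1$, $G^{(1)}_{ij}=f_{ij}$ by Eq.~\eqref{eq.spatial-cgkernel}. Both identities are immediate.

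For $n\geq2$, split $f^{(*n)}(\bm{x},\bm{y})=\int_A f^{(*(n-1))}(\bm{x},\bm{w})f(\bm{w},\bm{y})d\bm{w}$ over the partition, $A=\bigcup_k S_k$. This gives
\[
G^{(n)}_{ij}=\sum_{k=1}^d\int_{S_i}\int_{S_k}f^{(*(n-1))}(\bm{x},\bm{w})\Big(\int_{S_j}f(\bm{w},\bm{y})p_j(\bm{y})d\bm{y}\Big)d\bm{w}\,d\bm{x}.
\]
We compare this with $\sum_k G^{(n-1)}_{ik}f_{kj}$. In that sum, $f_{kj}=\int_{S_k}\int_{S_j}f(\bm{w},\bm{y})p_j(\bm{y})d\bm{y}d\bm{w}$ integrates over $\bm{w}\in S_k$, while $G^{(n-1)}_{ik}$ averages the second argument of $f^{(*(n-1))}$ against $p_k$. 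It is better to peel off the convolution factor on the other side:
\[
f^{(*n)}(\bm{x},\bm{y})=\int_A f(\bm{x},\bm{w})f^{(*(n-1))}(\bm{w},\bm{y})d\bm{w}.
\]
Then
\[
G^{(n)}_{ij}=\sum_k\int_{S_i}\int_{S_k}f(\bm{x},\bm{w})\,h_j(\bm{w})\,d\bm{w}\,d\bm{x},\qquad h_j(\bm{w})=\int_{S_j}f^{(*(n-1))}(\bm{w},\bm{y})p_j(\bm{y})d\bm{y}.
\]
The target expression is
\[
\sum_k f_{ik}G^{(n-1)}_{kj}=\sum_k\Big(\int_{S_i}\int_{S_k}f(\bm{x},\bm{w}')p_k(\bm{w}')d\bm{w}'d\bm{x}\Big)\Big(\int_{S_k}h_j(\bm{w})d\bm{w}\Big).
\]
The difference of the two is
\[
\sum_k\int_{S_k}h_j(\bm{w})\int_{S_i}\big(f(\bm{x},\bm{w})-\mathbb{E}_{P_k}[f(\bm{x},\bm{W}')]\big)d\bm{x}\,d\bm{w}.
\]
By the Lipschitz assumption \eqref{eq.lipschitz-f} in the second argument and the regularity condition \eqref{eq.regularity}, the inner $\bm{x}$-integral is at most $C_rL_y|S_i||S|^{1/q}$ in absolute value. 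Since $h_j\geq0$, the difference is therefore bounded by
\[
C_rL_y|S_i||S|^{1/q}\sum_k\int_{S_k}h_j(\bm{w})d\bm{w}=C_rL_y|S_i||S|^{1/q}\int_{S_j}\Big(\int_A f^{(*(n-1))}(\bm{w},\bm{y})d\bm{w}\Big)p_j(\bm{y})d\bm{y}.
\]
By the normalization \eqref{eq.normalization-1} iterated, $\int_A f^{(*(n-1))}(\bm{w},\bm{y})d\bm{w}=1$, and $p_j$ integrates to one by \eqref{eq.normalization-spatial-density}. So the one-step error is exactly at most $C_rL_y|S_i||S|^{1/q}$.

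For the induction step we then write
\[
\big|[F^n]_{ij}-G^{(n)}_{ij}\big|\leq\Big|\sum_k f_{ik}\big([F^{n-1}]_{kj}-G^{(n-1)}_{kj}\big)\Big|+C_rL_y|S_i||S|^{1/q}.
\]
This is where the main obstacle lies. The induction hypothesis bounds $\big|[F^{n-1}]_{kj}-G^{(n-1)}_{kj}\big|$ by $(n-2)C_rL_y|S_k||S|^{1/q}$, so we need $\sum_k f_{ik}|S_k|\leq|S_i|$. Writing $f_{ik}|S_k|$ as an integral and using \eqref{eq.normalization-2} suggests it is equal to $|S_i|$ only when $p_k$ is uniform. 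For general $p_k$ this is not direct.

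To avoid the issue, I would expand the error as a telescoping sum of $n-1$ terms of the form
\[
\sum_k [F^{m}]_{ik}\Big(\text{one-step error at bin }k\text{ with kernel }f^{(*(n-1-m))}\Big),
\]
peeling the convolution from the right instead. The one-step error at bin $k$ compares $\int_{S_k} f^{(*(n-1-m))}(\cdot,\bm{y})$ with its $p_j$-average. Using Lemma~\ref{lem.lipschitz-space}, which gives the Lipschitz property for all convolution powers, each such error is at most $C_rL_y|S_k||S|^{1/q}$ in the first index. This makes it weighted by column sums rather than row sums. Since the column sums of $F^m$ equal $1$ (Lemma~\ref{lem.weighttedsum-Fn}'s argument), I would arrange the bookkeeping so that the $|S_i|$ factor always comes from the outermost $\bm{x}\in S_i$ integral, which Lipschitz-in-$\bm{y}$ permits. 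The $p$-averaging error is then absorbed with the column-stochastic factor, yielding at most $C_rL_y|S_i||S|^{1/q}$ per step. Summing the $n-1$ steps gives the claimed bound $(n-1)C_rL_y|S_i||S|^{1/q}$.
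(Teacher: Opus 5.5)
Your base cases, your one-step estimate for the left peel, and your diagnosis of the obstacle are all correct. The last paragraph, however, does not get past that obstacle; as written, it reproduces it. In the telescoping sum you display, $\sum_k [F^m]_{ik}(\cdots)$, the power of $F$ sits on the $i$-side. Each one-step error is bounded, as you say, by $C_rL_y|S_k||S|^{1/q}$, so summing gives $C_rL_y|S|^{1/q}\sum_k [F^m]_{ik}|S_k|$. That is a row-weighted sum, exactly the quantity you showed is not controlled by $|S_i|$ when the $p_k$ are non-uniform. The claim that this becomes ``weighted by column sums'' is false for that arrangement, and ``arrange the bookkeeping'' is precisely the step that is missing. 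You name the right peel in words, but the displayed sum is the left-peel telescoping. With $F^m$ on the $i$-side, Lemma~\ref{lem.weighttedsum-Fn} only yields $C_rL_y|S_i||S|^{1/q}\bigl(1+(n-2)\|f\|_\infty|A|\bigr)$. That has the right order but not the stated constant $(n-1)C_rL_y$, since $\|f\|_\infty|A|\geq 1$.

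What is needed is to put the power of $F$ on the $j$-side: use $F^n=F^{n-1}F$ and peel the single factor $f$ off the $\bm{y}$-end of $f^{(*n)}$, as in \eqref{eq.spatial-convolution}. Set
\[
E^{(r)}_{ik}:=\sum_{l=1}^d G^{(r)}_{il}f_{lk}-G^{(r+1)}_{ik}
=\sum_{l=1}^d\int_{S_i}\int_{S_l}\Bigl(\mathbb{E}_{P_l}[f^{(*r)}(\bm{x},\bm{Y})]-f^{(*r)}(\bm{x},\bm{w})\Bigr)\mathbb{E}_{P_k}[f(\bm{w},\bm{Z})]\,d\bm{w}\,d\bm{x},
\]
where $P_l$ has density $p_l$. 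Here the Lipschitz step is applied to $f^{(*r)}$ in its second argument, so Lemma~\ref{lem.lipschitz-space} is genuinely needed. Your computed one-step error used only the Lipschitz property of $f$ and is a different quantity. Then $\int_A f(\bm{w},\bm{z})\,d\bm{w}=1$ gives $|E^{(r)}_{ik}|\leq C_rL_y|S_i||S|^{1/q}$ uniformly in $k$, with the factor $|S_i|$ coming from the outer $\bm{x}$-integral. The telescoping identity $[F^n]_{ij}-G^{(n)}_{ij}=\sum_{m=1}^{n-1}\sum_{k}E^{(n-m)}_{ik}[F^{m-1}]_{kj}$, together with $\sum_k[F^{m-1}]_{kj}=1$, gives exactly $(n-1)C_rL_y|S_i||S|^{1/q}$; this is the paper's argument. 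Equivalently, your own induction closes once you write $[F^n]_{ij}=\sum_k[F^{n-1}]_{ik}f_{kj}$. Then $[F^n]_{ij}-G^{(n)}_{ij}=\sum_k\bigl([F^{n-1}]_{ik}-G^{(n-1)}_{ik}\bigr)f_{kj}+E^{(n-1)}_{ij}$. The inductive bound $(n-2)C_rL_y|S_i||S|^{1/q}$ does not depend on $k$, and $\sum_k f_{kj}=1$.
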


\begin{proof}
    For $n=0$, since $F^0=I$ and $f^{(*0)}(\bm{x},\bm{y})=\delta(\bm{x}-\bm{y})$, we have
    \begin{align*}
        [F^0]_{ij}
        &=\delta_{i,j}=\int_{S_i}\int_{S_j}
        \delta(\bm{x}-\bm{y})p_j(\bm{y})
        d\bm{y}d\bm{x}.
    \end{align*}
    For $n=1$, the equality follows directly from the definition of $F$.

    For $n=2$, we have
    \begin{align*}
        [F^2]_{ij}
        &=\sum_{k=1}^d
        \int_{S_i}\int_{S_k}
        f(\bm{x},\bm{y})p_k(\bm{y})
        d\bm{y}d\bm{x}
        \int_{S_k}\int_{S_j}
        f(\bm{y}',\bm{z})p_j(\bm{z})
        d\bm{z}d\bm{y}'\\
        &=\int_{S_i}\int_{S_j}
        f^{(*2)}(\bm{x},\bm{z})p_j(\bm{z})
        d\bm{z}d\bm{x}\\
        &\quad-\sum_{k=1}^d
        \int_{S_i}\int_{S_k}
        \left(
        f(\bm{x},\bm{y}')
        -\mathbb{E}_{P_k}[f(\bm{x},\bm{Y})]
        \right)
        \mathbb{E}_{P_j}[f(\bm{y}',\bm{Z})]
        d\bm{y}'d\bm{x}.
    \end{align*}
    More generally, an induction on $n$ gives
    \begin{align*}
        [F^n]_{ij}
        &=\int_{S_i}\int_{S_j}
        f^{(*n)}(\bm{x},\bm{z})p_j(\bm{z})
        d\bm{z}d\bm{x}\\
        &\quad-\sum_{m=1}^{n-1}\sum_{k=1}^d\sum_{l=1}^d
        [F^{m-1}]_{kj}
        \int_{S_i}\int_{S_l}
        \left(
        f^{*(n-m)}(\bm{x},\bm{y}')
        -\mathbb{E}_{P_l}
        [f^{*(n-m)}(\bm{x},\bm{Y})]
        \right)\\
        &\hspace{45mm}\times
        \mathbb{E}_{P_k}[f(\bm{y}',\bm{Z})]
        d\bm{y}'d\bm{x}.
    \end{align*}
    Indeed, the case $n=2$ was shown above, and the identity for $n+1$ follows by multiplying the identity for $n$ by $F$ and applying the same decomposition once more.

    By Lemma~\ref{lem.lipschitz-space}, for every $\bm{y}'\in S_l$,
    \begin{align*}
        &\left|
        f^{*(n-m)}(\bm{x},\bm{y}')
        -\mathbb{E}_{P_l}
        [f^{*(n-m)}(\bm{x},\bm{Y})]
        \right|\\
        &\quad\leq
        \mathbb{E}_{P_l}
        \left[
        \left|
        f^{*(n-m)}(\bm{x},\bm{y}')
        -f^{*(n-m)}(\bm{x},\bm{Y})
        \right|
        \right]\\
        &\quad\leq
        L_y\mathbb{E}_{P_l}
        [\|\bm{y}'-\bm{Y}\|_2]
        \leq C_rL_y|S|^{1/q}.
    \end{align*}
    Therefore, the absolute value of the error term is bounded by
    \begin{align*}
        &C_rL_y|S_i||S|^{1/q}
        \sum_{m=1}^{n-1}\sum_{k=1}^d\sum_{l=1}^d
        [F^{m-1}]_{kj}
        \int_{S_l}
        \mathbb{E}_{P_k}[f(\bm{y}',\bm{Z})]
        d\bm{y}'\\
        &=C_rL_y|S_i||S|^{1/q}
        \sum_{m=1}^{n-1}\sum_{k=1}^d
        [F^{m-1}]_{kj}
        \mathbb{E}_{P_k}
        \left[
        \int_A f(\bm{y}',\bm{Z})d\bm{y}'
        \right]\\
        &=C_rL_y|S_i||S|^{1/q}
        \sum_{m=1}^{n-1}\sum_{k=1}^d
        [F^{m-1}]_{kj}\\
        &=(n-1)C_rL_y|S_i||S|^{1/q}.
    \end{align*}
    Here, the third equality follows from Eq.~\eqref{eq.normalization-1}, and the last equality follows by repeatedly applying Eq.~\eqref{eq.spatial-cg-normalization}, which gives
    \[
        \sum_{k=1}^d[F^{m-1}]_{kj}=1.
    \]
    This completes the proof.
\end{proof}

\begin{lem}\label{lem.Psi-evaluation}
    Let
    \[
        \Psi_m:=\sum_{l=0}^{\infty}(\phi^{(*l)})_mF^l.
    \]
    Then, uniformly over $m=0,1,2,\ldots$ and $i,k=1,\ldots,d$,
    \begin{equation}
        [\Psi_m]_{ik}
        =\delta_{m,0}\delta_{i,k}
        +O(\Delta t|S_i|)
    \end{equation}
    as $\Delta t\to0$. The constant implicit in the $O$ term is independent of $m$, $i$, $k$, $\Delta t$, and the spatial partition.
\end{lem}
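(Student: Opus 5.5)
The plan is to split off the $l=0$ term of the series defining $\Psi_m$ and bound the remaining terms separately. Since $(\phi^{(*0)})_m=\delta_{m,0}$ and $F^0=I$, the $l=0$ term contributes exactly $\delta_{m,0}\delta_{i,k}$. It therefore suffices to show
\[
\sum_{l=1}^{\infty}(\phi^{(*l)})_m[F^l]_{ik}=O(\Delta t|S_i|)
\]
uniformly in $m$, $i$, $k$ and the partition. All terms are nonnegative, so we may bound each factor separately and then sum a geometric series.

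For the spatial factor, we invoke the second statement of Lemma~\ref{lem.weighttedsum-Fn}. For every $l\geq1$ it gives $[F^l]_{ik}\leq\|f\|_\infty|S_i|$, with a constant independent of $l$, $k$ and the partition.

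For the temporal factor we need a uniform bound of the form $(\phi^{(*l)})_m\leq\alpha^{l-1}\|\phi\|_\infty\Delta t$ for $l\geq1$ and all $m\geq0$. We establish this as follows.

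First, for a single coarse-grained coefficient, the definitions \eqref{eq.temporal-cgkernel-1} and \eqref{eq.temporal-cgkernel-0} give $\phi_r\leq\frac{1}{\Delta t}\|\phi\|_\infty(\Delta t)^2=\|\phi\|_\infty\Delta t$ for every $r\geq0$. The inner integral is taken over an interval of length at most $\Delta t$, and so is the outer integral.

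Second, we write $(\phi^{(*l)})_m=\sum_{r=0}^m\phi_r(\phi^{(*(l-1))})_{m-r}$. We pull out $\sup_r\phi_r$ and apply Lemma~\ref{lem.phiconvolution-bound}, which gives $\sum_{r}(\phi^{(*(l-1))})_r=\alpha^{l-1}$. This yields the claimed bound. The same estimate already appears inside the proof of Lemma~\ref{lem.decompose-time}, so we may simply reuse it.

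Combining the two factor bounds gives
\[
0\leq\sum_{l\geq1}(\phi^{(*l)})_m[F^l]_{ik}\leq\|\phi\|_\infty\|f\|_\infty\Delta t|S_i|\sum_{l\geq1}\alpha^{l-1}=\frac{\|\phi\|_\infty\|f\|_\infty}{1-\alpha}\Delta t|S_i|.
\]
This is finite by the subcriticality assumption \eqref{eq.stationarity-condition}. The constant depends only on $\phi$, $f$ and $\alpha$. Nonnegativity together with the summable majorant also justifies the convergence of the series defining $\Psi_m$.

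The only step needing care is uniformity. The bound $[F^l]_{ik}\leq\|f\|_\infty|S_i|$ must hold with no growth in $l$, which Lemma~\ref{lem.weighttedsum-Fn} provides through the column-sum identity $\sum_{l}[F^{n}]_{lj}=1$. The convolution bound must carry the $\Delta t$ factor only once, rather than $(\Delta t)^l$ with $l$-dependent constants. That bound is the crux; everything else is a geometric-series summation.
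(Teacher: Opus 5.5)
Your proof is correct and matches the paper's argument essentially step for step. You split off the $l=0$ term, bound $(\phi^{(*l)})_m\le\|\phi\|_\infty\alpha^{l-1}\Delta t$ using $\phi_r\le\|\phi\|_\infty\Delta t$ and Lemma~\ref{lem.phiconvolution-bound}, bound $[F^l]_{ik}\le\|f\|_\infty|S_i|$ through the column-sum identity, and sum the geometric series to obtain the constant $\|\phi\|_\infty\|f\|_\infty/(1-\alpha)$; the only cosmetic difference is that you cite Lemma~\ref{lem.weighttedsum-Fn} for the spatial bound, whereas the paper re-derives it inline.
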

\begin{proof}
    Since $\phi$ is non-negative and bounded, Eqs.~\eqref{eq.temporal-cgkernel-1} and \eqref{eq.temporal-cgkernel-0} imply that
    \[
        0\leq\phi_m\leq\|\phi\|_\infty\Delta t
    \]
    for every $m=0,1,2,\ldots$. Therefore, for every $l\geq1$,
    \begin{align*}
        (\phi^{(*l)})_m
        &=\sum_{r=0}^m\phi_r(\phi^{*(l-1)})_{m-r}\\
        &\leq\|\phi\|_\infty\Delta t
        \sum_{r=0}^m(\phi^{*(l-1)})_{m-r}\\
        &\leq\|\phi\|_\infty\Delta t\,\alpha^{l-1},
    \end{align*}
    where the last inequality follows from Lemma~\ref{lem.phiconvolution-bound}.

    Moreover, since $f_{ij}\leq\|f\|_\infty|S_i|$ and the column sums of $F^{l-1}$ are equal to one, we have
    \begin{align*}
        [F^l]_{ik}
        &=\sum_{j=1}^df_{ij}[F^{l-1}]_{jk}\\
        &\leq\|f\|_\infty|S_i|
        \sum_{j=1}^d[F^{l-1}]_{jk}\\
        &=\|f\|_\infty|S_i|
    \end{align*}
    for every $l\geq1$.

    It follows that
    \begin{align*}
        0
        \leq[\Psi_m]_{ik}-\delta_{m,0}\delta_{i,k}&=\sum_{l=1}^{\infty}
        (\phi^{(*l)})_m[F^l]_{ik}\\
        &\leq
        \|\phi\|_\infty\|f\|_\infty
        \Delta t|S_i|
        \sum_{l=1}^{\infty}\alpha^{l-1}\\
        &=\frac{\|\phi\|_\infty\|f\|_\infty}
        {1-\alpha}\Delta t|S_i|.
    \end{align*}
    Thus,
    \[
        [\Psi_m]_{ik}
        =\delta_{m,0}\delta_{i,k}
        +O(\Delta t|S_i|)
    \]
    uniformly over $m$, $i$, and $k$, which completes the proof.
\end{proof}

\begin{lem}\label{lem.Psi-sum-evaluation}
    Let
    \[
        \Psi_m:=\sum_{l=0}^{\infty}(\phi^{(*l)})_mF^l.
    \]
    Then, for every $i,j=1,\ldots,d$ and $p=0,1,2,\ldots$,
    \begin{align}
        \sum_{m=0}^{\infty}\sum_{k=1}^d
        [\Psi_m]_{ik}|S_k|
        &=O(|S_i|),\\
        \sum_{m=0}^{\infty}\sum_{k=1}^d
        [\Psi_m]_{ik}[\Psi_{m+p}]_{jk}|S_k|
        &=\delta_{p,0}\delta_{i,j}|S_i|
        +O(\Delta t|S_i||S_j|).
    \end{align}
    The constants implicit in the $O$ terms are independent of $i$, $j$, $p$, $\Delta t$, and the spatial partition.
\end{lem}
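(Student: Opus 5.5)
Both bounds follow by decomposing $\Psi_m = \delta_{m,0}I + R_m$ with $R_m := \sum_{l\geq1}(\phi^{(*l)})_mF^l$. The two estimates we need on $R_m$ are taken from the proof of Lemma~\ref{lem.Psi-evaluation}: first, $0\le [R_m]_{ik}\le \|\phi\|_\infty\|f\|_\infty(1-\alpha)^{-1}\Delta t|S_i|$ pointwise; second, a summed version. For the summed version we combine Lemma~\ref{lem.phiconvolution-bound}, which gives $\sum_m(\phi^{(*l)})_m=\alpha^l$, with Lemma~\ref{lem.weighttedsum-Fn}. This yields
\[
\sum_{m=0}^\infty\sum_{k=1}^d [R_m]_{ik}|S_k| = \sum_{l\geq1}\alpha^l\sum_{k}[F^l]_{ik}|S_k|\le \frac{\alpha}{1-\alpha}\|f\|_\infty|A||S_i|.
\]
The first claim follows at once. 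The $\delta_{m,0}I$ part contributes exactly $|S_i|$, and the remainder contributes $O(|S_i|)$ by the display above. All terms are non-negative, so Tonelli justifies the interchange of sums.

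For the second claim we expand the product into four pieces:
\[
[\Psi_m]_{ik}[\Psi_{m+p}]_{jk} = \delta_{m,0}\delta_{m+p,0}\delta_{ik}\delta_{jk} + \delta_{m,0}\delta_{ik}[R_{m+p}]_{jk} + \delta_{m+p,0}\delta_{jk}[R_m]_{ik} + [R_m]_{ik}[R_{m+p}]_{jk}.
\]
We then multiply by $|S_k|$ and sum over $m$ and $k$. The first piece gives exactly $\delta_{p,0}\delta_{i,j}|S_i|$. The second piece collapses to $[R_p]_{ji}|S_i|$, which is $O(\Delta t|S_j||S_i|)$ by the pointwise bound. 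The third piece is nonzero only when $p=0$, since $m+p=0$ with $m\ge0$ forces $m=p=0$. It then equals $[R_0]_{ij}|S_j| = O(\Delta t|S_i||S_j|)$, again by the pointwise bound.

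The fourth piece carries the only real constraint: we must not lose a factor of $d$ when summing over $k$, and we must not lose a factor of the number of $m$'s. The way around both is to use the pointwise bound on one factor and the summed bound on the other:
\[
\sum_{m}\sum_k [R_m]_{ik}[R_{m+p}]_{jk}|S_k| \le \Big(\sup_{m,k}[R_{m+p}]_{jk}\Big)\sum_m\sum_k[R_m]_{ik}|S_k| \le C\Delta t|S_j|\cdot C'|S_i|.
\]
This is $O(\Delta t|S_i||S_j|)$ with constants depending only on $\|\phi\|_\infty$, $\|f\|_\infty$, $|A|$ and $\alpha$, and hence independent of $i$, $j$, $p$, $\Delta t$ and the partition. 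Adding the four pieces gives the stated expansion. I expect no step to be a real obstacle; the only point needing care is the asymmetric pairing of the pointwise and summed bounds in the fourth piece, which is what produces the product $|S_i||S_j|$ rather than something weaker.
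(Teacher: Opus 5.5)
Your proposal is correct and follows essentially the same route as the paper. Both proofs pair the uniform pointwise bound $[\Psi_m]_{ik}-\delta_{m,0}\delta_{i,k}=O(\Delta t|S_i|)$ from Lemma~\ref{lem.Psi-evaluation} with the summed bound from Lemmas~\ref{lem.weighttedsum-Fn} and~\ref{lem.phiconvolution-bound}. The only difference is bookkeeping: you expand the product into four pieces uniformly in $p$, whereas the paper splits into the cases $p>0$ and $p=0$ and, in the latter, peels off $[\Psi_0]_{ij}|S_j|$.
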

\begin{proof}
    By Lemmas~\ref{lem.weighttedsum-Fn} and \ref{lem.phiconvolution-bound},
    \begin{align*}
        \sum_{m=0}^{\infty}\sum_{k=1}^d
        [\Psi_m]_{ik}|S_k|
        &=\sum_{m=0}^{\infty}\sum_{l=0}^{\infty}
        (\phi^{(*l)})_m
        \sum_{k=1}^d[F^l]_{ik}|S_k|\\
        &\leq\|f\|_\infty|A||S_i|
        \sum_{l=0}^{\infty}
        \sum_{m=0}^{\infty}(\phi^{(*l)})_m\\
        &=\|f\|_\infty|A||S_i|
        \sum_{l=0}^{\infty}\alpha^l\\
        &=O(|S_i|).
    \end{align*}

    We next prove the second evaluation. Suppose first that $p>0$. Since $m+p\geq1$, Lemma~\ref{lem.Psi-evaluation} gives
    \[
        [\Psi_{m+p}]_{jk}
        =O(\Delta t|S_j|)
    \]
    uniformly over $m$ and $k$. Therefore,
    \begin{align*}
        \sum_{m=0}^{\infty}\sum_{k=1}^d
        [\Psi_m]_{ik}[\Psi_{m+p}]_{jk}|S_k|
        &\leq O(\Delta t|S_j|)
        \sum_{m=0}^{\infty}\sum_{k=1}^d
        [\Psi_m]_{ik}|S_k|\\
        &=O(\Delta t|S_i||S_j|).
    \end{align*}

    Suppose next that $p=0$. We decompose
    \begin{align*}
        &\sum_{m=0}^{\infty}\sum_{k=1}^d
        [\Psi_m]_{ik}[\Psi_m]_{jk}|S_k|\\
        &\quad=[\Psi_0]_{ij}|S_j|
        +\sum_{m=0}^{\infty}\sum_{k=1}^d
        [\Psi_m]_{ik}
        \left([\Psi_m]_{jk}-\delta_{m,0}\delta_{j,k}\right)|S_k|.
    \end{align*}
    By Lemma~\ref{lem.Psi-evaluation},
    \begin{align*}
        [\Psi_0]_{ij}|S_j|
        &=\left(\delta_{i,j}+O(\Delta t|S_i|)\right)|S_j|\\
        &=\delta_{i,j}|S_i|
        +O(\Delta t|S_i||S_j|).
    \end{align*}
    Moreover, Lemma~\ref{lem.Psi-evaluation} and the first evaluation give
    \begin{align*}
        &\left|
        \sum_{m=0}^{\infty}\sum_{k=1}^d
        [\Psi_m]_{ik}
        \left([\Psi_m]_{jk}-\delta_{m,0}\delta_{j,k}\right)|S_k|
        \right|\\
        &\quad\leq O(\Delta t|S_j|)
        \sum_{m=0}^{\infty}\sum_{k=1}^d
        [\Psi_m]_{ik}|S_k|=O(\Delta t|S_i||S_j|).
    \end{align*}
    Combining the cases $p=0$ and $p>0$ yields
    \[
        \sum_{m=0}^{\infty}\sum_{k=1}^d
        [\Psi_m]_{ik}[\Psi_{m+p}]_{jk}|S_k|
        =\delta_{p,0}\delta_{i,j}|S_i|
        +O(\Delta t|S_i||S_j|),
    \]
    which completes the proof.
\end{proof}

We are now ready to prove Theorem~\ref{thm.cov} by combining the preceding estimates with the autocovariance representation in Lemma~\ref{lem.autocovariance-binnedhawkes}.

\begin{proof}[Proof of Theorem~\ref{thm.cov}]
    Fix $p\geq0$. By Lemma~\ref{lem.autocovariance-binnedhawkes} (iii) and the causality of $\phi$, the autocovariance of the binned Hawkes process can be written as
    \begin{align*}
        &\mathrm{Cov}(X_{i,n}^{(\mathrm{hw})},X_{j,n+p}^{(\mathrm{hw})})\\
        &=\sum_{l,l'=0}^{\infty}
        \left\{
        \sum_{m=-\infty}^{n}
        \int_{(m-1)\Delta t}^{m\Delta t}
        \left(\int_{(n-1)\Delta t}^{n\Delta t}
        \phi^{(*l)}(t-s)dt\right)
        \left(\int_{(n+p-1)\Delta t}^{(n+p)\Delta t}
        \phi^{(*l')}(t'-s)dt'\right)ds
        \right\}\\
        &\quad\times
        \left\{
        \sum_{k=1}^d\int_{S_k}
        \left(\int_{S_i}f^{(*l)}(\bm{x},\bm{y})d\bm{x}\right)
        \left(\int_{S_j}f^{(*l')}(\bm{x}',\bm{y})d\bm{x}'\right)
        \bar{\lambda}(\bm{y})d\bm{y}
        \right\}.
    \end{align*}

    First, Lemma~\ref{lem.phi-sum-evaluation} gives
    \begin{align*}
        &\sum_{m=-\infty}^{n}
        \int_{(m-1)\Delta t}^{m\Delta t}
        \left(\int_{(n-1)\Delta t}^{n\Delta t}
        \phi^{(*l)}(t-s)dt\right)
        \left(\int_{(n+p-1)\Delta t}^{(n+p)\Delta t}
        \phi^{(*l')}(t'-s)dt'\right)ds\\
        &\quad=\Delta t\sum_{m=-\infty}^{n}
        (\phi^{(*l)})_{n-m}
        (\phi^{(*l')})_{n+p-m}
        +(\text{temporal error}),
    \end{align*}
    where the temporal error is given by the four cases in Lemma~\ref{lem.phi-sum-evaluation}.

    Second, Lemmas~\ref{lem.decompose-space} and \ref{lem.Fnij}, together with $\int_{S_k}\bar{\lambda}(\bm{y})d\bm{y}=\mathbb{E}[X_{k,n}^{(\mathrm{hw})}]/\Delta t$, imply that
    \begin{align*}
        &\sum_{k=1}^d\int_{S_k}
        \left(\int_{S_i}f^{(*l)}(\bm{x},\bm{y})d\bm{x}\right)
        \left(\int_{S_j}f^{(*l')}(\bm{x}',\bm{y})d\bm{x}'\right)
        \bar{\lambda}(\bm{y})d\bm{y}\\
        &\quad=\sum_{k=1}^d
        [F^l]_{ik}[F^{l'}]_{jk}
        \frac{\mathbb{E}[X_{k,n}^{(\mathrm{hw})}]}{\Delta t}+\bm{1}_{(l,l')\neq(0,0)}
        (l+1)(l'+1)
        O(|S_i||S_j||S|^{1/q}).
    \end{align*}
    Indeed, the errors arising from Lemma~\ref{lem.Fnij} are bounded by
    \[
        \bm{1}_{l\geq2}(l-1)O(|S_i||S|^{1/q})
        \quad\text{and}\quad
        \bm{1}_{l'\geq2}(l'-1)O(|S_j||S|^{1/q}),
    \]
    respectively. After multiplication and summation over $k$, Lemma~\ref{lem.weighttedsum-Fn} shows that these errors, together with the additive error in Lemma~\ref{lem.decompose-space}, are bounded by
    \[
        \bm{1}_{(l,l')\neq(0,0)}
        (l+1)(l'+1)
        O(|S_i||S_j||S|^{1/q}).
    \]
    When $(l,l')=(0,0)$, both Lemmas~\ref{lem.decompose-space} and \ref{lem.Fnij} give exact equalities.

    For $(l,l')\neq(0,0)$, the bound established in the proof of Lemma~\ref{lem.Psi-evaluation} and Lemma~\ref{lem.phiconvolution-bound} give
    \begin{align*}
        \sum_{m=-\infty}^{n}
        (\phi^{(*l)})_{n-m}
        (\phi^{(*l')})_{n+p-m}
        \leq\|\phi\|_\infty\Delta t\,\alpha^{l+l'-1},
    \end{align*}
    where the pointwise bound is applied to whichever of $l$ and $l'$ is positive. Since
    \[
        \sum_{\substack{l,l'\geq0\\(l,l')\neq(0,0)}}
        (l+1)(l'+1)\alpha^{l+l'-1}<\infty,
    \]
    the terms obtained by multiplying the leading temporal term by at least one spatial approximation error contribute $O((\Delta t)^2|S_i||S_j||S|^{1/q})$.

    We next evaluate the temporal errors. Since $\mathbb{E}[X_{k,n}^{(\mathrm{hw})}]=O(\Delta t|S_k|)$, Lemma~\ref{lem.weighttedsum-Fn} gives
    \[
        \sum_{k=1}^d
        [F^l]_{ik}[F^{l'}]_{jk}
        \frac{\mathbb{E}[X_{k,n}^{(\mathrm{hw})}]}{\Delta t}
        =O(|S_i||S_j|)
    \]
    whenever $(l,l')\neq(0,0)$. For example, if $l,l'\geq1$,
    \begin{align*}
        \sum_{k=1}^d
        [F^l]_{ik}[F^{l'}]_{jk}|S_k|
        &\leq
        \sup_k[F^l]_{ik}
        \sum_{k=1}^d[F^{l'}]_{jk}|S_k|\\
        &=O(|S_i||S_j|).
    \end{align*}
    The cases $l=0$, $l'>0$ and $l>0$, $l'=0$ follow directly from $F^0=I$.

    Moreover,
    \[
        \sum_{l=2}^{\infty}l\alpha^{l-2}<\infty
    \]
    and
    \[
        \sum_{\substack{l,l'\geq1\\(l,l')\neq(1,1)}}
        (l+l')\alpha^{l+l'-2}<\infty.
    \]
    Hence, All terms containing a temporal approximation error in Lemma~\ref{lem.phi-sum-evaluation}, except for the case $(l,l')=(1,1)$, including its cross terms with the spatial approximation errors, contribute
    \[
    O((\Delta t)^3|S_i||S_j|).
    \]
    For $(l,l')=(1,1)$, the error is $O((\Delta t)^3)$ when $p=0$ and $O((\Delta t)^{7/2})$ when $p>0$. The latter is also absorbed into $O((\Delta t)^3)$ for sufficiently small $\Delta t$. The products of the temporal and spatial errors are of smaller order and are absorbed into the same bounds.

    Combining the temporal and spatial estimates yields
    \begin{align*}
        &\mathrm{Cov}(X_{i,n}^{(\mathrm{hw})},X_{j,n+p}^{(\mathrm{hw})})\\
        &\quad=
        \sum_{m=-\infty}^{n}\sum_{k=1}^d
        [\Psi_{n-m}]_{ik}
        \mathbb{E}[X_{k,n}^{(\mathrm{hw})}]
        [\Psi_{n+p-m}]_{jk}\\
        &\qquad+
        O((\Delta t)^3|S_i||S_j|)
        +O((\Delta t)^2|S_i||S_j||S|^{1/q}).
    \end{align*}

    By Theorem~\ref{thm.expectation},
    \[
        \mathbb{E}[X_{k,n}^{(\mathrm{hw})}]
        -\mathbb{E}[X_{k,n}^{(\mathrm{cg})}]
        =O(\Delta t|S_k||S|^{1/q}).
    \]
    Therefore, after setting $r=n-m$ and applying Lemma~\ref{lem.Psi-sum-evaluation},
    \begin{align*}
        &\sum_{m=-\infty}^{n}\sum_{k=1}^d
        [\Psi_{n-m}]_{ik}
        \left(
        \mathbb{E}[X_{k,n}^{(\mathrm{hw})}]
        -\mathbb{E}[X_{k,n}^{(\mathrm{cg})}]
        \right)
        [\Psi_{n+p-m}]_{jk}\\
        &\quad=
        \delta_{p,0}\delta_{i,j}
        O(\Delta t|S_i||S|^{1/q})
        +O((\Delta t)^2|S_i||S_j||S|^{1/q}).
    \end{align*}
    Consequently,
    \begin{align*}
        &\mathrm{Cov}(X_{i,n}^{(\mathrm{hw})},X_{j,n+p}^{(\mathrm{hw})})\\
        &\quad=
        \sum_{m=-\infty}^{n}\sum_{k=1}^d
        [\Psi_{n-m}]_{ik}
        \mathbb{E}[X_{k,n}^{(\mathrm{cg})}]
        [\Psi_{n+p-m}]_{jk}\\
        &\qquad+
        O((\Delta t)^3|S_i||S_j|)
        +O((\Delta t)^2|S_i||S_j||S|^{1/q})\\
        &\qquad+
        \delta_{p,0}\delta_{i,j}
        O(\Delta t|S_i||S|^{1/q}).
    \end{align*}
    By Corollary~\ref{cor:spatiotemporal-autocovariance}, the first term on the right-hand side equals $\mathrm{Cov}(X_{i,n}^{(\mathrm{cg})},X_{j,n+p}^{(\mathrm{cg})})$.
    Thus,
    \begin{align*}
        &\mathrm{Cov}(X_{i,n}^{(\mathrm{hw})},X_{j,n+p}^{(\mathrm{hw})})\\
        &\quad=
        \mathrm{Cov}(X_{i,n}^{(\mathrm{cg})},X_{j,n+p}^{(\mathrm{cg})})\\
        &\qquad+
        O((\Delta t)^3|S_i||S_j|)
        +O((\Delta t)^2|S_i||S_j||S|^{1/q})\\
        &\qquad+
        \delta_{p,0}\delta_{i,j}
        O(\Delta t|S_i||S|^{1/q}).
    \end{align*}

    If $p=0$ and $i=j$, the second error term is absorbed into
    \[
        O(\Delta t|S_i||S|^{1/q}),
    \]
    and hence
    \begin{align*}
        &\mathrm{Cov}(X_{i,n}^{(\mathrm{cg})},X_{i,n}^{(\mathrm{cg})})
        =\mathrm{Cov}(X_{i,n}^{(\mathrm{hw})},X_{i,n}^{(\mathrm{hw})})+
        O((\Delta t)^3|S_i|^2)
        +O(\Delta t|S_i||S|^{1/q}).
    \end{align*}
    In all other cases, $\delta_{p,0}\delta_{i,j}=0$, and therefore
    \begin{align*}
        &\mathrm{Cov}(X_{i,n}^{(\mathrm{cg})},X_{j,n+p}^{(\mathrm{cg})})
        =\mathrm{Cov}(X_{i,n}^{(\mathrm{hw})},X_{j,n+p}^{(\mathrm{hw})})+
        O((\Delta t)^3|S_i||S_j|)
        +O((\Delta t)^2|S_i||S_j||S|^{1/q}).
    \end{align*}
    This completes the proof.
\end{proof}

\section{Properties of binned Hawkes processes}
In this appendix, we derive the autocovariance of spatio-temporal binned Hawkes processes from the martingale representation, following the approach of \cite{bacry2012non}. Let $\bar{\lambda}(\bm{x})$ denote the stationary intensity and write
\[
    dM(t,\bm{x})=dN(t,\bm{x})-\lambda(t,\bm{x})dtd\bm{x}.
\]
Define
\[
    \Omega(t,\bm{x},\bm{y})
    =\sum_{l=0}^{\infty}\phi^{(*l)}(t)f^{(*l)}(\bm{x},\bm{y}),
    \qquad
    \delta(t,\bm{x},\bm{y})=\delta(t)\delta(\bm{x}-\bm{y}),
\]
where the term $l=0$ in $\Omega$ is $\delta$. Identities involving Dirac delta terms below are understood in the sense of distributions.

\begin{lem}\label{lem.autocovariance-binnedhawkes}
For the stationary Hawkes process, the following identities hold.
\begin{enumerate}[(i)]
    \item The intensity admits the representation
    \[
        \lambda(t,\bm{x})=\bar{\lambda}(\bm{x})+((\Omega-\delta)*dM)(t,\bm{x}).
    \]

    \item The second moment of the counting measure satisfies
    \begin{align}
        &\mathbb{E}[dN(t,\bm{x})dN(t',\bm{x}')]\nonumber\\
        &=\Bigg\{\bar{\lambda}(\bm{x})\bar{\lambda}(\bm{x}')
        +\int_{\mathbb{R}}\int_A
        \Omega(t-s,\bm{x},\bm{y})\Omega(t'-s,\bm{x}',\bm{y})
        \bar{\lambda}(\bm{y})d\bm{y}ds\Bigg\}
        dtd\bm{x}dt'd\bm{x}'.
    \end{align}

    \item For $\Delta t>0$ and $S_i,S_j\in\mathcal{B}(A)$, let
    \[
        N_{t,\Delta t}(S_i)=N(t,S_i)-N(t-\Delta t,S_i).
    \]
    Then
    \begin{align}
        &\operatorname{Cov}(N_{t,\Delta t}(S_i),N_{u,\Delta t}(S_j))\nonumber\\
        &=\int_{\mathbb{R}}\int_A\bar{\lambda}(\bm{y})
        \left[\int_{t-\Delta t}^{t}\int_{S_i}
        \Omega(\tau-s,\bm{x},\bm{y})d\bm{x}d\tau\right]\nonumber\\
        &\qquad\times\left[\int_{u-\Delta t}^{u}\int_{S_j}
        \Omega(\tau'-s,\bm{x}',\bm{y})d\bm{x}'d\tau'\right]d\bm{y}ds.
    \end{align}
\end{enumerate}
\end{lem}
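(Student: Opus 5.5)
We follow the martingale approach of \cite{bacry2012non}, adapted to the spatio-temporal kernel $\phi(t)f(\bm{x},\bm{y})$, and prove (i), (ii) and (iii) in that order.

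\textbf{Step (i).} We write the intensity as
\[
\lambda(t,\bm{x})=\mu(\bm{x})+\int\!\!\int \phi(t-s)f(\bm{x},\bm{y})\,dN(s,\bm{y}),
\]
and substitute $dN=\lambda\,ds\,d\bm{y}+dM$. This gives the linear renewal-type equation $\lambda=\mu+K*\lambda+K*dM$, where $K(t,\bm{x},\bm{y})=\phi(t)f(\bm{x},\bm{y})$ and $*$ denotes convolution in time together with the kernel product in space. Iterating this equation (equivalently, applying the resolvent $\sum_{l\ge0}K^{*l}$) formally yields
\[
\lambda=\sum_l K^{*l}*\mu+\sum_{l\ge1}K^{*l}*dM.
\]
The factorization $K^{*l}=\phi^{(*l)}f^{(*l)}$ holds because time and space convolve separately, so $\sum_l K^{*l}=\Omega$. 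The first term is time-independent. Using $\int\phi^{(*l)}=\alpha^l$, it equals $\sum_l\alpha^l f^{(*l)}*\mu=\bar\lambda$, by Eq.~\eqref{eq.stationary-intensity}. The second term is $(\Omega-\delta)*dM$.

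To make this rigorous we need convergence. Under $\alpha<1$ we have $\|\phi^{(*l)}\|_1=\alpha^l$ and $\int_A f^{(*l)}(\bm{x},\bm{y})\,d\bm{x}=1$, so the Neumann series converges in the appropriate weighted $L^1$ sense. Stationarity removes the boundary term $K^{*L}*\lambda$, which tends to $0$ in mean as $L\to\infty$.

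\textbf{Step (ii).} From (i),
\[
dN=\lambda\,dt\,d\bm{x}+dM=\bar\lambda\,dt\,d\bm{x}+(\Omega*dM)(t,\bm{x})\,dt\,d\bm{x},
\]
where the $\delta$ term absorbs $dM$ itself. Hence
\[
\mathbb{E}[dN\,dN']=\bar\lambda\bar\lambda'+\mathbb{E}[(\Omega*dM)(\Omega*dM)'],
\]
the cross terms vanishing because $\mathbb{E}[dM]=0$ by Eq.~\eqref{eq.martingale-1st}. For the remaining term we use the orthogonality of martingale increments, Eq.~\eqref{eq.martingale-2nd}: increments at distinct times are uncorrelated, and at equal times
\[
\mathbb{E}[dM(s,\bm{y})\,dM(s,\bm{y}')]=\delta(\bm{y}-\bm{y}')\,\mathbb{E}[\lambda(s,\bm{y})]\,ds\,d\bm{y}=\delta(\bm{y}-\bm{y}')\,\bar\lambda(\bm{y})\,ds\,d\bm{y}.
\]
This produces the integral $\int\!\!\int\Omega(t-s,\bm{x},\bm{y})\Omega(t'-s,\bm{x}',\bm{y})\bar\lambda(\bm{y})\,d\bm{y}\,ds$, which is (ii). The diagonal Dirac contribution (the same event appearing in both factors) comes from the $l=l'=0$ term $\delta\cdot\delta$, which is why identities are read distributionally.

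\textbf{Step (iii).} We integrate (ii) over $[t-\Delta t,t]\times S_i\times[u-\Delta t,u]\times S_j$ and subtract the product of means, which is the $\bar\lambda\bar\lambda'$ term. Fubini applies since all integrands are nonnegative.

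The main obstacle is Step (i): justifying the Neumann-series inversion and the exchange of expectations and sums when $\Omega$ contains Dirac terms and spatial operators on a bounded domain. I would handle it by working with truncations $\sum_{l\le L}$, bounding the remainders in $L^1$ using $\alpha^L\sup_{\bm{x}}\bar\lambda$, and applying monotone convergence, since all terms are nonnegative after the martingale-variance step.
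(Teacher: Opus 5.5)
Your proof is correct. For (i) and (iii) it follows the paper. For (i) the paper simply cites Bacry--Muzy, Proposition~1, whereas you sketch the Neumann-series argument: the stationary remainder $K^{*L}*\lambda$ has mean at most $\alpha^L\sup\bar\lambda$ by the normalization $\int_A f(\bm{x},\bm{y})\,d\bm{y}=1$.

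For (ii) you take a different and shorter route. The paper follows Bacry--Muzy, Proposition~2, and expands $\mathbb{E}[dN\,dN']$ into four terms: $\mathbb{E}[dM\,dM']$, $\mathbb{E}[\lambda\,dM']$, $\mathbb{E}[dM\,\lambda']$ and $\mathbb{E}[\lambda\lambda']$. It evaluates the two mixed terms, which are nonzero (namely $(\Omega-\delta)(t-t',\bm{x},\bm{x}')\bar\lambda(\bm{x}')$ and its mirror). It then expands $\mathbb{E}[\lambda\lambda']$ and checks that all terms outside the double integral cancel.

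You instead absorb $dM$ into the resolvent first, writing $dN-\bar\lambda\,dt\,d\bm{x}=(\Omega*dM)\,dt\,d\bm{x}$. The cross terms with the deterministic $\bar\lambda$ then vanish trivially, and a single application of the martingale isometry gives the result. The diagonal Dirac contribution appears directly as the $l=l'=0$ term rather than emerging from a cancellation.

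Your route buys two things. It removes the bookkeeping, and it gives a cleaner rigorous path to (iii). After integrating over the bins, the centered count is $\int G_i\,dM$ with a bounded deterministic integrand $G_i(s,\bm{y})=\int_{t-\Delta t}^{t}\int_{S_i}\Omega(\tau-s,\bm{x},\bm{y})\,d\bm{x}\,d\tau$. So (iii) follows from the $L^2$ isometry without ever multiplying distributions. The paper's route, in exchange, exhibits the individual mixed moments $\mathbb{E}[\lambda\,dM']$ and stays in the form in which the Bacry--Muzy argument is stated.
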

\begin{proof}
For (i), the proof of \cite[Proposition~1]{bacry2012non} applies with matrix multiplication replaced by integration over the spatial variable. The normalization of $f$ and $\alpha<1$ ensure the convergence of the corresponding convolution series.\\

For (ii), following \cite[Proposition~2]{bacry2012non}, the martingale decomposition gives
\begin{align*}
    &\mathbb{E}[dN(t,\bm{x})dN(t',\bm{x}')]\\
    &=\mathbb{E}[dM(t,\bm{x})dM(t',\bm{x}')]\\
    &\quad+\mathbb{E}[\lambda(t,\bm{x})dM(t',\bm{x}')]dtd\bm{x}
    +\mathbb{E}[dM(t,\bm{x})\lambda(t',\bm{x}')]dt'd\bm{x}'\\
    &\quad+\mathbb{E}[\lambda(t,\bm{x})\lambda(t',\bm{x}')]dtd\bm{x}dt'd\bm{x}'.
\end{align*}
The first term is given by
\[
    \mathbb{E}[dM(t,\bm{x})dM(t',\bm{x}')]
    =\bar{\lambda}(\bm{x})\delta(t-t',\bm{x},\bm{x}')dtd\bm{x}dt'd\bm{x}'.
\]
Using (i) and the fact that the martingale measure has zero mean, the second term satisfies
\begin{align*}
    \mathbb{E}[\lambda(t,\bm{x})dM(t',\bm{x}')]
    &=\mathbb{E}[((\Omega-\delta)*dM)(t,\bm{x})dM(t',\bm{x}')]\\
    &=(\Omega-\delta)(t-t',\bm{x},\bm{x}')\bar{\lambda}(\bm{x}')dt'd\bm{x}'.
\end{align*}
Similarly,
\[
    \mathbb{E}[dM(t,\bm{x})\lambda(t',\bm{x}')]
    =\bar{\lambda}(\bm{x})(\Omega-\delta)(t'-t,\bm{x}',\bm{x})dtd\bm{x}.
\]
For the fourth term, applying (i) and the same martingale second-moment identity yields
\begin{align*}
    &\mathbb{E}[\lambda(t,\bm{x})\lambda(t',\bm{x}')]\\
    &=\bar{\lambda}(\bm{x})\bar{\lambda}(\bm{x}')
    +\int_{\mathbb{R}}\int_A
    (\Omega-\delta)(t-s,\bm{x},\bm{y})
    (\Omega-\delta)(t'-s,\bm{x}',\bm{y})
    \bar{\lambda}(\bm{y})d\bm{y}ds.
\end{align*}
Expanding the last integral, we obtain
\begin{align*}
    &\int_{\mathbb{R}}\int_A
    (\Omega-\delta)(t-s,\bm{x},\bm{y})
    (\Omega-\delta)(t'-s,\bm{x}',\bm{y})
    \bar{\lambda}(\bm{y})d\bm{y}ds\\
    &=\int_{\mathbb{R}}\int_A
    \Omega(t-s,\bm{x},\bm{y})\Omega(t'-s,\bm{x}',\bm{y})
    \bar{\lambda}(\bm{y})d\bm{y}ds\\
    &\quad-\Omega(t-t',\bm{x},\bm{x}')\bar{\lambda}(\bm{x}')
    -\bar{\lambda}(\bm{x})\Omega(t'-t,\bm{x}',\bm{x})\\
    &\quad+\bar{\lambda}(\bm{x})\delta(t-t',\bm{x},\bm{x}').
\end{align*}
Substituting these expressions into the four-term expansion, the terms outside the integral involving $\Omega$ and $\delta$ cancel, proving (ii).\\

For (iii), stationarity gives
\[
    \mathbb{E}[N_{t,\Delta t}(S_i)]
    =\Delta t\int_{S_i}\bar{\lambda}(\bm{x})d\bm{x}.
\]
By the definition of covariance,
\begin{align*}
    &\operatorname{Cov}(N_{t,\Delta t}(S_i),N_{u,\Delta t}(S_j))\\
    &=\mathbb{E}\left[
    \left(\int_{(t-\Delta t,t]\times S_i}dN(\tau,\bm{x})\right)
    \left(\int_{(u-\Delta t,u]\times S_j}dN(\tau',\bm{x}')\right)\right]\\
    &\quad-(\Delta t)^2
    \left(\int_{S_i}\bar{\lambda}(\bm{x})d\bm{x}\right)
    \left(\int_{S_j}\bar{\lambda}(\bm{x}')d\bm{x}'\right).
\end{align*}
Substituting (ii), the product of the stationary intensities cancels with the product of the means, yielding
\begin{align*}
    &\operatorname{Cov}(N_{t,\Delta t}(S_i),N_{u,\Delta t}(S_j))\\
    &=\int_{t-\Delta t}^{t}\int_{S_i}
    \int_{u-\Delta t}^{u}\int_{S_j}
    \Bigg[\int_{\mathbb{R}}\int_A
    \Omega(\tau-s,\bm{x},\bm{y})\\
    &\qquad\times\Omega(\tau'-s,\bm{x}',\bm{y})
    \bar{\lambda}(\bm{y})d\bm{y}ds\Bigg]
    d\bm{x}'d\tau'd\bm{x}d\tau.
\end{align*}
Changing the order of integration gives (iii).
\end{proof}

\section{Proofs for binned Poisson approximation}\label{appendix.binnedpoisson}
\subsection{Binned Poisson approximation}\label{subset.binnedpoisson}
In this appendix, we briefly describe the binned Poisson approximation for aggregated Hawkes processes, which provides a basis for conventional discretization-based approaches \cite{shlomovich2022parameter,sheen2022tensor,mark2018network}.

In this approximation, the conditional intensity of the spatio-temporal Hawkes process is discretized directly in both time and space. Specifically,
\begin{align}
    \lambda_{i,n} &= \mu(\bm{r}_i)\Delta t|S_i|+\sum_{k\leq n-1}\sum_l\phi((n-k)\Delta t)f(\bm{r}_i,\bm{r}_l)X_{l,k}\Delta t|S_i|, 
    \label{eq.binnedpoisson-lambda-component}
\end{align}
where $\bm{r}_i$ is a representative point of $S_i$. In vector form,
\begin{align}
    \bm{\lambda}_n &= \tilde{\bm{\mu}}\Delta t+\sum_{k\leq n-1}\phi((n-k)\Delta t)\Delta t\tilde{F}\bm{X}_{k},\label{eq.binnedpoisson-lambda}
\end{align}
where 
$[\tilde{\bm{\mu}}]_i=\mu(\bm{r}_i)|S_i|$ and  $[\tilde{F}]_{il}=f(\bm{r}_i,\bm{r}_l)|S_i| $. 
Given the history of the binned counts, each component is assumed to follow a Poisson distribution,
\begin{equation}
X_{i,n} \mid \bm{X}_1,...,\bm{X}_{n-1}\ \sim\ \mathrm{Poisson}(\lambda_{i,n}), 
\label{eq.binnedpoisson-poisson}
\end{equation}
and the components $X_{1,n},\ldots,X_{d,n}$ are conditionally independent.
Comparing Eq.~\eqref{eq.binnedpoisson-lambda} with Eq.~\eqref{eq.spatiotemporal-lambda}, the binned Poisson approximation differs from the proposed model in two important respects. First, it evaluates the spatial kernel at representative points rather than averaging it over spatial regions. Second, it excludes the $k=n$ term and therefore does not account for intra-bin excitation.

Because $\bm{\lambda}_n$ in the binned Poisson approximation is measurable with respect to the history $H_{n-1}$, the conditional Poisson distributions completely specify the probability law of the count process. Consequently, the conditional log-likelihood is
\begin{equation}
    \log L(\bm{X}_1,...,\bm{X}_n)=\sum_{m=1}^n\sum_i (X_{i,m}\log \lambda_{i,m}-\lambda_{i,m}) + \text{const}.
    \label{eq.binnedpoisson-likelihood} 
\end{equation}

The stationary expectation in the binned Poisson approximation is
\begin{equation}
    \mathbb{E}[\bm{X}_n]=\left( I-\sum_{m=1}^{\infty}\phi(m\Delta t)\Delta t\tilde{F}\right)^{-1}\tilde{\bm{\mu}}\Delta t,
    \label{eq.binnedpoisson-expectation}
\end{equation}
provided that
\begin{equation}
\rho\left(
\sum_{m=1}^{\infty}
\phi(m\Delta t)\Delta t\tilde{F}
\right)
<1,
\label{eq.binnedpoisson-stationarity}
\end{equation}
where $\rho(\cdot)$ denotes the spectral radius. This condition is the standard stability condition for the corresponding Poisson autoregressive process; see, for example, \cite{doukhan2012weak,fokianos2020multivariate}.

The stationary autocovariance is
\begin{equation}
\operatorname{Cov}(\bm{X}_n,\bm{X}_{n+s})=\Delta t \sum^{\infty}_{l=0}\tilde{\Psi}_l \mathrm{diag}\left( \left(I-\sum_{m=1}^{\infty}\phi(m\Delta t)\Delta t\tilde{F}\right)^{-1}\tilde{\bm{\mu}}\right)\tilde{\Psi}_{l+s}^T,
\label{eq.binnedpoisson-autocovariance}
\end{equation}
where
\[
    \tilde{\Psi}_k=\sum^{\infty}_{j=0}(\tilde{\phi}^{(*j)})_k\tilde{F}^j,
\]
and $(\tilde{\phi}^{(*j)})_k$ denotes the $k$-th element of the $j$-fold discrete convolution of $\{\tilde{\phi}_i\}_i$, with
\[
    \tilde{\phi}_i = \begin{cases}
        0 & i=0\\
        \phi(i\Delta t)\Delta t & i=1,2,\ldots,
    \end{cases}
\] 
and $(\tilde{\phi}^{(*0)})_k = \delta_{k,0}$.

\subsection{Proof of Eq.~(\ref{eq.binnedpoisson-expectation-asymptotic})}\label{appendix.binnedpoisson-thm1}
We first establish the counterpart of Lemma \ref{lem.Fnij} for the binned Poisson approximation.
\begin{lem}\label{lem.binnedpoisson-Fnij}
For $n=0$, we have $[\tilde{F}^0]_{ij}=\delta_{i,j}$. There exists a constant $C>0$, independent of $n$, $i$, $j$, and the spatial partition, such that, for every $n\geq1$ and sufficiently small $|S|$,
\[
    \left|[\tilde{F}^n]_{ij}-\int_{S_i}f^{(*n)}(\bm{x},\bm{r}_j)d\bm{x}\right|
    \leq Cn\left(1+C|S|^{1/q}\right)^{n-1}|S_i||S|^{1/q}.
\]
\end{lem}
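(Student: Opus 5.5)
We argue by induction on $n$, along the lines of the proof of Lemma~\ref{lem.Fnij}, but with point evaluations at the representatives $\bm{r}_j$ in place of the averages against $p_j$. The new feature compared with Lemma~\ref{lem.Fnij} is that $\tilde F$ is not exactly column-stochastic. This is what produces the factor $(1+C|S|^{1/q})^{n-1}$: column sums of $\tilde F^{m}$ can grow at most geometrically with ratio $1+C|S|^{1/q}$.

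\emph{Step 1: base case and almost-column-stochasticity.} For $n=1$ we have $[\tilde F]_{ij}-\int_{S_i}f(\bm{x},\bm{r}_j)d\bm{x}=\int_{S_i}(f(\bm{r}_i,\bm{r}_j)-f(\bm{x},\bm{r}_j))d\bm{x}$. This is bounded by $L_xC_r|S_i||S|^{1/q}$ by the Lipschitz continuity of $f$ in its first argument (assumed for the binned Poisson analysis) together with Eq.~\eqref{eq.regularity}. Summing over $i$ and using Eq.~\eqref{eq.normalization-1} gives
\[
\sum_i[\tilde F]_{ij}\le 1+L_xC_r|A||S|^{1/q}.
\]
By induction, every column sum of $\tilde F^{m}$ is at most $(1+C|S|^{1/q})^{m}$, provided $C$ is chosen so that $C\ge L_xC_r|A|$.

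\emph{Step 2: telescoping decomposition.} We write
\[
[\tilde F^n]_{ij}-\int_{S_i}f^{(*n)}(\bm{x},\bm{r}_j)d\bm{x}=\sum_{m=1}^{n}\Big(\sum_k [\tilde F^{m}]_{ik}\, g_{n-m}(\bm{r}_k)-\sum_k [\tilde F^{m-1}]_{ik}\, g_{n-m+1}(\bm{r}_k)\Big),
\]
where $g_r(\bm{r}_k)=\int_{S_k}f^{(*r)}(\bm{x},\bm{r}_j)d\bm{x}$ and the convention $g_0(\bm{r}_k)=\delta_{kj}$ is used, so that the $m=n$ term equals $[\tilde F^n]_{ij}$. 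A cleaner variant multiplies on the right: $\tilde F^{n}=\tilde F^{m-1}\,\tilde F\,\tilde F^{n-m}$. Each telescoping term then compares $\sum_l[\tilde F]_{kl}\,h(\bm{r}_l)$ with $\int_{S_k}\int_A f(\bm{x},\bm{y})h(\bm{y})d\bm{y}d\bm{x}$, where $h(\bm{y})=\int f^{(*(n-m))}$-type quantities. That comparison is a Riemann-sum error of order $|S_k||S|^{1/q}$ times the size of $h$. The Riemann-sum error requires Lipschitz continuity in both arguments: in the first argument of $f$ (the evaluation at $\bm{r}_k$) and in the second argument of $f^{(*r)}$ (Lemma~\ref{lem.lipschitz-space}, plus the analogous first-argument statement proved by the same induction using Eq.~\eqref{eq.normalization-1}).

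\emph{Step 3: summation.} Each of the $n$ telescoping terms is bounded by
\[
(\text{column-sum bound of }\tilde F^{m-1})\times C|S_i||S|^{1/q}.
\]
The factor $|S_i|$ comes from the leftmost factor: either the $\tilde F$ entry $\le\|f\|_\infty|S_i|$ or the integral over $S_i$, exactly as in Lemma~\ref{lem.Psi-evaluation}. By Step 1 the column-sum factor is at most $(1+C|S|^{1/q})^{m-1}\le(1+C|S|^{1/q})^{n-1}$. Summing over $m$ gives the bound $Cn(1+C|S|^{1/q})^{n-1}|S_i||S|^{1/q}$, after enlarging $C$ to absorb $\|f\|_\infty$, $L_x$, $L_y$, $C_r$ and $|A|$. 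The restriction to small $|S|$ is used only to keep these constants uniform.

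\emph{Main obstacle.} The delicate point is the bookkeeping in Step 2: the factor $|S_i|$ must appear once, and the remaining sums must be controlled only by column sums, never by row sums. Row sums of $\tilde F$ are not controlled unless Eq.~\eqref{eq.normalization-2} is also used. So I would order the product so that the index $i$ sits on the explicit leftmost $\tilde F$ or integral factor, and all inner summations run over column indices, as in Lemma~\ref{lem.Fnij}.
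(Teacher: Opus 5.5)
Your proposal is correct and is essentially the paper's argument written as a telescoping sum rather than an induction: the paper inducts on $[\tilde{F}^{n+1}]_{ij}=\sum_k[\tilde{F}^n]_{ik}\tilde{f}_{kj}$, propagates the previous error through the column sum $\sum_k\tilde{f}_{kj}\leq 1+C|S|^{1/q}$, and adds one Riemann-sum error $\bigl|\sum_k\tilde{f}_{kj}\int_{S_i}f^{(*n)}(\bm{x},\bm{r}_k)d\bm{x}-\int_{S_i}f^{(*(n+1))}(\bm{x},\bm{r}_j)d\bm{x}\bigr|\leq C|S_i||S|^{1/q}$; there $|S_i|$ comes from the integral over $S_i$ and Lemma~\ref{lem.lipschitz-space} is used, and summing over the $n$ steps gives the same count as your Step~3. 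Your primary left-ordered telescoping also works, with two differences: $|S_i|$ comes instead from the entry bound $\tilde{f}_{il}\leq\|f\|_\infty|S_i|$ combined with column sums, as in Lemma~\ref{lem.weighttedsum-Fn}; and since its inner comparison involves the cell integrals $\int_{S_l}f^{(*r)}(\bm{y},\bm{r}_j)d\bm{y}$ rather than point values, it needs only the Lipschitz continuity of $f$ in both arguments and $\int_A f^{(*r)}(\bm{y},\bm{r}_j)d\bm{y}=1$, not Lemma~\ref{lem.lipschitz-space} or its first-argument analogue.
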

\begin{proof}
The assertion for $n=0$ follows from $\tilde{F}^0=I$. For $n=1$, the Lipschitz continuity of $f$ in its first argument gives
\[
    \left|\tilde{f}_{ij}-\int_{S_i}f(\bm{x},\bm{r}_j)d\bm{x}\right|
    \leq C|S_i||S|^{1/q}.
\]
Together with the normalization of $f$, this also gives
\[
    \sum_{i=1}^d\tilde{f}_{ij}\leq1+C|S|^{1/q}.
\]
Here and below, $C$ is chosen sufficiently large and independently of the convolution order and the spatial partition.

The boundedness and normalization of $f$ give $\|f^{(*n)}\|_\infty\leq\|f\|_\infty$. Moreover, by Lemma~\ref{lem.lipschitz-space},
\[
    |f^{(*n)}(\bm{x},\bm{y})-f^{(*n)}(\bm{x},\bm{z})|
    \leq L_y\|\bm{y}-\bm{z}\|
\]
uniformly for $n\geq1$. Consequently,
\begin{align*}
    &\left|\sum_{k=1}^d\tilde{f}_{kj}\int_{S_i}f^{(*n)}(\bm{x},\bm{r}_k)d\bm{x}
    -\int_{S_i}f^{*(n+1)}(\bm{x},\bm{r}_j)d\bm{x}\right|\\
    &=\left|\sum_{k=1}^d\int_{S_i}\int_{S_k}
    \left\{f^{(*n)}(\bm{x},\bm{r}_k)f(\bm{r}_k,\bm{r}_j)
    -f^{(*n)}(\bm{x},\bm{y})f(\bm{y},\bm{r}_j)\right\}d\bm{y}d\bm{x}\right|\\
    &\leq C|S_i||S|^{1/q}.
\end{align*}

Suppose that the assertion holds for some $n\geq1$. Using the preceding estimate, we obtain
\begin{align*}
    &\left|[\tilde{F}^{n+1}]_{ij}-\int_{S_i}f^{*(n+1)}(\bm{x},\bm{r}_j)d\bm{x}\right|\\
    &\leq\sum_{k=1}^d\tilde{f}_{kj}
    \left|[\tilde{F}^n]_{ik}-\int_{S_i}f^{(*n)}(\bm{x},\bm{r}_k)d\bm{x}\right|
    +C|S_i||S|^{1/q}\\
    &\leq Cn\left(1+C|S|^{1/q}\right)^{n-1}|S_i||S|^{1/q}
    \sum_{k=1}^d\tilde{f}_{kj}+C|S_i||S|^{1/q}\\
    &\leq C(n+1)\left(1+C|S|^{1/q}\right)^n|S_i||S|^{1/q}.
\end{align*}
Thus, the assertion holds for $n+1$, completing the proof.
\end{proof}
From Eqs.~\eqref{eq.binnedpoisson-lambda} and \eqref{eq.binnedpoisson-poisson} and stationarity, we have
\begin{align*}
    \mathbb{E}[\bm{\lambda}_n]
    &=\mathbb{E}[\bm{X}_n^{(\mathrm{po})}]
    =\tilde{\bm{\mu}}\Delta t
    +\sum_{m=1}^{\infty}\phi(m\Delta t)\Delta t
    \tilde{F}\mathbb{E}[\bm{X}_n^{(\mathrm{po})}].
\end{align*}
Therefore,
\[
    \left(
    I-\sum_{m=1}^{\infty}\phi(m\Delta t)\Delta t\tilde{F}
    \right)
    \mathbb{E}[\bm{X}_n^{(\mathrm{po})}]
    =\tilde{\bm{\mu}}\Delta t.
\]
Under the stability condition in Eq.~\eqref{eq.binnedpoisson-stationarity}, we obtain the expansion
\begin{align*}
    \mathbb{E}[X_{i,n}^{(\mathrm{po})}]
    &=\Delta t\sum_{l=0}^{\infty}
    \left(
    \sum_{m=1}^{\infty}\phi(m\Delta t)\Delta t
    \right)^l
    [\tilde{F}^l\tilde{\bm{\mu}}]_i\\
    &=\Delta t\sum_{l=0}^{\infty}
    (\alpha+O(\Delta t))^l
    \left(
    \sum_{j=1}^d[\tilde{F}^l]_{ij}
    \mu(\bm{r}_j)|S_j|
    \right),
\end{align*}
where we used the assumed temporal approximation.

We next evaluate the last summation. For $l=0$,
\begin{align*}
    \sum_{j=1}^d[\tilde{F}^0]_{ij}\mu(\bm{r}_j)|S_j|
    &=\mu(\bm{r}_i)|S_i|\\
    &=\int_{S_i}\mu(\bm{x})d\bm{x}
    +\int_{S_i}\left(\mu(\bm{r}_i)-\mu(\bm{x})\right)d\bm{x}\\
    &=\int_{S_i}\mu(\bm{x})d\bm{x}
    +O(|S|^{1/q}|S_i|),
\end{align*}
where the last equality follows from the Lipschitz continuity of $\mu$.

For $l=1$,
\begin{align*}
    \sum_{j=1}^d[\tilde{F}]_{ij}\mu(\bm{r}_j)|S_j|
    &=\sum_{j=1}^df(\bm{r}_i,\bm{r}_j)
    \mu(\bm{r}_j)|S_i||S_j|\\
    &=\int_{S_i}(f*\mu)(\bm{x})d\bm{x}\\
    &\quad-\int_{S_i}\sum_{j=1}^d\int_{S_j}
    \left(
    f(\bm{x},\bm{y})\mu(\bm{y})
    -f(\bm{r}_i,\bm{r}_j)\mu(\bm{r}_j)
    \right)d\bm{y}d\bm{x}.
\end{align*}
The absolute value of the error term is bounded by
\begin{align*}
    &\int_{S_i}\sum_{j=1}^d\int_{S_j}
    |f(\bm{x},\bm{y})-f(\bm{r}_i,\bm{r}_j)|
    \mu(\bm{y})d\bm{y}d\bm{x}\\
    &\quad+\int_{S_i}\sum_{j=1}^d\int_{S_j}
    f(\bm{r}_i,\bm{r}_j)
    |\mu(\bm{y})-\mu(\bm{r}_j)|
    d\bm{y}d\bm{x}.
\end{align*}
For every $\bm{x}\in S_i$ and $\bm{y}\in S_j$, the Lipschitz continuity of $f$ gives
\begin{align*}
    |f(\bm{x},\bm{y})-f(\bm{r}_i,\bm{r}_j)|
    &\leq |f(\bm{x},\bm{y})-f(\bm{x},\bm{r}_j)|
    +|f(\bm{x},\bm{r}_j)-f(\bm{r}_i,\bm{r}_j)|\\
    &\leq L_y\|\bm{y}-\bm{r}_j\|
    +L_x\|\bm{x}-\bm{r}_i\|
    =O(|S|^{1/q}).
\end{align*}
Hence,
\begin{align*}
    &\int_{S_i}\sum_{j=1}^d\int_{S_j}
    |f(\bm{x},\bm{y})-f(\bm{r}_i,\bm{r}_j)|
    \mu(\bm{y})d\bm{y}d\bm{x}=O(|S|^{1/q}|S_i|)\|\mu\|_1.
\end{align*}
Similarly, the Lipschitz continuity of $\mu$ and Eq.~\eqref{eq.normalization-2} give
\begin{align*}
    &\int_{S_i}\sum_{j=1}^d\int_{S_j}
    f(\bm{r}_i,\bm{r}_j)
    |\mu(\bm{y})-\mu(\bm{r}_j)|
    d\bm{y}d\bm{x}\\
    &\leq O(|S|^{1/q}|S_i|)
    \sum_{j=1}^df(\bm{r}_i,\bm{r}_j)|S_j|\\
    &=O(|S|^{1/q}|S_i|)
    \left(1+O(|S|^{1/q})\right)
    =O(|S|^{1/q}|S_i|).
\end{align*}
Thus,
\[
    \sum_{j=1}^d[\tilde{F}]_{ij}
    \mu(\bm{r}_j)|S_j|
    =\int_{S_i}(f*\mu)(\bm{x})d\bm{x}
    +O(|S|^{1/q}|S_i|).
\]

For $l\geq2$, Lemma~\ref{lem.binnedpoisson-Fnij} gives
\[
    [\tilde{F}^l]_{ij}
    =\int_{S_i}f^{(*l)}(\bm{x},\bm{r}_j)d\bm{x}
    +O\left(l(1+C|S|^{1/q})^{l-1}|S|^{1/q}|S_i|\right).
\]
Multiplying by $\mu(\bm{r}_j)|S_j|$ and summing over $j$, we obtain
\begin{align*}
    \sum_{j=1}^d[\tilde{F}^l]_{ij}\mu(\bm{r}_j)|S_j|
    &=\int_{S_i}\sum_{j=1}^df^{(*l)}(\bm{x},\bm{r}_j)\mu(\bm{r}_j)|S_j|d\bm{x}\\
    &\quad+O\left(l(1+C|S|^{1/q})^{l-1}|S|^{1/q}|S_i|\right).
\end{align*}
By Lemma~\ref{lem.lipschitz-space}, the Lipschitz continuity of $\mu$, and the bound $\|f^{(*l)}\|_\infty\leq\|f\|_\infty$, replacing the spatial sum by the corresponding integral introduces an error of $O(|S|^{1/q}|S_i|)$, uniformly in $l$. This error is absorbed into the preceding error term. Together with the case $l=1$, this gives
\[
    \sum_{j=1}^d[\tilde{F}^l]_{ij}\mu(\bm{r}_j)|S_j|
    =\int_{S_i}(f^{(*l)}*\mu)(\bm{x})d\bm{x}
    +O\left(l(1+C|S|^{1/q})^{l-1}|S|^{1/q}|S_i|\right)
\]
for every $l\geq1$, with constants independent of $l$, $i$, and the spatial partition.

Fix a constant $\hat{\alpha}$ satisfying $\alpha<\hat{\alpha}<1$. For sufficiently small $\Delta t$, the mean-value theorem gives
\[
    \left|(\alpha+O(\Delta t))^l-\alpha^l\right|
    \leq Cl\hat{\alpha}^{l-1}\Delta t,
    \qquad l\geq1,
\]
where $C>0$ is independent of $l$ and $\Delta t$. Choose $|S|_0>0$ sufficiently small that
\[
    \hat{\alpha}(1+C|S|_0^{1/q})<1.
\]
This ensures that the error terms below are summable over $l$, uniformly for $|S|\leq|S|_0$.\\

Moreover, the boundedness of $\mu$ and $f$ and the normalization of $f$ imply
\[
    \int_{S_i}(f^{(*l)}*\mu)(\bm{x})d\bm{x}=O(|S_i|)
\]
uniformly in $l$. Consequently,
\begin{align*}
    \mathbb{E}[X_{i,n}^{(\mathrm{po})}]
    &=\Delta t\int_{S_i}\mu(\bm{x})d\bm{x}\\
    &\quad+\Delta t\sum_{l=1}^{\infty}
    \left(\alpha^l+l\hat{\alpha}^{l-1}O(\Delta t)\right)\\
    &\qquad\times\left(
    \int_{S_i}(f^{(*l)}*\mu)(\bm{x})d\bm{x}
    +O\left(l(1+C|S|^{1/q})^{l-1}|S|^{1/q}|S_i|\right)
    \right)\\
    &\quad+O(\Delta t|S|^{1/q}|S_i|)\\
    &=\Delta t\sum_{l=0}^{\infty}
    \alpha^l\int_{S_i}(f^{(*l)}*\mu)(\bm{x})d\bm{x}\\
    &\quad+O(\Delta t|S|^{1/q}|S_i|)
    +O((\Delta t)^2|S_i|)
    +O((\Delta t)^2|S|^{1/q}|S_i|)\\
    &=\Delta t\sum_{l=0}^{\infty}
    \alpha^l\int_{S_i}(f^{(*l)}*\mu)(\bm{x})d\bm{x}\\
    &\quad+O(\Delta t|S|^{1/q}|S_i|)
    +O((\Delta t)^2|S_i|).
\end{align*}
Here, the last equality follows by absorbing the product of the temporal and spatial approximation errors into the other error terms. Since
\[
    \mathbb{E}[X_{i,n}^{(\mathrm{hw})}]
    =\Delta t\sum_{l=0}^{\infty}
    \alpha^l\int_{S_i}(f^{(*l)}*\mu)(\bm{x})d\bm{x},
\]
we conclude that
\[
    \mathbb{E}[X_{i,n}^{(\mathrm{po})}]
    =\mathbb{E}[X_{i,n}^{(\mathrm{hw})}]
    +O(\Delta t|S|^{1/q}|S_i|)
    +O((\Delta t)^2|S_i|).
\]

\subsection{Proof of Eq.~(\ref{eq.binnedpoisson-autocovariance-asymptotic})}\label{appendix.binnedpoisson-autocovariance}
We first establish the counterpart of Lemma~\ref{lem.phi-convolution} for the binned Poisson approximation.
\begin{lem}\label{lem.binnedpoisson-phi-convolution}
    Fix any constant $\hat{\alpha}$ satisfying $\alpha<\hat{\alpha}<1$. For $l=0,1,2,\ldots$ and $i=1,2,\ldots$, the following bound holds for sufficiently small $\Delta t$:
    \begin{align}
        &\left|
        \dfrac{1}{\Delta t}
        \int_{(n-1)\Delta t}^{n\Delta t}
        \int_{(n-i-1)\Delta t}^{(n-i)\Delta t}
        \phi^{(*l)}(t-s)dsdt
        -(\tilde{\phi}^{(*l)})_i
        \right|\nonumber\\
        &\quad\leq
        \begin{cases}
            0, & l=0,\\
            C(\Delta t)^{3/2}, & l=1,\\
            Cl\hat{\alpha}^{l-2}(\Delta t)^2, & l\geq2,
        \end{cases}
    \end{align}
    where $C>0$ is independent of $l$, $i$, $n$, and $\Delta t$. Moreover, when $i=0$, the left-hand side is zero for $l=0$ and is $O((\Delta t)^l)$ for $l\geq1$.
\end{lem}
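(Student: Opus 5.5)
The proof runs by induction on $l$, mirroring Lemma~\ref{lem.phi-convolution}, but with $\tilde{\phi}_i=\phi(i\Delta t)\Delta t$ for $i\geq1$ and $\tilde{\phi}_0=0$ in place of the bin-averaged coefficients. Write $\bar{\phi}^{(l)}_i$ for the bin-averaged continuous convolution on the left-hand side, so $\bar{\phi}^{(1)}_i=\phi_i$.

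\textbf{Base cases $l=0,1$.} For $l=0$ both sides equal $\delta_{i,0}$. For $l=1$ and $i\geq1$, the triangular-kernel representation gives
\[
\phi_i-\tilde\phi_i=\frac{1}{\Delta t}\int_{(i-1)\Delta t}^{(i+1)\Delta t}(\Delta t-|t-i\Delta t|)\bigl(\phi(t)-\phi(i\Delta t)\bigr)dt .
\]
Since $|\phi(t)-\phi(i\Delta t)|\leq|t-i\Delta t|^{1/2}\|\phi'\|_2$ by Cauchy--Schwarz, this is bounded by $C\|\phi'\|_2(\Delta t)^{3/2}$. This is the source of the $3/2$ exponent. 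For $i=0$ we have $\phi_0=O(\Delta t)$ and $\tilde\phi_0=0$, which gives the $O((\Delta t)^l)$ claim at $l=1$. For $i=0$ and $l\geq2$, the discrete term $(\tilde\phi^{(*l)})_0=\tilde\phi_0^l=0$, and the continuous term is $O((\Delta t)^l)$ by the bound $\phi^{(*l)}(t)\leq\|\phi\|_\infty^l t^{l-1}/(l-1)!$ already used in Lemma~\ref{lem.phi-convolution}.

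\textbf{Inductive step, $l\geq2$, $i\geq1$.} I would split
\[
\bar\phi^{(l)}_i-(\tilde\phi^{(*l)})_i=\bigl[\bar\phi^{(l)}_i-(\phi^{(*l)})_i\bigr]+\bigl[(\phi^{(*l)})_i-(\tilde\phi^{(*l)})_i\bigr].
\]
Lemma~\ref{lem.phi-convolution} bounds the first bracket by $Cl\alpha^{l-2}(\Delta t)^2$. For the second bracket I would telescope
\[
a^{*l}-b^{*l}=\sum_{r=0}^{l-1}a^{*r}*(a-b)*b^{*(l-1-r)},
\qquad a=(\phi_k),\ b=(\tilde\phi_k).
\]
The $\ell^1$ masses satisfy $\sum_k\phi_k=\alpha$ and $\sum_k\tilde\phi_k=\alpha+O(\Delta t)\leq\hat\alpha$ for small $\Delta t$, using the Riemann-sum assumption stated before Eq.~\eqref{eq.binnedpoisson-expectation-asymptotic}.

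The difficulty is that a naive sup-norm bound $\|a-b\|_\infty=O((\Delta t)^{3/2})$ would only give order $(\Delta t)^{3/2}$. To reach $(\Delta t)^2$, I would pair the sup-norm difference with a factor that is itself $O(\Delta t)$ pointwise. Since $l\geq2$, each term contains at least one other factor $a$ or $b$, and both satisfy $\sup_k a_k,\sup_k b_k\leq\|\phi\|_\infty\Delta t$. Put the sup-norm on that factor and the $\ell^1$ norm on $a-b$. This requires
\[
\|a-b\|_{\ell^1}=\sum_k|\phi_k-\tilde\phi_k|=O(\Delta t).
\]
That bound holds: the $k=0$ term is $\phi_0=O(\Delta t)$, and for $k\geq1$ the sum is controlled by $\Delta t\sum_k\int_{\text{bin}}|\phi'|$, using $\phi'\in L^1$ locally together with the Riemann assumption. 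If only $\phi'\in L^2$ is available, I would instead use Cauchy--Schwarz bin by bin. That yields $\sum_k(\Delta t)^{3/2}\bigl(\int_{I_k}|\phi'|^2\bigr)^{1/2}$, which is not obviously $O(\Delta t)$, so this step may require the additional assumption. I expect it to be the main obstacle.

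\textbf{Assembling the bound.} Given the $\ell^1$ estimate, each of the $l$ telescoped terms is at most $\|\phi\|_\infty\Delta t\cdot O(\Delta t)\cdot\hat\alpha^{l-2}$. Summing over $r$ gives $Cl\hat\alpha^{l-2}(\Delta t)^2$. Adding the first bracket, and using $\alpha\leq\hat\alpha$, gives the stated bound with $C$ enlarged as needed.
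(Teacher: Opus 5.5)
\textbf{The gap is the step you flagged.} Your claim $\|a-b\|_{\ell^1}=\sum_k|\phi_k-\tilde\phi_k|=O(\Delta t)$ needs $\phi'\in L^1(\mathbb{R}_{\geq0})$ globally, i.e.\ bounded variation. The paper assumes only that $\phi$ is bounded and lies in $L^1\cap W^{1,2}$. Local integrability of $\phi'$ does not bound $\sum_k\int_{(k-1)\Delta t}^{(k+1)\Delta t}|\phi'|$. The Riemann-sum condition does not help either, because it controls only the signed sum $\sum_k(\phi_k-\tilde\phi_k)=O(\Delta t)$, not the absolute sum. As written, your inductive step proves the lemma only under an extra hypothesis.

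\textbf{The repair is to change the norm pairing, not to add an assumption.}
\begin{enumerate}[(i)]
\item Split off the $k=0$ entry: write $a-b=\phi_0\delta_{\cdot,0}+c$ with $c_0=0$ and $c_k=\phi_k-\tilde\phi_k$ for $k\geq1$.
\item Let $R=a^{*r}*b^{*(l-1-r)}$. Since $l\geq2$, $R$ has at least one factor. Using the sup norm on that factor gives $\|R\|_\infty\leq\|\phi\|_\infty\hat\alpha^{l-2}\Delta t$. Hence the $\phi_0$ part contributes $\phi_0R_i\leq\|\phi\|_\infty^2\hat\alpha^{l-2}(\Delta t)^2$.
\item For $c$, use Cauchy--Schwarz in the convolution sum: $|(c*R)_i|\leq\|c\|_{\ell^2}\|R\|_{\ell^2}$.
\item Your own bin-by-bin bound $|c_k|\leq C(\Delta t)^{3/2}\bigl(\int_{(k-1)\Delta t}^{(k+1)\Delta t}|\phi'|^2\bigr)^{1/2}$ is exactly what is needed here. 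These intervals overlap at most twice, so summing the squares gives $\|c\|_{\ell^2}\leq C(\Delta t)^{3/2}\|\phi'\|_2$.
\item On the other side, Young's inequality gives $\|R\|_{\ell^2}\leq\|x\|_{\ell^2}\hat\alpha^{l-2}$ for a single factor $x\in\{a,b\}$. Moreover $\|x\|_{\ell^2}^2\leq\|x\|_\infty\|x\|_{\ell^1}\leq\|\phi\|_\infty\hat\alpha\Delta t$.
\item Each telescoped term is therefore $O(\hat\alpha^{l-2}(\Delta t)^2)$. Summing over $r$ gives $Cl\hat\alpha^{l-2}(\Delta t)^2$ using only $\phi'\in L^2$.
\end{enumerate}
This $\ell^2$--$\ell^2$ pairing is the same mechanism as the paper's Cauchy--Schwarz step. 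There it combines $\sum_j\phi(j\Delta t)\leq\hat\alpha/\Delta t$ with $\sum_j|\phi(u+(i-j)\Delta t)-\phi((i-j)\Delta t)|^2\leq|u|\|\phi'\|_2^2$.

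\textbf{How the repaired route compares with the paper's.} The paper redoes a direct cell-by-cell comparison of the continuous convolution with the pointwise Riemann convolution at $l=2$, then inducts. In that comparison the derivative falls on the product $\phi(\cdot)\phi'(\cdot)$, which is integrable because both factors are in $L^2$. You instead factor through the coarse-grained convolution $(\phi^{(*l)})_i$ and reuse Lemma~\ref{lem.phi-convolution} for the continuous-to-bin-averaged part. That leaves a single discrete telescoping estimate on $\phi_k-\tilde\phi_k$, which is somewhat more economical once the pairing is fixed.
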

\begin{proof}
    First, let $i\geq1$. For $l=0$, both terms in the left-hand side are zero, and hence the assertion follows immediately.

    For $l=1$, the integral and $\tilde{\phi}_i$ can be written as
    \begin{align*}
        &\dfrac{1}{\Delta t}
        \int_{(n-1)\Delta t}^{n\Delta t}
        \int_{(n-i-1)\Delta t}^{(n-i)\Delta t}
        \phi(t-s)dsdt
        =\dfrac{1}{\Delta t}
        \int_{-\Delta t}^{\Delta t}
        (\Delta t-|u|)\phi(i\Delta t+u)du,\\
        &\tilde{\phi}_i
        =\dfrac{1}{\Delta t}
        \int_{-\Delta t}^{\Delta t}
        (\Delta t-|u|)\phi(i\Delta t)du.
    \end{align*}
    Therefore, the Cauchy--Schwarz inequality gives
    \begin{align*}
        &\left|
        \dfrac{1}{\Delta t}
        \int_{(n-1)\Delta t}^{n\Delta t}
        \int_{(n-i-1)\Delta t}^{(n-i)\Delta t}
        \phi(t-s)dsdt-\tilde{\phi}_i
        \right|\\
        &\quad\leq
        \dfrac{1}{\Delta t}
        \int_{-\Delta t}^{\Delta t}
        (\Delta t-|u|)
        |\phi(i\Delta t+u)-\phi(i\Delta t)|du\\
        &\quad\leq
        \dfrac{\|\phi'\|_2}{\Delta t}
        \int_{-\Delta t}^{\Delta t}
        (\Delta t-|u|)|u|^{1/2}du
        =O((\Delta t)^{3/2}).
    \end{align*}

    For $l=2$, we have
    \[
        (\tilde{\phi}^{(*2)})_i
        =\sum_{j=1}^{i-1}(\Delta t)^2
        \phi((i-j)\Delta t)\phi(j\Delta t),
    \]
    whereas
\begin{align*}
    &\dfrac{1}{\Delta t}
    \int_{(n-1)\Delta t}^{n\Delta t}
    \int_{(n-i-1)\Delta t}^{(n-i)\Delta t}
    \phi^{(*2)}(t-s)dsdt\\
    &=\dfrac{1}{\Delta t}
    \int_{-\Delta t}^{\Delta t}
    (\Delta t-|u|)
    \int_0^{i\Delta t+u}
    \phi(i\Delta t+u-v)\phi(v)dvdu\\
    &=\dfrac{1}{\Delta t}
    \int_{-\Delta t}^{\Delta t}
    (\Delta t-|u|)
    \sum_{j=1}^{i-1}
    \int_{(j-\frac12)\Delta t}^{(j+\frac12)\Delta t}
    \phi(i\Delta t+u-v)\phi(v)dvdu\\
    &\quad+\dfrac{1}{\Delta t}
    \int_{-\Delta t}^{\Delta t}
    (\Delta t-|u|)
    \int_0^{\frac12\Delta t}
    \phi(i\Delta t+u-v)\phi(v)dvdu\\
    &\quad+\dfrac{1}{\Delta t}
    \int_{-\Delta t}^{\Delta t}
    (\Delta t-|u|)
    \int_{(i-\frac12)\Delta t}^{i\Delta t+u}
    \phi(i\Delta t+u-v)\phi(v)dvdu.
\end{align*}
Here, $\phi$ is extended by zero to negative arguments, and the sum is interpreted as zero when $i=1$.
For each $j=1,\ldots,i-1$, we decompose the difference over the corresponding cell as
    \begin{align*}
        &\int_{(j-\frac12)\Delta t}^{(j+\frac12)\Delta t}
        \phi(i\Delta t+u-v)\phi(v)dv
        -\Delta t\phi((i-j)\Delta t)\phi(j\Delta t)\\
        &=
        \int_{(j-\frac12)\Delta t}^{(j+\frac12)\Delta t}
        \left\{
        \phi(i\Delta t+u-v)\phi(v)
        -\phi(u+(i-j)\Delta t)\phi(j\Delta t)
        \right\}dv\\
        &\quad+\Delta t
        \left\{
        \phi(u+(i-j)\Delta t)-\phi((i-j)\Delta t)
        \right\}\phi(j\Delta t).
    \end{align*}
    For $i\geq2$, we split the sum of the first terms into $j=1,\ldots,i-2$ and $j=i-1$. For $j\leq i-2$, both arguments of $\phi$ remain nonnegative throughout the corresponding cell. Thus, the Poincaré and Hölder inequalities give
\begin{align*}
    &\sum_{j=1}^{i-2}\left|\int_{(j-\frac12)\Delta t}^{(j+\frac12)\Delta t}
    \{\phi(i\Delta t+u-v)\phi(v)-\phi(u+(i-j)\Delta t)\phi(j\Delta t)\}dv\right|\\
    &\quad\leq C\Delta t\left(\|\phi'(i\Delta t+u-\cdot)\phi\|_1
    +\|\phi(i\Delta t+u-\cdot)\phi'\|_1\right)
    \leq 2C\Delta t\|\phi\|_2\|\phi'\|_2,
\end{align*}
where the sum is interpreted as zero when $i=2$. For $j=i-1$, the boundedness of $\phi$ gives
\begin{align*}
    &\left|\int_{(i-\frac32)\Delta t}^{(i-\frac12)\Delta t}
    \{\phi(i\Delta t+u-v)\phi(v)-\phi(u+\Delta t)\phi((i-1)\Delta t)\}dv\right|\\
    &\quad\leq 2\Delta t\|\phi\|_\infty^2.
\end{align*}
Since
\[
    \frac{1}{\Delta t}\int_{-\Delta t}^{\Delta t}(\Delta t-|u|)du=\Delta t,
\]
the total contribution of these terms after integration against the outer triangular weight is $O((\Delta t)^2)$. When $i=1$, the original sum is empty.

    For the second terms, the Cauchy--Schwarz inequality gives
    \begin{align*}
        &\sum_{j=1}^{i-1}
        |\phi(u+(i-j)\Delta t)-\phi((i-j)\Delta t)|
        \phi(j\Delta t)\\
        &\quad\leq
        \left(
        \sum_{j=1}^{i-1}
        |\phi(u+(i-j)\Delta t)-\phi((i-j)\Delta t)|^2
        \phi(j\Delta t)
        \right)^{1/2}
        \left(
        \sum_{j=1}^{i-1}\phi(j\Delta t)
        \right)^{1/2}\\
        &\quad\leq
        \sqrt{\|\phi\|_{\infty}}\left(
        \sum_{j=1}^{i-1}
        |\phi(u+(i-j)\Delta t)-\phi((i-j)\Delta t)|^2
        \right)^{1/2}
        \left(
        \sum_{j=1}^{i-1}\phi(j\Delta t)
        \right)^{1/2}.
    \end{align*}
    Since
    \[
        |\phi(u+(i-j)\Delta t)-\phi((i-j)\Delta t)|^2
        \leq |u|
        \left|
        \int_{(i-j)\Delta t}^{u+(i-j)\Delta t}
        |\phi'(v)|^2dv
        \right|
    \]
    and the intervals in the last integrals have disjoint interiors for fixed $u$, we obtain
    \[
        \sum_{j=1}^{i-1}
        |\phi(u+(i-j)\Delta t)-\phi((i-j)\Delta t)|^2
        \leq |u|\|\phi'\|_2^2.
    \]
    Since $\alpha<\hat{\alpha}$, the assumed temporal approximation implies that, for sufficiently small $\Delta t$,
\[
    \sum_{j=1}^{\infty}\tilde{\phi}_j
    =\sum^{\infty}_{j=1}\phi(j\Delta t)\Delta t=\alpha+O(\Delta t)\leq\hat{\alpha}.
\]
    and hence
\[
    \sum_{j=1}^{i-1}\phi(j\Delta t)
    \leq\frac{\hat{\alpha}}{\Delta t}.
\]
Together with the preceding estimate, this yields
    \[
        \sum_{j=1}^{i-1}
        |\phi(u+(i-j)\Delta t)-\phi((i-j)\Delta t)|
        \phi(j\Delta t)
        \leq C\sqrt{\frac{|u|}{\Delta t}}.
    \]
    Consequently,
    \[
        \int_{-\Delta t}^{\Delta t}
        (\Delta t-|u|)
        \sum_{j=1}^{i-1}
        |\phi(u+(i-j)\Delta t)-\phi((i-j)\Delta t)|
        \phi(j\Delta t)du
        =O((\Delta t)^2).
    \]
    The two remaining boundary integrals have intervals of length at most $3\Delta t/2$ and are also $O((\Delta t)^2)$. Therefore,
    \[
        \left|
        \dfrac{1}{\Delta t}
        \int_{(n-1)\Delta t}^{n\Delta t}
        \int_{(n-i-1)\Delta t}^{(n-i)\Delta t}
        \phi^{(*2)}(t-s)dsdt
        -(\tilde{\phi}^{(*2)})_i
        \right|
        \leq C(\Delta t)^2.
    \]

    We next proceed by induction. Suppose that, for some $l\geq2$,
    \[
        \left|
        \dfrac{1}{\Delta t}
        \int_{(n-1)\Delta t}^{n\Delta t}
        \int_{(n-i-1)\Delta t}^{(n-i)\Delta t}
        \phi^{(*l)}(t-s)dsdt
        -(\tilde{\phi}^{(*l)})_i
        \right|
        \leq Cl\hat{\alpha}^{l-2}(\Delta t)^2.
    \]
    Using the same decomposition as in the case $l=2$, with $\phi$ replaced by $\phi^{(*l)}$, the additional approximation error is bounded by
    \[
        C(\Delta t)^2
        \left(
        \|(\phi^{(*l)})'\|_2\| \phi\|_2 + \|\phi^{(*l)}\|_2\|\phi'\|_2+\sqrt{\|\phi\|_{\infty}}\|(\phi^{(*l)})'\|_2
        +\|\phi\|_\infty\|\phi^{(*l)}\|_\infty
        \right).
    \]
    The error propagated from the induction hypothesis is bounded by
    \[
        Cl\hat{\alpha}^{l-2}(\Delta t)^2
        \sum_{j=1}^{i-1}\tilde{\phi}_{i-j}
        \leq Cl\hat{\alpha}^{l-1}(\Delta t)^2.
    \]
    Moreover,
\[
    \|(\phi^{(*l)})'\|_2
    \leq\alpha^{l-2}\left(\alpha\|\phi'\|_2+|\phi(0)|\|\phi\|_2\right),
    \qquad
    \|\phi^{(*l)}\|_2\leq\alpha^{l-1}\|\phi\|_2,
    \qquad
    \|\phi^{(*l)}\|_\infty\leq\alpha^{l-1}\|\phi\|_\infty.
\]
    Since $\alpha<\hat{\alpha}$, these estimates imply
    \[
        \left|
        \dfrac{1}{\Delta t}
        \int_{(n-1)\Delta t}^{n\Delta t}
        \int_{(n-i-1)\Delta t}^{(n-i)\Delta t}
        \phi^{*(l+1)}(t-s)dsdt
        -(\tilde{\phi}^{*(l+1)})_i
        \right|
        \leq C(l+1)\hat{\alpha}^{l-1}(\Delta t)^2.
    \]
    Thus, the assertion holds for every $l\geq2$.

    Finally, let $i=0$. For $l=0$, both terms are equal to one. For $l\geq1$, $(\tilde{\phi}^{(*l)})_0=0$, and
    \begin{align*}
        &\dfrac{1}{\Delta t}
        \int_{(n-1)\Delta t}^{n\Delta t}
        \int_{(n-1)\Delta t}^{n\Delta t}
        \phi^{(*l)}(t-s)dsdt\\
        &\quad=
        \dfrac{1}{\Delta t}
        \int_0^{\Delta t}
        (\Delta t-u)\phi^{(*l)}(u)du\\
        &\quad\leq
        \dfrac{\|\phi\|_\infty^l}{(l-1)!\Delta t}
        \int_0^{\Delta t}(\Delta t-u)u^{l-1}du
        =\dfrac{\|\phi\|_\infty^l}{(l+1)!}(\Delta t)^l.
    \end{align*}
    Hence, the left-hand side is $O((\Delta t)^l)$ for $l\geq1$, which completes the proof.
\end{proof}

We next establish the counterpart of Lemma~\ref{lem.decompose-time} for the binned Poisson approximation. Unlike Lemma~\ref{lem.decompose-time}, the pointwise temporal discretization produces a nonzero approximation error even when $l=1$ or $l'=1$.

\begin{lem}\label{lem.binnedpoisson-decompose-time}
    Let $m,n\in\mathbb{Z}$ with $m\leq n$, and let $p,l,l'=0,1,2,\ldots$. Fix any constant $\hat{\alpha}$ satisfying $\alpha<\hat{\alpha}<1$. Then, the following expansions hold as $\Delta t\to0$.

    \begin{enumerate}[(i)]
        \item For $m<n$,
        \begin{align}
            &\int_{(m-1)\Delta t}^{m\Delta t}
            \left(\int_{(n-1)\Delta t}^{n\Delta t}\phi^{(*l)}(t-s)dt\right)
            \left(\int_{(n+p-1)\Delta t}^{(n+p)\Delta t}\phi^{(*l')}(t'-s)dt'\right)ds\nonumber\\
            &=\Delta t(\tilde{\phi}^{(*l)})_{n-m}
            (\tilde{\phi}^{(*l')})_{n+p-m}\nonumber\\
            &\quad+
            \begin{cases}
                \begin{aligned}
                    &(\tilde{\phi}^{(*l)})_{n-m}
                    \left\{
                    \bm{1}_{l'=1}O((\Delta t)^{5/2})
                    +\bm{1}_{l'\geq2}l'\hat{\alpha}^{l'-2}O((\Delta t)^3)
                    \right\}\\
                    &\quad+
                    \left\{
                    \bm{1}_{l=1}O((\Delta t)^{5/2})
                    +\bm{1}_{l\geq2}l\hat{\alpha}^{l-2}O((\Delta t)^3)
                    \right\}\\
                    &\qquad\times
                    \dfrac{1}{\Delta t}
                    \int_{(m-1)\Delta t}^{m\Delta t}
                    \int_{(n+p-1)\Delta t}^{(n+p)\Delta t}
                    \phi^{(*l')}(t'-s')dt'ds'\\
                    &\quad+
                    O((\Delta t)^2)
                    \|\partial_sD_{l,m,n}\|_{L^2((m-1)\Delta t,m\Delta t]}
                    \|\partial_sD_{l',m,n+p}\|_{L^2((m-1)\Delta t,m\Delta t]},
                \end{aligned}
                & ll'>0,\\
                0, & ll'=0,
            \end{cases}
            \label{eq.binnedpoisson-decompose-time-error}
        \end{align}
        where
        \[
            D_{l,m,n}(s)
            =\int_{(n-1)\Delta t}^{n\Delta t}
            \left(
            \phi^{(*l)}(t-s)
            -\dfrac{1}{\Delta t}
            \int_{(m-1)\Delta t}^{m\Delta t}
            \phi^{(*l)}(t-s')ds'
            \right)dt.
        \]

        \item For $m=n$,
        \begin{align}
            &\int_{(n-1)\Delta t}^{n\Delta t}
            \left(\int_{(n-1)\Delta t}^{n\Delta t}\phi^{(*l)}(t-s)dt\right)
            \left(\int_{(n+p-1)\Delta t}^{(n+p)\Delta t}\phi^{(*l')}(t'-s)dt'\right)ds\nonumber\\
            &=\Delta t(\tilde{\phi}^{(*l)})_0
            (\tilde{\phi}^{(*l')})_p\nonumber\\
            &\quad+
            \begin{cases}
                \displaystyle
                \alpha^{l'-1}
                \frac{\|\phi\|_\infty^{l+1}}{(l+1)!}
                O((\Delta t)^{l+2}),
                & ll'>0,\\[2mm]
                0,
                & p>0\text{ and }l'=0,\\
                O((\Delta t)^{5/2}),
                & p>0,\ l=0,\text{ and }l'=1,\\
                l'\hat{\alpha}^{l'-2}O((\Delta t)^3),
                & p>0,\ l=0,\text{ and }l'\geq2,\\
                \bm{1}_{l>0}O((\Delta t)^{l+1})
                +\bm{1}_{l'>0}O((\Delta t)^{l'+1}),
                & p=0\text{ and }ll'=0.
            \end{cases}
            \label{eq.binnedpoisson-decompose-time-diagonal}
        \end{align}
    \end{enumerate}
    The constants implicit in the $O$ terms depend only on $\phi$ and the fixed constant $\hat{\alpha}$ and are independent of $m$, $n$, $p$, $l$, $l'$, and $\Delta t$.
\end{lem}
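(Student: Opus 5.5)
The plan is to repeat the proof of Lemma~\ref{lem.decompose-time} essentially verbatim, with Lemma~\ref{lem.binnedpoisson-phi-convolution} used wherever Lemma~\ref{lem.phi-convolution} was used before. The left-hand side is the same continuous-time quantity in both lemmas; only the discrete reference coefficients $(\tilde{\phi}^{(*l)})_i$ differ. I will use two bounds repeatedly:
\[
(\tilde{\phi}^{(*l)})_0=\delta_{l,0},
\qquad
\int_{(n+p-1)\Delta t}^{(n+p)\Delta t}\phi^{(*l')}(t'-s)\,dt'\le\alpha^{l'-1}\|\phi\|_\infty\Delta t\quad(l'\ge1).
\]

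\emph{Part (i), $m<n$.} If $l=0$, the inner integral equals $\delta_{n,m}=0$. If $l'=0$, it equals $\delta_{n+p,m}=0$, because $m<n\le n+p$. The corresponding discrete coefficients also vanish, so the expansion is exact when $ll'=0$.

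For $ll'>0$, I use the same splitting as in the proof of Lemma~\ref{lem.decompose-time}:
\[
\int_{(n-1)\Delta t}^{n\Delta t}\phi^{(*l)}(t-s)\,dt=B_{l,m,n}+D_{l,m,n}(s),
\]
where $D_{l,m,n}$ has zero mean on $((m-1)\Delta t,m\Delta t]$. The left-hand side is then exactly
\[
\Delta t\,B_{l,m,n}B_{l',m,n+p}+\int_{(m-1)\Delta t}^{m\Delta t}D_{l,m,n}\,D_{l',m,n+p}\,ds.
\]
Lemma~\ref{lem.binnedpoisson-phi-convolution} with $i=n-m\ge1$ gives
\[
B_{l,m,n}=(\tilde{\phi}^{(*l)})_{n-m}+\bm{1}_{l=1}O((\Delta t)^{3/2})+\bm{1}_{l\ge2}\,l\hat{\alpha}^{l-2}O((\Delta t)^2).
\]
I expand the product the same way as before: first replace $B_{l,m,n}$, keeping the exact $B_{l',m,n+p}$ as the factor, then replace $B_{l',m,n+p}$. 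Multiplying by $\Delta t$ produces exactly the two displayed cross terms, of orders $(\Delta t)^{5/2}$ and $(\Delta t)^{3}$.

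The $D$-product term does not involve the discretization at all. I bound it exactly as before, by Cauchy--Schwarz followed by the Poincaré inequality for zero-mean $W^{1,2}$ functions on an interval of length $\Delta t$.

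\emph{Part (ii), $m=n$.} Here I split into the cases of \eqref{eq.binnedpoisson-decompose-time-diagonal}.

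If $ll'>0$, the main term vanishes because $(\tilde{\phi}^{(*l)})_0=0$, so I bound the left-hand side directly. The $l'$ factor is at most $\alpha^{l'-1}\|\phi\|_\infty\Delta t$. The $l$ factor, integrated over $s$, is controlled by the pointwise bound
\[
\phi^{(*l)}(u)\le\frac{\|\phi\|_\infty^l\,u^{l-1}}{(l-1)!},
\]
already used in the proof of Lemma~\ref{lem.binnedpoisson-phi-convolution}. Integrating over the bin gives $\|\phi\|_\infty^l(\Delta t)^{l+1}/(l+1)!$. The product is $\alpha^{l'-1}\|\phi\|_\infty^{l+1}(\Delta t)^{l+2}/(l+1)!$, as claimed.

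If $p>0$ and $l'=0$, both sides vanish because $\delta_{n+p,n}=0$.

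If $p>0$ and $l=0$, the left-hand side equals $\Delta t$ times the bin-averaged coefficient at lag $i=p$. Lemma~\ref{lem.binnedpoisson-phi-convolution} then gives an error of $\Delta t\cdot O((\Delta t)^{3/2})$ when $l'=1$, and $\Delta t\cdot l'\hat{\alpha}^{l'-2}O((\Delta t)^2)$ when $l'\ge2$.

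If $p=0$ and $ll'=0$, the case $l=l'=0$ is exact, both sides being $\Delta t$. Otherwise the left-hand side is $\Delta t$ times the $i=0$ bin average of $\phi^{(*l)}$ or $\phi^{(*l')}$, whichever index is positive. The discrete main term is zero, and the $i=0$ statement of Lemma~\ref{lem.binnedpoisson-phi-convolution} gives $O((\Delta t)^{l+1})$ or $O((\Delta t)^{l'+1})$ respectively.

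\emph{Main obstacle.} The delicate point is keeping the constants uniform in $l$ and $l'$. This matters because the lemma will later be summed over all convolution orders. For the $(\Delta t)^{3/2}$ and $(\Delta t)^2$ errors, uniformity comes from Lemma~\ref{lem.binnedpoisson-phi-convolution} with the fixed $\hat{\alpha}$. For the $D$-term, it requires the $W^{1,2}$ regularity of $\phi^{(*l)}$ on $(m-1)\Delta t<s\le m\Delta t$ when $m<n$; I will take this directly from the Young-inequality bounds on $\|(\phi^{(*l)})'\|_2$ used earlier.
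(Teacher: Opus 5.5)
Your proposal is correct and follows the paper's proof essentially step for step. In part (i) you use the same $B_{l,m,n}+D_{l,m,n}$ splitting, with Lemma~\ref{lem.binnedpoisson-phi-convolution} in place of Lemma~\ref{lem.phi-convolution}, the same two-stage replacement of the $B$'s, and Cauchy--Schwarz plus Poincar\'e for the $D$-product. In part (ii) you bound the integral directly via $(\tilde{\phi}^{(*l)})_0=0$ and the factorial pointwise bound when $ll'>0$, and otherwise use the $i=p$ and $i=0$ cases of Lemma~\ref{lem.binnedpoisson-phi-convolution}.

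One small refinement: the Poincar\'e step only needs $D_{l,m,n}\in W^{1,2}$ on the bin. This already follows from boundedness of $\phi^{(*l)}$, since $\partial_s D_{l,m,n}(s)=\phi^{(*l)}((n-1)\Delta t-s)-\phi^{(*l)}(n\Delta t-s)$. The Young-type bounds on $\|(\phi^{(*l)})'\|_2$ are needed only later, when these errors are summed over $m$, $l$, and $l'$.
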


\begin{proof}
    We first consider the case $m<n$. If $ll'=0$, both the integral and the leading term in Eq.~\eqref{eq.binnedpoisson-decompose-time-error} are zero, so the assertion follows immediately.

    Let $ll'>0$. As in the proof of Lemma~\ref{lem.decompose-time}, define
    \[
        B_{l,m,n}
        :=\dfrac{1}{\Delta t}
        \int_{(m-1)\Delta t}^{m\Delta t}
        \int_{(n-1)\Delta t}^{n\Delta t}
        \phi^{(*l)}(t-s')dtds'.
    \]
    Then,
    \[
        \int_{(m-1)\Delta t}^{m\Delta t}
        D_{l,m,n}(s)ds=0,
    \]
    and the same decomposition as in Lemma~\ref{lem.decompose-time} gives
    \begin{align*}
        &\int_{(m-1)\Delta t}^{m\Delta t}
        \left(\int_{(n-1)\Delta t}^{n\Delta t}
        \phi^{(*l)}(t-s)dt\right)
        \left(\int_{(n+p-1)\Delta t}^{(n+p)\Delta t}
        \phi^{(*l')}(t'-s)dt'\right)ds\\
        &=\Delta t B_{l,m,n}B_{l',m,n+p}
        +\int_{(m-1)\Delta t}^{m\Delta t}
        D_{l,m,n}(s)D_{l',m,n+p}(s)ds.
    \end{align*}
    Lemma~\ref{lem.binnedpoisson-phi-convolution} gives
    \[
        B_{l,m,n}
        =(\tilde{\phi}^{(*l)})_{n-m}
        +\begin{cases}
            O((\Delta t)^{3/2}), & l=1,\\
            l\hat{\alpha}^{l-2}O((\Delta t)^2), & l\geq2.
        \end{cases}
    \]
    We decompose the first term as
    \begin{align*}
        \Delta t B_{l,m,n}B_{l',m,n+p}
        &=\Delta t(\tilde{\phi}^{(*l)})_{n-m}
        (\tilde{\phi}^{(*l')})_{n+p-m}\\
        &\quad+\Delta t(\tilde{\phi}^{(*l)})_{n-m}
        \left\{
        B_{l',m,n+p}
        -(\tilde{\phi}^{(*l')})_{n+p-m}
        \right\}\\
        &\quad+\Delta t
        \left\{
        B_{l,m,n}
        -(\tilde{\phi}^{(*l)})_{n-m}
        \right\}
        B_{l',m,n+p}.
    \end{align*}
    Therefore, the second term on the right-hand side is
    \[
        (\tilde{\phi}^{(*l)})_{n-m}
        \left\{
        \bm{1}_{l'=1}O((\Delta t)^{5/2})
        +\bm{1}_{l'\geq2}l'\hat{\alpha}^{l'-2}
        O((\Delta t)^3)
        \right\},
    \]
    while the third term is
    \[
        \left\{
        \bm{1}_{l=1}O((\Delta t)^{5/2})
        +\bm{1}_{l\geq2}l\hat{\alpha}^{l-2}
        O((\Delta t)^3)
        \right\}B_{l',m,n+p}.
    \]
    Moreover, by the same Poincaré and Cauchy--Schwarz estimates as in the proof of Lemma~\ref{lem.decompose-time},
    \begin{align*}
        &\left|
        \int_{(m-1)\Delta t}^{m\Delta t}
        D_{l,m,n}(s)D_{l',m,n+p}(s)ds
        \right|\\
        &\quad\leq
        O((\Delta t)^2)
        \|\partial_sD_{l,m,n}\|_{L^2((m-1)\Delta t,m\Delta t]}
        \|\partial_sD_{l',m,n+p}\|_{L^2((m-1)\Delta t,m\Delta t]}.
    \end{align*}
    Substituting the definition of $B_{l',m,n+p}$ proves (i).

    We next consider the case $m=n$. Suppose first that $ll'>0$. Since $(\tilde{\phi}^{(*l)})_0=0$, the leading term in Eq.~\eqref{eq.binnedpoisson-decompose-time-diagonal} is zero. The integral is bounded by
    \begin{align*}
        &\Delta t\|\phi^{(*l')}\|_\infty
        \int_{(n-1)\Delta t}^{n\Delta t}
        \int_{(n-1)\Delta t}^{n\Delta t}
        \phi^{(*l)}(t-s)dtds\\
        &\quad\leq
        \alpha^{l'-1}\|\phi\|_\infty\Delta t
        \dfrac{\|\phi\|_\infty^l}{(l+1)!}
        (\Delta t)^{l+1}\\
        &\quad=
        \alpha^{l'-1}
        \dfrac{\|\phi\|_\infty^{l+1}}{(l+1)!}
        (\Delta t)^{l+2}.
    \end{align*}

    If $p>0$ and $l'=0$, both the integral and the leading term are zero. If $p>0$, $l=0$, and $l'>0$, the difference from the leading term is
    \begin{align*}
        &\left|
        \int_{(n-1)\Delta t}^{n\Delta t}
        \int_{(n+p-1)\Delta t}^{(n+p)\Delta t}
        \phi^{(*l')}(t'-s)dt'ds
        -\Delta t(\tilde{\phi}^{(*l')})_p
        \right|.
    \end{align*}
    Applying Lemma~\ref{lem.binnedpoisson-phi-convolution} with $i=p$ gives
    \[
        \begin{cases}
            O((\Delta t)^{5/2}), & l'=1,\\
            l'\hat{\alpha}^{l'-2}O((\Delta t)^3), & l'\geq2.
        \end{cases}
    \]

    Finally, let $p=0$ and $ll'=0$. If $l>0$ and $l'=0$, Lemma~\ref{lem.binnedpoisson-phi-convolution} with $i=0$ gives an error of $O((\Delta t)^{l+1})$. Similarly, if $l=0$ and $l'>0$, the error is $O((\Delta t)^{l'+1})$. When $l=l'=0$, the equality is exact. Thus, the error is
    \[
        \bm{1}_{l>0}O((\Delta t)^{l+1})
        +\bm{1}_{l'>0}O((\Delta t)^{l'+1}),
    \]
    which proves (ii) and completes the proof.
\end{proof}

We next sum the preceding expansion over all past temporal bins. The following result is the counterpart of Lemma~\ref{lem.phi-sum-evaluation} for the binned Poisson approximation and retains the dependence on $l$ and $l'$ required for summability.

\begin{lem}\label{lem.binnedpoisson-phi-sum-evaluation}
    Let $p,l,l'=0,1,2,\ldots$, and fix any constant $\hat{\alpha}$ satisfying $\alpha<\hat{\alpha}<1$. Then, the following expansion holds as $\Delta t\to0$:
    \begin{align}
        &\sum_{m=-\infty}^n
        \int_{(m-1)\Delta t}^{m\Delta t}
        \left(\int_{(n-1)\Delta t}^{n\Delta t}
        \phi^{(*l)}(t-s)dt\right)
        \left(\int_{(n+p-1)\Delta t}^{(n+p)\Delta t}
        \phi^{(*l')}(t'-s)dt'\right)ds\nonumber\\
        &=\Delta t\sum_{m=-\infty}^n
        (\tilde{\phi}^{(*l)})_{n-m}
        (\tilde{\phi}^{(*l')})_{n+p-m}\nonumber\\
        &\quad+
        \begin{cases}
            \begin{aligned}
                \bm{1}_{l=0}\bigl\{
                &\bm{1}_{l'=1}O((\Delta t)^{5/2})\\
                &+\bm{1}_{l'\geq2}l'\hat{\alpha}^{l'-2}
                O((\Delta t)^3)
                \bigr\},
            \end{aligned}
            & p>0\text{ and }ll'=0,\\[2mm]
            \begin{aligned}
                &\bm{1}_{l>0}
                O\left(
                \dfrac{\|\phi\|_\infty^l}{(l+1)!}
                (\Delta t)^{l+1}
                \right)\\
                &\quad+\bm{1}_{l'>0}
                O\left(
                \dfrac{\|\phi\|_\infty^{l'}}{(l'+1)!}
                (\Delta t)^{l'+1}
                \right),
            \end{aligned}
            & p=0\text{ and }ll'=0,\\[2mm]
            (l+l')\hat{\alpha}^{l+l'-2}
            O((\Delta t)^{5/2}),
            & ll'>0\text{ and }\min\{l,l'\}=1,\\
            (l+l')\hat{\alpha}^{l+l'-2}
            O((\Delta t)^3),
            & l,l'\geq2.
        \end{cases}
        \label{eq.binnedpoisson-phi-sum-evaluation}
    \end{align}
    The constants implicit in the $O$ terms depend only on $\phi$ and the fixed constant $\hat{\alpha}$ and are independent of $n$, $p$, $l$, $l'$, and $\Delta t$.
\end{lem}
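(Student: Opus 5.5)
The lemma has the same form as Lemma~\ref{lem.phi-sum-evaluation}, so I would sum the bin-wise expansions of Lemma~\ref{lem.binnedpoisson-decompose-time} over $m\le n$. The term $m=n$ is handled by part~(ii) and the terms $m<n$ by part~(i). Throughout I will use the elementary bounds
\begin{equation*}
\sum_{m}(\tilde{\phi}^{(*l)})_{m}=\Bigl(\sum_{j\ge1}\tilde\phi_j\Bigr)^l\le\hat{\alpha}^l,
\qquad
(\tilde{\phi}^{(*l)})_m\le\|\phi\|_\infty\Delta t\,\hat{\alpha}^{l-1}\quad(l\ge1),
\end{equation*}
which follow from the assumed Riemann-sum approximation $\sum_{j\ge1}\tilde\phi_j=\alpha+O(\Delta t)\le\hat\alpha$ for small $\Delta t$. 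I will also use $\sum_{m<n}B_{l',m,n+p}\le\|\phi^{(*l')}\|_1=\alpha^{l'}$.

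\textbf{Case $ll'=0$.} For $m<n$, part~(i) is exact, so only $m=n$ contributes and part~(ii) gives the error directly. For $p>0$ this is zero when $l'=0$. When $l=0$ and $l'\ge1$ it is $O((\Delta t)^{5/2})$ or $l'\hat\alpha^{l'-2}O((\Delta t)^3)$. For $p=0$ it is the $O((\Delta t)^{l+1})$ or $O((\Delta t)^{l'+1})$ terms. To obtain the explicit factor $\|\phi\|_\infty^l/(l+1)!$, I would go back to the $i=0$ estimate of Lemma~\ref{lem.binnedpoisson-phi-convolution} rather than use the bare $O$.

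\textbf{Case $ll'>0$.} First I sum the two "$B$-error" terms of part~(i) over $m<n$. The first carries $\sum_m(\tilde\phi^{(*l)})_{n-m}\le\hat\alpha^l$ and the second carries $\sum_m B_{l',m,n+p}\le\alpha^{l'}$. This gives an error $\{\bm 1_{\min=1}O((\Delta t)^{5/2})+(\bm 1_{l\ge2}l+\bm 1_{l'\ge2}l')O((\Delta t)^3)\}\hat\alpha^{l+l'-2}$. Next I treat the $D$-product terms as in the proof of Lemma~\ref{lem.phi-sum-evaluation}. Cauchy--Schwarz in $m$ bounds their sum by $C(\Delta t)^4\|(\phi^{(*l)})'\|_2\|(\phi^{(*l')})'\|_2$. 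The Young-inequality bounds $\|(\phi^{(*l)})'\|_2\le\alpha^{l-2}(\alpha\|\phi'\|_2+|\phi(0)|\|\phi\|_2)$ for $l\ge2$, together with $\|\phi'\|_2$ for $l=1$, then make this $\hat\alpha^{l+l'-2}O((\Delta t)^4)$ after adjusting constants. Finally, the $m=n$ term from part~(ii) is $\alpha^{l'-1}\|\phi\|_\infty^{l+1}O((\Delta t)^{l+2})/(l+1)!=O((\Delta t)^3)\hat\alpha^{l+l'-2}$, uniformly in $l$ since $\|\phi\|_\infty^{l+1}\Delta t^{l-1}/(l+1)!$ is bounded for $\Delta t\le1$. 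Collecting terms gives $(l+l')\hat\alpha^{l+l'-2}O((\Delta t)^{5/2})$ when $\min\{l,l'\}=1$. When $l,l'\ge2$ the $(\Delta t)^{5/2}$ contributions vanish and the bound is $(l+l')\hat\alpha^{l+l'-2}O((\Delta t)^3)$.

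\textbf{Main obstacle.} The difficulty is keeping every constant uniform in $l,l'$ with the geometric factor $\hat\alpha^{l+l'-2}$, so that the later double sum over $(l,l')$ converges. This matters most for the $m=n$ diagonal term, where the exponent of $\Delta t$ depends on $l$. There I would absorb the factorial-decaying factor into a constant using $\|\phi\|_\infty^{k}(\Delta t)^{k}/(k+1)!\le e^{\|\phi\|_\infty}$. The second delicate point is the $\alpha^{l-2}$ factor from Young's inequality when $l=1$ is paired with large $l'$. I would handle it by treating $l=1$ separately, exactly as in Lemma~\ref{lem.phi-sum-evaluation}.
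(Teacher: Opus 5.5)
Your proposal is correct and follows the paper's proof essentially step for step. The shared ingredients are exactness of part~(i) of Lemma~\ref{lem.binnedpoisson-decompose-time} when $ll'=0$ (so only $m=n$ contributes), the bounds $\sum_r(\tilde\phi^{(*l)})_r\le\hat\alpha^l$ and $\sum_{m<n}B_{l',m,n+p}\le\alpha^{l'}$ for the $B$-error terms, the Cauchy--Schwarz/Young estimate $O((\Delta t)^4)\|(\phi^{(*l)})'\|_2\|(\phi^{(*l')})'\|_2$ for the $D$-products, and absorption of the diagonal $m=n$ term.

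One small correction concerns that diagonal term. Mere boundedness of $\|\phi\|_\infty^{l+1}(\Delta t)^{l-1}/(l+1)!$ only yields the factor $\alpha^{l'-1}$, not $\hat\alpha^{\,l+l'-2}$. You need the bound $\|\phi\|_\infty^{l+1}(\Delta t)^{l-1}/(l+1)!\le C\hat\alpha^{\,l-1}$, which the factorial gives immediately, e.g.\ with $C=\|\phi\|_\infty^2e^{\|\phi\|_\infty/\hat\alpha}$ for $\Delta t\le1$.
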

\begin{proof}
    We first consider the case $ll'=0$. By Lemma~\ref{lem.binnedpoisson-decompose-time}, the expansion is exact for every $m<n$. Therefore, only the term $m=n$ contributes to the error. Applying the second part of Lemma~\ref{lem.binnedpoisson-decompose-time}, we obtain
    \[
        \bm{1}_{l=0}\left\{
        \bm{1}_{l'=1}O((\Delta t)^{5/2})
        +\bm{1}_{l'\geq2}l'\hat{\alpha}^{l'-2}
        O((\Delta t)^3)
        \right\}
    \]
    when $p>0$, and
    \[
        \bm{1}_{l>0}
        O\left(
        \dfrac{\|\phi\|_\infty^l}{(l+1)!}
        (\Delta t)^{l+1}
        \right)
        +
        \bm{1}_{l'>0}
        O\left(
        \dfrac{\|\phi\|_\infty^{l'}}{(l'+1)!}
        (\Delta t)^{l'+1}
        \right)
    \]
    when $p=0$. This proves the first two cases.

    We next suppose that $ll'>0$. Since
    \[
        \sum_{r=0}^{\infty}
        (\tilde{\phi}^{(*l)})_r
        =
        \left(
        \sum_{r=0}^{\infty}\tilde{\phi}_r
        \right)^l
        \leq\hat{\alpha}^l
    \]
    for sufficiently small $\Delta t$, we have
    \[
        \sum_{m=-\infty}^{n-1}
        (\tilde{\phi}^{(*l)})_{n-m}
        \leq\hat{\alpha}^l.
    \]
    Moreover, using the definition of $B_{l',m,n+p}$ in the proof of Lemma~\ref{lem.binnedpoisson-decompose-time},
    \begin{align*}
        \sum_{m=-\infty}^{n-1}B_{l',m,n+p}
        &\leq
        \dfrac{1}{\Delta t}
        \int_{(n+p-1)\Delta t}^{(n+p)\Delta t}
        \int_{-\infty}^{t'}
        \phi^{(*l')}(t'-s')ds'dt'\\
        &=\|\phi^{(*l')}\|_1
        =\alpha^{l'}.
    \end{align*}

    Suppose first that $\min\{l,l'\}=1$. Summing the first two error terms in Eq.~\eqref{eq.binnedpoisson-decompose-time-error} over $m<n$ and using the preceding bounds gives
    \begin{align*}
        &\left(
        \bm{1}_{l'=1}\hat{\alpha}^l
        +\bm{1}_{l=1}\alpha^{l'}
        \right)O((\Delta t)^{5/2})\\
        &\quad+
        \left(
        \bm{1}_{l'\geq2}l'\hat{\alpha}^{l+l'-2}
        +\bm{1}_{l\geq2}l\hat{\alpha}^{l-2}\alpha^{l'}
        \right)O((\Delta t)^3)\\
        &\quad\leq
        (l+l')\hat{\alpha}^{l+l'-2}
        O((\Delta t)^{5/2}).
    \end{align*}

    By the same Cauchy--Schwarz argument as in the proof of Lemma~\ref{lem.phi-sum-evaluation}, the sum of the error terms containing $D_{l,m,n}$ is bounded by
    \[
        O((\Delta t)^4)
        \|(\phi^{(*l)})'\|_2
        \|(\phi^{(*l')})'\|_2.
    \]
    If $l=1$ and $l'\geq2$, this is
    \[
        \alpha^{l'-2}O((\Delta t)^4),
    \]
    and the same evaluation holds with $l$ and $l'$ interchanged. For $(l,l')=(1,1)$, it is $O((\Delta t)^4)$. These terms, together with the error corresponding to $m=n$, are absorbed into
    \[
        (l+l')\hat{\alpha}^{l+l'-2}
        O((\Delta t)^{5/2}).
    \]
    This proves the third case.

    Finally, suppose that $l,l'\geq2$. The first two error terms summed over $m<n$ are bounded by
    \begin{align*}
        &\left(
        l'\hat{\alpha}^{l+l'-2}
        +l\hat{\alpha}^{l-2}\alpha^{l'}
        \right)O((\Delta t)^3)\\
        &\quad\leq
        (l+l')\hat{\alpha}^{l+l'-2}
        O((\Delta t)^3).
    \end{align*}
    The terms containing $D_{l,m,n}$ are bounded by
\[
    \alpha^{l+l'-4}O((\Delta t)^4),
\]
where we used the bounds on $\|(\phi^{(*l)})'\|_2$ and $\|(\phi^{(*l')})'\|_2$ from the proof of Lemma~\ref{lem.phi-sum-evaluation}. When combined with the preceding error terms, both this contribution and the error corresponding to $m=n$ are absorbed into
\[
    (l+l')\hat{\alpha}^{l+l'-2}O((\Delta t)^3),
\]
    because
    \[
        \alpha^{l'-1}
        \dfrac{\|\phi\|_\infty^{l+1}}{(l+1)!}
        (\Delta t)^{l+2}
        \leq
        C\hat{\alpha}^{l+l'-2}(\Delta t)^3
    \]
    for sufficiently small $\Delta t$ and some constant $C>0$ independent of $l$ and $l'$. This proves the fourth case and completes the proof.
\end{proof}

We now combine the temporal and spatial estimates to prove Eq.~\eqref{eq.binnedpoisson-autocovariance-asymptotic}.

\begin{proof}
Choose $|S|_0>0$ sufficiently small that
\[
    \hat{\alpha}\left(1+C|S|_0^{1/q}\right)<1,
\]
where $C$ is sufficiently large for the spatial estimates above. Then
\[
    \sum_{l=1}^{\infty}l\hat{\alpha}^{l-1}
    \left(1+C|S|^{1/q}\right)^{l-1}<\infty
\]
uniformly for $|S|\leq|S|_0$. The same conclusion holds with any fixed polynomial in $l$ in place of $l$. Thus, the temporal and spatial error estimates can be summed over $l$ and $l'$, with constants independent of sufficiently small $\Delta t$ and $|S|$.

We first record the counterparts of the estimates in Lemmas~\ref{lem.Psi-evaluation} and~\ref{lem.Psi-sum-evaluation}. Since $\tilde{\phi}_0=0$, we have $\tilde{\Psi}_0=I$. Moreover,
\[
    (\tilde{\phi}^{(*l)})_r
    \leq\Delta t\|\phi\|_\infty\hat{\alpha}^{l-1},
    \qquad l\geq1,\quad r\geq1.
\]
Using the column-sum bound for $\tilde{F}$, we also obtain
\[
    [\tilde{F}^l]_{ij}
    \leq\|f\|_\infty|S_i|\left(1+C|S|^{1/q}\right)^{l-1},
    \qquad l\geq1.
\]
Consequently, the same arguments as in Lemmas~\ref{lem.Psi-evaluation} and~\ref{lem.Psi-sum-evaluation} give
\[
    [\tilde{\Psi}_r]_{ij}=O(\Delta t|S_i|),\qquad r\geq1,
\]
and
\[
    \sum_{r=0}^{\infty}\sum_{k=1}^d[\tilde{\Psi}_r]_{ik}|S_k|=O(|S_i|).
\]
Separating the term $r=0$ when $p=0$, it follows that
\[
    \sum_{r=0}^{\infty}\sum_{k=1}^d
    [\tilde{\Psi}_r]_{ik}[\tilde{\Psi}_{r+p}]_{jk}|S_k|
    =\delta_{p,0}\delta_{i,j}|S_i|+O(\Delta t|S_i||S_j|).
\]

We next compare the covariance representations of the binned Hawkes process and the binned Poisson approximation. Applying the spatial decomposition in Lemma~\ref{lem.decompose-space} with the cellwise distribution concentrated at $\bm{r}_k$, and then using Lemmas~\ref{lem.binnedpoisson-phi-sum-evaluation} and~\ref{lem.binnedpoisson-Fnij}, we obtain the same decomposition as in the proof of Theorem~\ref{thm.cov} in Appendix~\ref{appendix.2nd-order}.

The spatial approximation errors multiplied by the leading temporal terms contribute
\[
    O((\Delta t)^2|S_i||S_j||S|^{1/q}).
\]
Here, the factors depending on $l$ and $l'$ are summable by the uniform bounds established above. The temporal approximation errors multiplied by the spatial factors contribute
\[
    \begin{cases}
        O((\Delta t)^2|S_i||S_j|),&p=0,\\
        O((\Delta t)^{5/2}|S_i||S_j|),&p>0.
    \end{cases}
\]
For $p=0$, the terms with exactly one of $l$ and $l'$ equal to zero give the order $(\Delta t)^2$ contribution. The remaining temporal errors are absorbed into this order. For $p>0$, the terms involving a first-order temporal convolution give the order $(\Delta t)^{5/2}$ contribution, and the higher-order terms are absorbed into it.

Therefore,
\begin{align*}
    &\operatorname{Cov}(X^{(\mathrm{hw})}_{i,n},X^{(\mathrm{hw})}_{j,n+p})\\
    &=\sum_{r=0}^{\infty}\sum_{k=1}^d
    [\tilde{\Psi}_r]_{ik}[\tilde{\Psi}_{r+p}]_{jk}
    \mathbb{E}[X^{(\mathrm{hw})}_{k,n}]\\
    &\quad+O((\Delta t)^2|S_i||S_j||S|^{1/q})
    +\begin{cases}
        O((\Delta t)^2|S_i||S_j|),&p=0,\\
        O((\Delta t)^{5/2}|S_i||S_j|),&p>0.
    \end{cases}
\end{align*}

Finally, the stationary mean estimate gives
\[
    \mathbb{E}[X^{(\mathrm{po})}_{k,n}]
    -\mathbb{E}[X^{(\mathrm{hw})}_{k,n}]
    =O(\Delta t|S_k||S|^{1/q})+O((\Delta t)^2|S_k|).
\]
Hence, replacing the stationary mean in the preceding covariance expression contributes
\begin{align*}
    &\delta_{p,0}\delta_{i,j}
    \left\{O(\Delta t|S_i||S|^{1/q})+O((\Delta t)^2|S_i|)\right\}\\
    &\quad+O((\Delta t)^2|S_i||S_j||S|^{1/q})
    +O((\Delta t)^3|S_i||S_j|).
\end{align*}
After this replacement, the leading term is precisely the covariance of the binned Poisson approximation. Combining the error terms, we conclude that
\begin{align*}
    &\operatorname{Cov}(X^{(\mathrm{po})}_{i,n},X^{(\mathrm{po})}_{j,n+p})
    -\operatorname{Cov}(X^{(\mathrm{hw})}_{i,n},X^{(\mathrm{hw})}_{j,n+p})\\
    &=
    \begin{cases}
        O((\Delta t)^2|S_i|^2)+O(\Delta t|S_i||S|^{1/q})
        +O((\Delta t)^2|S_i|),&p=0,\ i=j,\\
        O((\Delta t)^2|S_i||S_j|),&p=0,\ i\neq j,\\
        O((\Delta t)^{5/2}|S_i||S_j|)
        +O((\Delta t)^2|S_i||S_j||S|^{1/q}),&p>0.
    \end{cases}
\end{align*}
This proves Eq.~\eqref{eq.binnedpoisson-autocovariance-asymptotic}.
\end{proof}
\end{appendices}

\bibliographystyle{unsrt}
\bibliography{ref}

\end{document}